\newtheorem{theorem}{Theorem}[section]
\newtheorem{lemma}[theorem]{Lemma}
\newtheorem{remark}[theorem]{Remark}
\newtheorem{definition}[theorem]{Definition}
\newtheorem{example}[theorem]{Example}
\newcommand{\subst}[2]{[{#1}/{#2}]}
\newcommand{\lvec}{\ensuremath{\lambda^{\!\!\textrm{vec}}}}
\newcommand{\canon}[1]{[#1]}
\newcommand{\cocanon}[1]{\{#1\}}
\newcommand{\Bool}{\mathcal{B}}
\newcommand{\True}{\mathcal{T}}
\newcommand{\False}{\mathcal{F}}
\newcommand{\true}{{\bf true}}
\newcommand{\false}{{\bf false}}
\newcommand{\eg}{{e.g.}~}
\newcommand{\ie}{{i.e.}~}
\newcommand{\FV}[1]{\ensuremath{FV}(#1)}
\newcommand\ve[1]{\ensuremath{\mathbf{#1}}}
\newcommand{\Sc}{\ensuremath{\mathsf{S}}}
\newcommand{\sui}[1]{\sum_{i=1}^{#1}}
\newcommand{\suj}[1]{\sum_{j=1}^{#1}}
\newcommand{\suk}[1]{\sum_{k=1}^{#1}}
\newcommand{\recap}[3]{\noindent\textbf{#1 \ref{#3}} (#2)\textbf{.}}
\newcommand{\xrecap}[2]{\noindent\textbf{#1 \ref{#2}}}
\newcommand{\ket}[1]{{|{#1}\rangle}}
\DeclareMathAlphabet{\mathpzc}{OT1}{pzc}{m}{n}\newcommand{\varu}[1]{\ensuremath{\mathpzc{#1}}}
\newcommand{\vara}[1]{\ensuremath{\mathbb{#1}}}
\newcommand{\V}{\mathcal{V}}
\newcommand{\Neutral}{\mathcal{N}}
\newcommand{\CT}{\Lambda_0}
\newcommand{\SN}{{\it SN}_0}
\newcommand{\Red}{{\rm Red}}
\newcommand{\RC}{\mathsf{RC}}
\newcommand{\RCn}{{\bf RC}}
\newcommand{\CC}{{\bf CC}}
\newcommand{\bcal}[1]{\mathsf{#1}}
\newcommand{\denot}[1]{\ensuremath{\llbracket{#1}\rrbracket}}
\newcommand{\lmat}{{\it Mat}}
\newcommand{\termdenot}[1]{\denot{#1}^{{\rm term}}}
\newcommand{\typedenot}[1]{\denot{#1}^{{\rm type}}}
\def\vddots{\raisebox{.7ex}{.}\raisebox{0ex}{.}\raisebox{-.7ex}{.}}
\journal{Information and Computation}
\begin{document}
\begin{frontmatter}

  \title{The Vectorial $\lambda$-Calculus\tnoteref{thanks}}
  \tnotetext[thanks]{Partially supported by the STIC-AmSud project FoQCoSS and PICT-PRH 2015-1208.}

  \author[Marseille]{Pablo Arrighi}
  \ead{pablo.arrighi@univ-amu.fr}
  \author[UNQ]{Alejandro D\'iaz-Caro}
  \ead{alejandro.diaz-caro@unq.edu.ar}
  \author[CSup]{Beno\^it Valiron}
  \ead{benoit.valiron@monoidal.net}

  \address[Marseille]{Aix-Marseille Universit\'e, CNRS, LIF, Marseille and
  IXXI, Lyon, France}
  \address[UNQ]{Universidad Nacional de Quilmes \& CONICET, B1876BXD Bernal,
  Buenos Aires, Argentina}
  \address[CSup]{LRI, CentraleSupélec, Univ. Paris Sud, CNRS, Universit\'e
  Paris-Saclay, 91405 Orsay Cedex, France}

  \begin{abstract}
    We describe a type system for the linear-algebraic  $\lambda$-calculus. The
    type system accounts for the linear-algebraic aspects of this extension of
    $\lambda$-calculus: it is able to statically describe the linear
    combinations of terms that will be obtained when reducing the programs.
    This gives rise to an original type theory where types, in the same way as
    terms, can be superposed into linear combinations.  We prove that the
    resulting typed $\lambda$-calculus is strongly normalising and features
    weak subject reduction. Finally, we show how to naturally encode matrices
    and vectors in this typed calculus.  
  \end{abstract}
\end{frontmatter}

\section{Introduction}
\subsection{(Linear-)algebraic $\lambda$-calculi}
A number of recent works seek to endow  the $\lambda$-calculus with a vector space structure. This agenda has emerged simultaneously in two different contexts.
\begin{itemize} 
  \item The field of \emph{Linear Logic} considers a logic of resources where the propositions themselves stand for those resources -- and hence cannot be discarded nor copied. When seeking to find models of this logic, one obtains a particular family of vector spaces and differentiable functions over these. It is by trying to capture these mathematical structures back into a programming language that Ehrhard and Regnier have defined the {\em differential $\lambda$-calculus} \cite{EhrhardRegnierTCS03}, which has an intriguing differential operator as a built-in primitive and an algebraic module of the $\lambda$-calculus terms over natural numbers. Vaux \cite{VauxMSCS09} has focused his attention on a `differential $\lambda$-calculus without differential operator', extending the algebraic module to positive real numbers. He obtained a confluence result in this case, which stands even in the untyped setting. More recent works on this {\em algebraic $\lambda$-calculus} tend to consider arbitrary scalars \cite{TassonTLCA09,EhrhardLICS10,AlbertiJFLA13}.
  \item The field of \emph{Quantum Computation} postulates that, as computers are physical systems, they may behave according to quantum theory. It proves that, if this is the case, novel, more efficient algorithms are possible \cite{ShorSIAM97,GroverSTOC96} -- which have no classical counterpart. Whilst partly unexplained, it is nevertheless clear that the algorithmic speed-up arises by tapping into the parallelism granted to us `for free' by the {\em superposition principle}, which states that if $\ve{t}$ and $\ve{u}$ are possible states of a system, then so is the formal linear combination of them $\alpha\cdot\ve{t}+\beta\cdot\ve{u}$ (with $\alpha$ and $\beta$ some arbitrary complex numbers, up to a normalizing factor). The idea of a module of $\lambda$-terms over an arbitrary scalar field arises quite naturally in this context. This was the motivation behind the {\em linear-algebraic $\lambda$-calculus}, or {\em Lineal} for short, by Dowek and one of the authors~\cite{ArrighiDowekLMCS17}, who obtained a confluence result which holds for arbitrary scalars and again covers the untyped setting. 
\end{itemize}
These two languages are rather similar: they both merge higher-order computation, be they terminating or not, in its simplest and most general form (namely the untyped $\lambda$-calculus) together with linear algebra also in its simplest and most general form (the axioms of vector spaces). In fact they can simulate each other \cite{AssafDiazcaroPerdrixTassonValironLMCS14}. Our starting point is the second one, {\em Lineal}, because its confluence proof allows arbitrary scalars and because one has to make a choice. Whether the models developed for the first language, and the type systems developed for the second language, carry through to one another via their reciprocal simulations, is a topic of future investigation.

\subsection{Other motivations to study (linear-)algebraic $\lambda$-calculi}  The two languages are also reminiscent of other works in the literature:
\begin{itemize}
  \item \emph{Algebraic and symbolic computation.} The functional style of programming is based on the $\lambda$-calculus together with a number of extensions, so as to make everyday programming more accessible. Hence since the birth of functional programming there have been several theoretical studies on extensions of the $\lambda$-calculus in order to account for basic algebra (see for instance Dougherty's algebraic extension \cite{DoughertyIC92} for normalising terms of the $\lambda$-calculus) and other basic programming constructs such as pattern-matching \cite{CirsteaKirchnerLiquoriFOSSACS01,ArbiserMiquelRiosJFP09}, together with sometimes non-trivial associated type theories \cite{PetitTLCA09}. Whilst this was not the original motivation behind (linear-)algebraic $\lambda$-calculi, they could still be viewed as an extension of the $\lambda$-calculus in order to handle operations over vector spaces  and make programming more accessible with them.
    The main difference in approach is that the $\lambda$-calculus is not seen here as a control structure which sits on top of the vector space data structure, controlling which operations to apply and when. Rather, the $\lambda$-calculus terms themselves can be summed and weighted, hence they actually are vectors, upon which they can also act.
  \item \emph{Parallel and probabilistic computation.} The above intertwinings of concepts are essential if seeking to represent parallel or probabilistic computation as it is the computation itself which must be endowed with a vector space structure. The ability to superpose $\lambda$-calculus terms in that sense takes us back to Boudol's parallel $\lambda$-calculus \cite{BoudolIC94} or de Liguoro and Piperno's work on non-deterministic extensions of $\lambda$-calculus \cite{deLiguoroPipernoIC95}, as well as more recent works such as \cite{PaganiRonchidellaroccaFI10,BucciarelliEhrhardManzonettoAPAL12,DiazcaroManzonettoPaganiLFCS13}. It may also be viewed as being part of a series of works on probabilistic extensions of calculi, \eg~\cite{BournezHoyrupRTA03,HerescuPalamidessiFOSSACS00} and \cite{DipierroHankinWiklickyJLC05,DalLagoZorziRAIRO12,DiazcaroDowekDCM13} for $\lambda$-calculus more specifically.
\end{itemize}
Hence (linear-)algebraic $\lambda$-calculi can be seen as a platform for various applications, ranging from algebraic computation, probabilistic computation, quantum computation and resource-aware computation.

\subsection{The language}
The language we consider in this paper will be called the {\em
vectorial $\lambda$-calculus}, denoted by $\lvec$. It is
derived from {\it Lineal}~\cite{ArrighiDowekLMCS17}. This language
admits the regular constructs of $\lambda$-calculus: variables
$x,y,\ldots$, $\lambda$-abstractions $\lambda x.{\ve s}$ and application
$(\ve{s})\,\ve{t}$. But it also admits linear combinations of terms:
$\ve{0}$, ${\ve s}+{\ve t}$ and $\alpha\cdot {\ve s}$ are terms, where the
scalar $\alpha$ ranges over a ring. As in~\cite{ArrighiDowekLMCS17}, it behaves in a
call-by-value oriented manner, in the sense that $(\lambda x.{\ve r})\,({\ve s}+{\ve t})$ first reduces to $(\lambda x.{\ve
r})\,{\ve s}+(\lambda x.{\ve r})\,{\ve t}$ until {\em basis terms} (i.e.~values) are reached, at which point beta-reduction applies.

The set of the normal forms of the terms can then be interpreted as a module and the
term $(\lambda x.{\ve r})\,{\ve s}$ can be seen as the application of
the linear operator $(\lambda x.{\ve r})$ to the vector~${\ve s}$.
The goal of this paper is to give a formal account of linear operators
and vectors at the level of the type system.

\subsection{Our contributions: The types}

Our goal is to characterize the vectoriality of the system of terms,
as summarized by the slogan:
\begin{quote}
  If $\ve s:T$ and $\ve t:R$ then $\alpha\cdot\ve s + \beta\cdot\ve t
  : \alpha\cdot T + \beta\cdot R$.
\end{quote}

\noindent
In the end we achieve a type system such that:
\begin{itemize}
  \item The typed language features a slightly weakened subject
    reduction property (Theorem~\ref{thm:subjectreduction}).
  \item The typed language features strong normalization (cf.~Theorem~\ref{th:SN}).
  \item In general, if $\ve t$ has type $\sum_i\alpha_i\cdot U_i$, then it must reduce to a $\ve
    t'$ of the form $\sum_{ij}\beta_{ij}\cdot\ve b_{ij}$, where: the $\ve b_{ij}$'s
    are basis terms of unit type $U_i$, and $\sum_{ij} \beta_{ij}=\alpha_{i}$. (cf.~Theorem~\ref{thm:termcharact}).
  \item In particular finite vectors and matrices and tensorial products can be encoded within \lvec. In this case, 
    the type of the encoded expressions coincides with the result of the expression (cf.~Theorem~\ref{thm:matrixsound}).
\end{itemize}

\noindent Beyond these formal results, this work constitutes a first attempt to
describe a natural type system with type constructs $\alpha\cdot$ and $+$ and
to study their behaviour.

\subsection{Directly related works}
This paper is part of a research
path~\cite{tonder04lambda,AltenkirchGrattageLICS05,ArrighiDowekLMCS17,ValironQPL10,BuirasDiazcaroJaskelioffLSFA11,ArrighiDiazcaroLMCS12,DiazcaroPetitWoLLIC12}
to design a typed language where terms can be linear combinations of terms
(they can be interpreted as probability distributions or
quantum superpositions of data and programs) and where the types
capture some of this additional structure
(they provide the propositions for a probabilistic or quantum logic via Curry-Howard).

Along this path, a first step was accomplished in~\cite{ArrighiDiazcaroLMCS12} with scalars in the type system. If $\alpha$ is a scalar and
$\Gamma\vdash\ve t:T$ is a valid sequent, then $\Gamma\vdash\alpha\cdot\ve t:\alpha\cdot T$ is a valid sequent. When the scalars are taken
to be positive real numbers, the developed language actually provides a static analysis
tool for {\it probabilistic} computation. However, it fails to address
the following issue: without sums but with negative numbers, the term
representing ``${\bf true}-{\bf false}$'', namely $\lambda x.\lambda y.x-\lambda x.\lambda y.y$, is typed with $0\cdot(X\to(X\to
X))$, a type which fails to exhibit the fact that we have a superposition of terms.

A second step was accomplished in~\cite{DiazcaroPetitWoLLIC12} with sums in the
type system. In this case, if $\Gamma\vdash
\ve s:S$ and $\Gamma\vdash\ve t:T$ are two valid sequents, then $\Gamma\vdash
\ve s+\ve t:S+T$ is a valid sequent. However, the language considered is only the {\it
additive} fragment of {\it Lineal}, it leaves scalars out of the picture. For instance,  $\lambda x.\lambda y.x-\lambda x.\lambda y.y$, does not have a type, due to its minus sign. Each of these two contributions required renewed, careful and lengthy proofs about their type systems, introducing new techniques.

The type system we propose in this paper builds upon these two approaches: it includes both scalars and sums of types, thereby reflecting the vectorial structure of the terms at the level of types.
Interestingly, combining the two separate features of
\cite{ArrighiDiazcaroLMCS12,DiazcaroPetitWoLLIC12} raises subtle novel
issues, which we identify and discuss (cf. Section~\ref{sec:vectorial}). 
Equipped with those two vectorial type constructs, the type system is indeed able to capture some fine-grained information about the vectorial structure of the terms. Intuitively, this means keeping track of both the `direction' and the `amplitude' of the terms.  

A preliminary version of this paper has appeared in \cite{ArrighiDiazcaroValironDCM11}.

\subsection{Plan of the paper}
In Section~\ref{sec:language}, we present the language. We discuss the
differences with the original language {\it Lineal}~\cite{ArrighiDowekLMCS17}.
In Section~\ref{sec:vectorial}, we explain the problems arising from the
possibility of having linear combinations of types, and elaborate a type system
that addresses those problems.  Section~\ref{sec:sr} is devoted to subject
reduction. We first say why the standard formulation of subject reduction does
not hold. Second we state a slightly weakened notion of the subject reduction
theorem, and we prove this result.  In Section~\ref{sec:SN}, we prove strong
normalisation.  Finally we close the paper in Section~\ref{sec:examples} with
theorems about the information brought by the type judgements, both in the
general and the finitary cases (matrices and vectors).

\section{The terms}\label{sec:language}

We consider the untyped language \lvec\ described in Figure~\ref{fig:Vec}. It
is based on {\em Lineal} \cite{ArrighiDowekLMCS17}: terms come in two flavours,
basis terms which are the only ones that will substitute a variable in a
$\beta$-reduction step, and general terms. We use Krivine's
notation~\cite{Krivine90} for function application: The term $(\ve s)\,\ve t$
passes the argument $\ve t$ to the function $\ve s$.

In addition to $\beta$-reduction, there are fifteen rules stemming from the
oriented axioms of vector spaces \cite{ArrighiDowekLMCS17}, specifying the
behaviour of sums and products. We divide the rules in groups: Elementary (E),
Factorisation (F), Application (A) and the Beta reduction (B).  Essentially,
the rules E and F, presented in \cite{arrighi05}, consist in capturing the
equations of vector spaces in an oriented rewrite system. For example, $0\cdot
\ve s$ reduces to $\ve 0$, as $0\cdot\ve s = \ve 0$ is valid in vector spaces.
It should also be noted that this set of algebraic rule is confluent, and does
not introduce loops. In particular, the two rules stating $\alpha\cdot(\ve
t+\ve r)\to \alpha\cdot\ve t + \alpha\cdot\ve r$ and $\alpha\cdot\ve t +
\beta\cdot\ve t\to (\alpha+\beta)\cdot\ve t$ are not inverse one of the other
when $\ve r = \ve t$. Indeed,
\[
  \alpha\cdot(\ve t+\ve t) 
  \to
  \alpha\cdot\ve t+\alpha\cdot\ve t
  \to
  (\alpha + \alpha)\cdot\ve t
\]
but not the other what around.

The group of A rules formalize the fact that a general term $\ve t$ is thought
of as a linear combination of terms $\alpha\cdot\ve r+\beta\cdot\ve r'$ and the
face that the application is distributive on the left {\em and} on the right.
When we apply $\ve s$ to such a superposition, $(\ve s)~\ve t$ reduces to
$\alpha\cdot(\ve s)~\ve r + \beta\cdot(\ve s)~\ve r'$.  The term $\ve 0$ is the
empty linear combination of terms, explaining the last two rules of Group A.

Terms are considered modulo associativity and commutativity of the
operator~$+$, making the reduction into an {\em AC-rewrite system}
\cite{JouannaudKirchnerSIAM86}.  Scalars (notation $\alpha,\beta,\gamma,\dots$)
form a ring $(\Sc,+,\times)$, where the scalar $0$ is the unit of the addition
and $1$ the unit of the multiplication. We use the shortcut notation
$\ve{s}-\ve{t}$ in place of $\ve{s}+(-1)\cdot\ve{t}$.  Note that although the
typical ring we consider in the examples is the ring of complex numbers, the
development works for any ring: the ring of integer $\mathbb{Z}$, the finite
ring $\mathbb{Z}/2\mathbb{Z}$\ldots

The set of free variables of a term is defined as usual: the only operator
binding variables is the $\lambda$-abstraction.  The operation of substitution
on terms (notation~$\ve{t}[{\ve b}/x]$) is defined in the usual way for the
regular $\lambda$-term constructs, by taking care of variable renaming to avoid
capture. For a linear combination, the substitution is defined as follows:
$(\alpha\cdot\ve t+\beta\cdot\ve r)[{\ve b}/x]=\alpha\cdot\ve t[{\ve
b}/x]+\beta\cdot\ve r[{\ve b}/x]$.

Note that we need to choose a reduction strategy. For example, the term
$(\lambda x.(x)\ x)$ $(y +z)$ cannot reduce to both $(\lambda
x.(x)\,x)\,y+(\lambda x.(x)\,x)\,z$ and $(y+z)\,(y+z)$. Indeed, the former
normalizes to $(y)\,y+(z)\,z$ whereas the latter normalizes to
$(y)\,z+(y)\,y+(z)\,y+(z)\,z$; which would break confluence. As in
\cite{ArrighiDowekLMCS17,ArrighiDiazcaroLMCS12,DiazcaroPetitWoLLIC12}, we
consider a call-by-value reduction strategy: The argument of the application is
required to be a base term, cf.\ Group B.

\begin{figure}
  \centering
  {\fbox{\begin{tabular}{@{}c@{}}
    $$\begin{array}[t]{@{}l@{\hspace{1.5cm}}r@{\ ::=\quad}l@{}}
      \text{\em Terms:} & \ve{r},\ve{s},\ve{t},\ve{u} & \ve{b}~|~(\ve{t})~\ve{r}~|~\ve{0}~|~\alpha\cdot\ve{t}~|~\ve{t}+\ve{r}\\
      \text{\em Basis terms:} & \ve{b} & x~|~\lambda x.\ve{t}
    \end{array}$$
    \\
    \begin{tabular}{p{3.8cm}p{3.9cm}p{4cm}}
      \emph{Group E:}

      $0\cdot\ve{t}\to\ve{0}$

      $1\cdot\ve{t}\to\ve{t}$

      $\alpha\cdot\ve{0}\to\ve{0}$

      $\alpha\cdot (\beta\cdot\ve{t})\to(\alpha\times\beta)\cdot\ve{t}$

      $\alpha\cdot (\ve{t}+\ve{r})\to\alpha\cdot\ve{t}+\alpha\cdot\ve{r}$

      &
      \emph{Group F:}

      $\alpha\cdot\ve{t}+\beta\cdot\ve{t}\to(\alpha+\beta)\cdot\ve{t}$

      $\alpha\cdot\ve{t}+\ve{t}\to(\alpha+1)\cdot\ve{t}$

      $\ve{t}+\ve{t}\to(1+1)\cdot\ve{t}$

      $\ve{t}+\ve{0}\to\ve{t}$
      \medskip

      \emph{Group B:}

      $(\lambda x.\ve{t})~\ve{b}\to\ve{t}[\ve{b}/x]$
      &
      \emph{Group A:}

      $(\ve{t}+\ve{r})~\ve{u}\to(\ve{t})~\ve{u}+(\ve{r})~\ve{u}$

      $(\ve{t})~(\ve{r}+\ve{u})\to(\ve{t})~\ve{r}+(\ve{t})~\ve{u}$

      $(\alpha\cdot\ve{t})~\ve{r}\to\alpha\cdot (\ve{t})~\ve{r}$

      $(\ve{t})~(\alpha\cdot\ve{r})\to\alpha\cdot (\ve{t})~\ve{r}$

      $(\ve{0})~\ve{t}\to \ve{0}$

      $(\ve{t})~\ve{0}\to \ve{0}$
    \end{tabular}
    \\[12ex]
    \prooftree\ve t\to\ve r
    \justifies\alpha\cdot\ve t\to\alpha\cdot\ve r
  \endprooftree
  \hfill
  \prooftree\ve t\to\ve r
  \justifies\ve u+\ve t\to \ve u+\ve r
\endprooftree
\hfill
\prooftree\ve t\to \ve r
\justifies(\ve u)~\ve t\to(\ve u)~\ve r
\endprooftree
\hfill
\prooftree\ve t\to \ve r
\justifies(\ve t)~\ve u\to(\ve r)~\ve u
\endprooftree
\hfill
\prooftree\ve t\to \ve r
\justifies\lambda x.\ve t\to\lambda x.\ve r
\endprooftree
\end{tabular}
}}
\caption{Syntax, reduction rules and context rules of \lvec.}
\label{fig:Vec}
\end{figure}

\subsection{Relation to {\it Lineal}}
Although strongly inspired from {\it Lineal}, the language $\lvec$ is closer
to~\cite{AssafDiazcaroPerdrixTassonValironLMCS14,ArrighiDiazcaroLMCS12,DiazcaroPetitWoLLIC12}.
Indeed, {\it Lineal} considers some restrictions on the reduction rules, for
example $\alpha\cdot\ve{t}+\beta\cdot\ve{t}\to(\alpha+\beta)\cdot\ve{t}$ is
only allowed when $\ve t$ is a closed normal term.  These restrictions are
enforced to ensure confluence in the untyped setting. Consider the following
example. Let $\ve Y_\ve b=(\lambda x.(\ve b+(x)~x))~\lambda x.(\ve b+(x)~x)$.
Then $\ve Y_\ve b$ reduces to $\ve b+\ve Y_\ve b$. So the term $\ve Y_\ve b-\ve
Y_\ve b$ reduces to $\ve 0$, but also reduces to $\ve b+\ve Y_\ve b-\ve Y_\ve
b$ and hence to $\ve b$, breaking confluence. The above restriction forbids the
first reduction, bringing back confluence. In our setting we do not need it
because $\ve Y_\ve b$ is not well-typed.  If one considers a typed language
enforcing strong normalisation, one can waive many of the restrictions and
consider a more canonical set of rewrite rules
\cite{AssafDiazcaroPerdrixTassonValironLMCS14,ArrighiDiazcaroLMCS12,DiazcaroPetitWoLLIC12}.
Working with a type system enforcing strong normalisation (as shown in
Section~\ref{sec:SN}), we follow this approach.

\subsection{Booleans in the vectorial $\lambda$-calculus}
We claimed in the introduction that 
the design of {\it Lineal} was motivated by quantum computing; in this
section we develop this analogy.

Both in $\lvec$ and in quantum computation one can interpret
the notion of booleans. In the former we can consider the usual
booleans $\lambda x.\lambda y.x$ and $\lambda x.\lambda y.y$ whereas
in the latter we consider the regular quantum bits $\true=\ket0$ and
$\false=\ket1$.

In $\lvec$, a representation of ${\it if}~{\ve r}~{\it then}~{\ve s}~{\it
else}~{\ve t}$ needs to take into account the special relation between sums and
applications. We cannot directly encode this test as the usual $(({\ve
r})\,{\ve s})\,{\ve t}$. Indeed, if ${\ve r}$, ${\ve s}$ and ${\ve t}$ were
respectively the terms $\true$, ${\ve s}_1+{\ve s}_2$ and ${\ve t}_1+{\ve
t}_2$, the term $(({\ve r})\,{\ve s})\,{\ve t}$ would reduce to $((\true)\,{\ve
s}_1)\,{\ve t}_1 + ((\true)\,{\ve s}_1)\,{\ve t}_2 + ((\true)\,{\ve s}_2)\,{\ve
t}_1 +((\true)\,{\ve s}_2)\,{\ve t}_2$, then to $2\cdot{\ve s}_1 + 2\cdot{\ve
s}_2$ instead of ${\ve s}_1 + {\ve s}_2$.  We need to ``freeze'' the
computations in each branch of the test so that the sum does not distribute
over the application. For that purpose we use the well-known notion of {\em
thunks} \cite{ArrighiDowekLMCS17}: we encode the test as $\cocanon{(({\ve
r})\,\canon{{\ve s}})\,\canon{{\ve t}}}$, where $\canon{-}$ is the term
$\lambda f.-$ with $f$ a fresh, unused term variable and where $\cocanon{-}$ is
the term $(-)\lambda x.x$. The former ``freezes'' the linearity while the
latter ``releases'' it. Then the term ${\it if}~\true~{\it then}~({\ve
s}_1+{\ve s}_2)~{\it else}~({\ve t}_1+{\ve t}_2)$ reduces to the term ${\ve
s}_1+{\ve s}_2$ as one could expect. Note that this test is linear, in the
sense that the term ${\it if}~(\alpha\cdot\true+\beta\cdot\false)~{\it
then}~{\ve s}~{\it else}~{\ve t}$ reduces to $\alpha\cdot{\ve s} +
\beta\cdot{\ve t}$.

This is similar to the {\it quantum test} that can be found \eg
in~\cite{tonder04lambda,AltenkirchGrattageLICS05}.  Quantum computation deals
with complex, linear combinations of terms, and a typical computation is run by
applying linear unitary operations on the terms, called {\em gates}.  For
example, the Hadamard gate {\bf H} acts on the space of booleans spanned by
$\true$ and $\false$. It sends ${\true}$ to $\frac1{\sqrt2}({\true}+{\false})$
and ${\false}$ to $\frac1{\sqrt2}({\true}-{\false})$.  If $x$ is a quantum bit,
the value $({\bf H})\,x$ can be represented as the quantum test
\[ 
  ({\bf H})\,x\quad{:}{=}\quad{ {\it
  if}~x~{\it then}~\frac1{\sqrt2}(\true+\false)~{\it
  else}~\frac1{\sqrt2}(\true-\false)}.
\]
As developed in \cite{ArrighiDowekLMCS17}, one can simulate this operation in
$\lvec$ using the test construction we just described: 
\[
  \cocanon{({\bf H})\,x}\quad{:}{=}\quad\left\{
    \left((x)\,\left[\frac1{\sqrt2}\cdot\true+\frac1{\sqrt2}\cdot\false\right]\right)\,
    \left[\frac1{\sqrt2}\cdot\true-\frac1{\sqrt2}\cdot\false\right]
  \right\}.
\]
Note that the thunks are necessary: without thunks the term 
\[
  \left((x)\,
  \left(\frac1{\sqrt2}\cdot\true+\frac1{\sqrt2}\cdot\false\right)\right)\,
  \left(\frac1{\sqrt2}\cdot\true-\frac1{\sqrt2}\cdot\false\right)
\]
would reduce to the term 
\[
  \frac12(((x)\,\true)\,\true + ((x)\,\true)\,\false +
  ((x)\,\false)\,\true + ((x)\,\false)\,\false),
\]
which is fundamentally different from the term ${\bf H}$ we are trying to
emulate.

With this procedure we can ``encode'' any matrix. If the space is of some
general dimension $n$, instead of the basis elements $\true$ and $\false$ we
can choose for $i=1$ to $n$ the terms $\lambda x_1.\cdots.\lambda x_n.x_i$'s to
encode the basis of the space. We can also take tensor products of qubits. We
come back to these encodings in Section~\ref{sec:examples}.

\section{The type system}\label{sec:vectorial}

This section presents the core definition of the paper: the vectorial type system.

\subsection{Intuitions}

Before diving into the technicalities of the definition, we discuss the
rationale behind the construction of the type-system.

\subsubsection{Superposition of types}
We want to incorporate the notion of scalars in the type system.  If $A$ is a
valid type, the construction $\alpha\cdot A$ is also a valid type and if the
terms $\ve{s}$ and $\ve{t}$ are of type $A$, the term
$\alpha\cdot\ve{s}+\beta\cdot\ve{t}$ is of type $(\alpha+\beta)\cdot A$. This
was achieved in~\cite{ArrighiDiazcaroLMCS12} and it allows us to distinguish
between the functions $\lambda x.(1\cdot x)$ and $\lambda x.(2\cdot x)$: the
former is of type $A\to A$ whereas the latter is of type $A\to (2\cdot A)$.

The terms $\true$ and $\false$ can be typed in the usual way with $\Bool =
X\to(X\to X)$, for a fixed type $X$.  So let us consider the term
$\frac1{\sqrt2}\cdot(\true-\false)$. Using the above addition to the type
system, this term should be of type $0\cdot\Bool$, a type which fails to
exhibit the fact that we have a superposition of terms.  For instance, applying
the Hadamard gate to this term produces the term $\false$ of type $\Bool$: the
norm would then jump from~$0$~to~$1$.  This time, the problem comes from the
fact that the type system does not keep track of the ``direction'' of a term.  

To address this problem we must allow sums of types.  For instance, provided
that $\True=X\to(Y\to X)$ and $\False=X\to(Y\to Y)$, we can type the term
$\frac1{\sqrt2}\cdot(\true-\false)$ with $\frac{\sqrt2}2\cdot(\True-\False)$,
which has $L_2$-norm $1$, just like the type of $\false$ has norm one.

At this stage the type system is able to type the term ${\bf H}=\lambda
x.\cocanon{((x)\,\canon{\frac1{\sqrt2}\cdot\true+\frac1{\sqrt2}\cdot\false})\,
\canon{\frac1{\sqrt2}\cdot\true-\frac1{\sqrt2}\cdot\false}}$.  Indeed, remember
that the thunk construction $\canon{-}$ is simply $\lambda f.(-)$ where $f$ is
a fresh variable and that $\cocanon{-}$ is $(-)\lambda x.x$. So whenever $\ve
t$ has type $A$, $\canon{\ve t}$ has type ${\bf I}\to A$ with ${\bf I}$ an
identity type of the form $Z\to Z$, and $\cocanon{\ve t}$ has type $A$ whenever
$\ve t$ has type ${\bf I}\to A$. The term $\bf H$ can then be typed with
$(({\bf I}\to\frac1{\sqrt 2}.(\True+\False)) \to ({\bf I}\to\frac1{\sqrt
2}.(\True-\False)) \to {\bf I}\to T)\to T$, where $T$ any fixed type.

Let us now try to type the term $({\bf H})\,\true$.  This is possible by taking
$T$ to be $\frac1{\sqrt 2}\cdot(\True+\False)$. But then, if we want to type
the term $({\bf H})\,\false$, $T$ needs to be equal to $\frac1{\sqrt
2}\cdot(\True-\False)$. It follows that we cannot type the term $({\bf
H})\,(\frac{1}{\sqrt2}\cdot\true + \frac1{\sqrt2}\cdot\false)$ since there is
no possibility to conciliate the two constraints on $T$.

To address this problem, we need a forall construction in the type system,
making it {\em \`a la System~F}.  The term ${\bf H}$ can now be typed with
$\forall T.(({\bf I}\to\frac1{\sqrt 2}\cdot(\True+\False)) \to ({\bf
I}\to\frac1{\sqrt 2}\cdot(\True-\False)) \to {\bf I} \to T)\to T$ and the types
$\True$ and $\False$ are updated to be respectively $\forall XY.X\to(Y\to X)$
and $\forall XY.X\to(Y\to Y)$.  The terms $({\bf H})\,\true$ and $({\bf
H})\,\false$ can both be well-typed with respective types $\frac1{\sqrt
2}\cdot(\True+\False)$ and $\frac1{\sqrt 2}\cdot(\True-\False)$, as expected.

\subsubsection{Type variables, units and general types}
\label{sec:introH}
Because of the call-by-value strategy, variables must range over types that are
not linear combination of other types, i.e.\ {\it unit types}.  To illustrate
this necessity, consider the following example. Suppose we allow variables to
have scaled types, such as $\alpha\cdot U$. Then the term $\lambda x.x+y$ could
have type $(\alpha\cdot U)\to \alpha\cdot U+V$ (with $y$ of type $V$). Let $\ve
b$ be of type $U$, then $(\lambda x.x+y)~(\alpha\cdot\ve b)$ has type
$\alpha\cdot U+V$, but then
\[
  (\lambda x.x+y)~(\alpha\cdot\ve b)\to
  \alpha\cdot(\lambda x.x+y)~\ve b\to
  \alpha\cdot(\ve b+y)\to
  \alpha\cdot\ve b+\alpha\cdot y\,,
\]
which is problematic since the type $\alpha\cdot U+V$ does not reflect such a
superposition.  Hence, the left side of an arrow will be required to be a unit
type. This is achieved by the grammar defined in Figure~\ref{fig:types}.

Type variables, however, do not always have to be unit type.  Indeed, a forall
of a general type was needed in the previous section in order to type the term
$\ve H$.  But we need to distinguish a general type variable from a unit type
variable, in order to make sure that only unit types appear at the left of
arrows.  Therefore, we define two sorts of type variables: the variables
$\varu{X}$ to be replaced with unit types, and $\vara{X}$ to be replaced with
any type (we use just $X$ when we mean either one). The type $\varu{X}$ is a
unit type whereas the type $\vara{X}$ is not.

In particular, the type $\True$ is now
$\forall\varu{XY}.\varu{X}\to\varu{Y}\to\varu{X}$, the type $\False$ is
$\forall\varu{XY}.\varu{X}\to\varu{Y}\to\varu{Y}$ and the type of ${\ve H}$ is
\[
  \forall\vara{X}.\left(\left({\bf I}\to\frac1{\sqrt 2}\cdot(\True+\False)\right)
  \to \left({\bf I}\to\frac1{\sqrt 2}\cdot(\True-\False)\right) \to {\bf I} \to \vara{X}\right)\to \vara{X}.
\]
Notice how the left sides of all arrows remain unit types.

\subsubsection{The term $\ve0$}\label{sec:term0}
The term $\ve 0$ will naturally have the type $0\cdot T$, for any inhabited
type $T$ (enforcing the intuition that the term $\ve 0$ is essentialy a normal
form of programs of the form $\ve t-\ve t$).

We could also consider to add the equivalence $R+0\cdot T\equiv R$ as in
\cite{ArrighiDiazcaroLMCS12}.  However, consider the following example. Let
$\lambda x.x$ be of type $U\to U$ and let $\ve t$ be of type $T$. The term
$\lambda x.x + \ve t - \ve t$ is of type $(U\to U) + 0\cdot T$, that is, $(U\to
U)$. Now choose $\ve b$ of type $U$: we are allowed to say that $(\lambda x.x +
\ve t - \ve t)\,\ve b$ is of type $U$. This term reduces to $\ve b + (\ve
t)\,\ve b - (\ve t)\,\ve b$. But if the type system is reasonable enough, we
should at least be able to type $(\ve t)\,\ve b$. However, since there is no
constraints on the type $T$, this is difficult to enforce.

The problem comes from the fact that along the typing of $\ve t - \ve t$, the
type of $\ve t$ is lost in the equivalence $(U\to U)+0\cdot T\equiv U\to U$.
The only solution is to not discard $0\cdot T$, that is, to not equate
$R+0\cdot T$ and $R$.

\subsection{Formalisation}

We now give a formal account of the type system: we first describe the language
of types, then present the typing rules.

\subsubsection{Definition of types}
Types are defined in Figure~\ref{fig:types} (top). They come in two flavours:
{\em unit types} and general types, that is, linear combinations of types.
\begin{figure}
  \centering
  {\fbox{\begin{tabular}{@{}c@{}}
    $$\begin{array}[t]{l@{\hspace{1.5cm}}r@{\ ::=\quad}l}
      \text{\em Types:} & T,R,S & U~|~\alpha\cdot T~|~T+R~|~\vara{X}\\
      \text{\em Unit types:} & U,V,W & \varu{X}~|~U\to T~|~\forall \varu{X}.U~|~\forall \vara{X}.U
    \end{array}$$
    \\
    \\
    $$\begin{array}[t]{r@{~\equiv~}l@{\hspace{1.5cm}}r@{~\equiv~}l}
      1\cdot T & T							 & 	\alpha\cdot T+\beta\cdot T & (\alpha+\beta)\cdot T\\
      \alpha\cdot(\beta\cdot T) & (\alpha\times\beta)\cdot T & 		T+R & R+T\\
      \alpha\cdot T+\alpha\cdot R	&\alpha\cdot (T+R) 			 & 	T+(R+S) & (T+R)+S
    \end{array}$$
    \\
    \\
    $$
    \quad\prooftree
    \justifies\Gamma, x:{U}\vdash x:{U}
    \using ax
  \endprooftree
  \hspace{1.8cm}
  \prooftree\Gamma\vdash\ve{t}: T
  \justifies\Gamma\vdash\ve{0}: 0\cdot T
  \using 0_I
\endprooftree
\hspace{1.8cm}
\prooftree\Gamma, x:{U} \vdash\ve{t}: T
\justifies\Gamma \vdash \lambda x.\ve{t}:{U}\to T
\using\to_I
\endprooftree\quad$$
\\
\\
$$
\prooftree\Gamma \vdash\ve{t}:\sui{n}\alpha_i\cdot\forall\vec{X}.(U\to T_i) \qquad \Gamma\vdash\ve{r}:\suj{m}\beta_j\cdot U[\vec{A}_j/\vec{X}]
\justifies\Gamma \vdash(\ve{t})~\ve{r}:\sui{n}\suj{m} \alpha_i\times\beta_j\cdot {T_i[\vec{A}_j/\vec{X}]}
\using\to_E
\endprooftree$$
\\
\\
$$
\prooftree\Gamma\vdash\ve{t}: \sui{n}\alpha_i\cdot U_i\quad{X\notin\FV{\Gamma}}
\justifies\Gamma\vdash\ve{t}:\sui{n}\alpha_i\cdot\forall X.U_i
\using \forall_{I}
\endprooftree
\hspace{2cm}
\prooftree\Gamma\vdash\ve{t}: \sui{n}\alpha_i\cdot\forall X.U_i
\justifies\Gamma\vdash\ve{t}: \sui{n}\alpha_i\cdot U_i[A/X]
\using \forall_{E}
\endprooftree$$
\\
\\
$$
\quad\prooftree\Gamma\vdash\ve{t}: T
\justifies\Gamma\vdash\alpha\cdot\ve{t}:\alpha\cdot T
\using\alpha_I
\endprooftree
\hspace{1cm}
\prooftree\Gamma\vdash\ve{t}: T\qquad\Gamma\vdash\ve{r}: R
\justifies\Gamma\vdash\ve{t}+\ve{r}: T+R
\using +_I
\endprooftree
\hspace{1cm}
\prooftree\Gamma\vdash\ve{t}: T\qquad T\equiv R
\justifies\Gamma\vdash\ve{t}: R
\using\equiv
\endprooftree\quad
$$
\end{tabular}}}
\caption{Types and typing rules of \lvec. We use $X$ when we do not want to
  specify if it is $\varu{X}$ or $\vara{X}$, that is, unit variables or general
  variables respectively. In $T[A/X]$, if $X=\varu{X}$, then $A$ is a unit
  type, and if $X=\vara{X}$, then $A$ can be any type. We also may write
  $\forall_\varu{I}$ and $\forall_\vara{I}$ (resp. $\forall_\varu{E}$ and
  $\forall_\vara{E}$) when we need to specify which kind of variable is being
used.}
\label{fig:types}
\end{figure}
Unit types include all types of
\emph{System~F}~\cite[Ch.~11]{GirardLafontTaylor89} and intuitively they are
used to type basis terms.  The arrow type admits only a unit type in its
domain. This is due to the fact that the argument of a $\lambda$-abstraction
can only be substituted by a basis term, as discussed in
Section~\ref{sec:introH}.  As discussed before, the type system features two
sorts of variables: unit variables $\varu{X}$ and general variables $\vara{X}$.
The former can only be substituted by a unit type whereas the latter can be
substituted by any type. We use the notation $X$ when the type variable is
unrestricted.  The substitution of~$\varu{X}$ by~$U$ (resp. $\vara{X}$ by~$S$)
in~$T$ is defined as usual and is written~$T[U/\varu{X}]$ (resp.
$T[S/\vara{X}]$). We use the notation $T[A/X]$ to say: ``if $X$ is a unit
variable, then $A$ is a unit type and otherwise $A$ is a general type''.  In
particular, for a linear combination, the substitution is defined as follows:
$(\alpha\cdot T+\beta\cdot R)[A/X]=\alpha\cdot T[A/X]+\beta\cdot R[A/X]$.  We
also use the vectorial notation~$T[\vec{A}/\vec{X}]$
for~$T[A_1/X_1]\cdots[A_n/X_n]$ if~$\vec{X}=X_1,\dots,X_n$
and~$\vec{A}=A_1,\dots,A_n$, and also $\forall \vec X$ for $\forall X_1\dots
X_n=\forall X_1.\dots.\forall X_n$.

The equivalence relation $\equiv$ on types is defined as a congruence.  Notice
that this equivalence makes the types into a weak module over the scalars: they
almost form a module save from the fact that there is no neutral element for
the addition.  The type $0\cdot T$ is not the neutral element of the addition.

We may use the summation ($\sum$) notation without ambiguity, due to the
associativity and commutativity equivalences of $+$.

\subsubsection{Typing rules}
The typing rules are given also in Figure~\ref{fig:types} (bottom).  Contexts
are denoted by $\Gamma$, $\Delta$, etc. and are defined as sets
$\{x:U,\dots\}$, where $x$ is a term variable appearing only once in the set,
and $U$ is a unit type.  The axiom ($ax$) and the arrow introduction rule
($\to_I$) are the usual ones. The rule ($0_I$) to type the term $\ve 0$ takes
into account the discussion in Section~\ref{sec:term0}. This rule also ensures
that the type of $\ve 0$ is inhabited, discarding problematic types like
$0\cdot \forall X.X$. Any sum of typed terms can be typed using Rule $(+_I)$.
Similarly, any scaled typed term can be typed with $(\alpha_I)$. Rule
$(\equiv)$ ensures that equivalent types can be used to type the same terms.
Finally, the particular form of the arrow-elimination rule ($\to_E$) is due to
the rewrite rules in group~A that distribute sums and scalars over application.
The need and use of this complicated arrow elimination can be illustrated by
the following three examples.

\begin{example}\rm
  Rule $(\to_E)$ is easier to read for trivial linear combinations. It states
  that provided that $\Gamma\vdash \ve s:\forall X.U\to S$ and $\Gamma\vdash
  \ve t:V$, if there exists some type $W$ such that $V=U[W/X]$, then since the
  sequent $\Gamma\vdash \ve s:V\to S[W/X]$ is valid, we also have $\Gamma\vdash
  (\ve s)\,\ve t:S[W/X]$.  Hence, the arrow elimination here performs an arrow
  and a forall elimination at the same time.
\end{example}

\begin{example}\rm
  Consider the terms $\ve b_1$ and $\ve b_2$, of respective types $U_1$ and
  $U_2$. The term $\ve b_1 + \ve b_2$ is of type $U_1+U_2$. We would reasonably
  expect the term $(\lambda x.x)\,(\ve b_1 + \ve b_2)$ to also be of type $U_1
  + U_2$. This is the case thanks to Rule $(\to_E)$. Indeed, type the term
  $\lambda x.x$ with the type $\forall X.X\to X$ and we can now apply the rule.
  Notice that we could not type such a term unless we eliminate the forall
  together with the arrow.
\end{example}

\begin{example}\label{ex:3}\rm
  A slightly more involved example is the projection of a pair of elements. It
  is possible to encode in {\em System F} the notion of pairs and projections:
  $\langle \ve b, \ve c\rangle = \lambda x.((x)~\ve b)~\ve c$, $\langle \ve b',
  \ve c'\rangle = \lambda x.((x)~\ve b')~\ve c'$, $\pi_1 = \lambda
  x.(x)~(\lambda y.\lambda z.y)$ and $\pi_2 = \lambda x.(x)~(\lambda y.\lambda
  z.z)$. Provided that $\ve b$, $\ve b'$, $\ve c$ and $\ve c'$ have respective
  types $U$, $U'$, $V$ and $V'$, the type of $\langle \ve b, \ve c\rangle$ is
  $\forall X.(U\to V\to X)\to X$ and the type of $\langle \ve b', \ve
  c'\rangle$ is $\forall X.(U'\to V'\to X)\to X$. The term $\pi_1$ and $\pi_2$
  can be typed respectively with $\forall XYZ.((X\to Y\to X)\to Z)\to Z$ and
  $\forall XYZ.((X\to Y\to Y)\to Z)\to Z$.  The term $(\pi_1 + \pi_2)\,(\langle
  \ve b, \ve c\rangle + \langle \ve b', \ve c'\rangle)$ is then typable of type
  $U+U'+V+V'$, thanks to Rule $(\to_E)$. Note that this is consistent with the
  rewrite system, since it reduces to $\ve b + \ve c + \ve b' + \ve c'$.
\end{example}

\subsection{Example: Typing Hadamard}
In this Section, we formally show how to retrieve the type that was discussed
in Section~\ref{sec:introH}, for the term ${\bf H}$ encoding the Hadamard gate.

Let $\true=\lambda x.\lambda y.x$ and $\false=\lambda x.\lambda y.y$.  It is
easy to check that
\begin{align*}
  &\vdash\true:\forall\varu{X}\varu{Y}.\varu{X}\to\varu{Y}\to\varu{X},\\
  &\vdash\false:\forall\varu{X}\varu{Y}.\varu{X}\to\varu{Y}\to\varu{Y}.
\end{align*}
We also define the following superpositions:
\[
  \ket{+}=\frac{1}{\sqrt{2}}\cdot(\true+\false)
  \qquad\textrm{and}\qquad
  \ket{-}=\frac{1}{\sqrt{2}}\cdot(\true-\false).
\]
In the same way, we define
\begin{align*}
  \boxplus&=\frac{1}{\sqrt{2}}\cdot((\forall\varu{XY}.\varu{X}\to\varu{Y}\to\varu{X})+(\forall\varu{XY}.\varu{X}\to\varu{Y}\to\varu{Y})),
  \\
  \boxminus&=\frac{1}{\sqrt{2}}\cdot((\forall\varu{XY}.\varu{X}\to\varu{Y}\to\varu{X})-(\forall\varu{XY}.\varu{X}\to\varu{Y}\to\varu{Y})).
\end{align*}
Finally, we recall $[\ve t]=\lambda x.\ve t$, where $x\notin\FV{\ve t}$ and
$\{\ve t\}=(\ve t)~I$. So $\cocanon{\canon{\ve t}}\to\ve t$. Then it is easy to
check that $\vdash\canon{\ket{+}}:I\to\boxplus$ and
$\vdash\canon{\ket{-}}:I\to\boxminus$.  In order to simplify the notation, let
$F=(I\to\boxplus)\to(I\to\boxminus)\to (I\to \vara{X})$. Then

\[\prooftree
  \prooftree
  \prooftree
  \prooftree
  \prooftree
  \prooftree
  \justifies x:F\vdash x:F
  \using ax
\endprooftree
\qquad
x:F\vdash\canon{\ket{+}}:I\to\boxplus
\justifies x:F\vdash (x)~\canon{\ket{+}}:(I\to\boxminus)\to(I\to \vara{X})
\using\to_E
        \endprooftree
        \qquad
        x:F\vdash\canon{\ket{-}}:I\to\boxminus
        \justifies x:F\vdash (x)~\canon{\ket{+}}\canon{\ket{-}}:I\to \vara{X}
        \using\to_E
      \endprooftree
      \justifies x:F\vdash\cocanon{(x)~\canon{\ket{+}}\canon{\ket{-}}}:\vara{X}
      \using\to_E
    \endprooftree
    \justifies\vdash\lambda x.\cocanon{(x)~\canon{\ket{+}}\canon{\ket{-}}}:F\to \vara{X}
    \using\to_I
  \endprooftree
  \justifies\vdash\lambda x.\cocanon{(x)~\canon{\ket{+}}\canon{\ket{-}}}:\forall \vara{X}.((I\to\boxplus)\to(I\to\boxminus)\to (I\to \vara{X}))\to \vara{X}
  \using\forall_{\vara{I}}
\endprooftree
\]

\noindent Now we can apply Hadamard to a qubit and get the right type. Let $H$
be the term $\lambda x.\cocanon{(x)~\canon{\ket{+}}\canon{\ket{-}}}$
\medskip

\noindent\scalebox{0.88}{$$\prooftree
  \prooftree\vdash H:\forall \vara{X}.((I\to\boxplus)\to(I\to\boxminus)\to (I\to \vara{X}))\to \vara{X}
  \justifies\vdash H:((I\to\boxplus)\to(I\to\boxminus)\to (I\to\boxplus))\to\boxplus
  \using\forall_{\vara{E}}
\endprooftree
\prooftree
\prooftree\vdash\true:\forall \varu{X}.\forall \varu{Y}.\varu{X}\to \varu{Y}\to \varu{X}
\justifies\vdash\true:\forall \varu{Y}.(I\to\boxplus)\to \varu{Y}\to (I\to\boxplus)
\using\forall_{\varu{E}}
    \endprooftree
    \justifies\vdash\true:(I\to\boxplus)\to(I\to\boxminus)\to (I\to\boxplus)
    \using\forall_{\varu{E}}
  \endprooftree
  \justifies \vdash (H)~\true:\boxplus
  \using\to_E
\endprooftree$$}
\medskip

\noindent Yet a more interesting example is the following.
Let 
\[
  \boxplus_I = \frac 1{\sqrt 2}\cdot(((I\to\boxplus)\to(I\to\boxminus)\to(I\to\boxplus))+((I\to\boxplus)\to(I\to\boxminus)\to(I\to\boxminus)))
\]
That is, $\boxplus$ where the forall have been instantiated. It is easy to
check that $\vdash\ket +:\boxplus_I$. Hence,
\begin{center}
  $$
  \prooftree\vdash H:\forall \vara{X}.((I\to\boxplus)\to(I\to\boxminus)\to (I\to \vara{X}))\to \vara{X}
  \quad
  \vdash\ket +:\boxplus_I
  \justifies \vdash 
  (H)~\ket +:\frac 1{\sqrt 2}\cdot\boxplus+\frac 1{\sqrt 2}\cdot\boxminus
  \using\to_E
\endprooftree
$$
\end{center}
And since $\frac 1{\sqrt 2}\cdot\boxplus+\frac 1{\sqrt
2}\cdot\boxminus\equiv\forall\varu X\varu Y.\varu X\to\varu Y\to\varu X$, we
conclude that
\[
  \vdash(H)~\ket +:\forall\varu X\varu Y.\varu X\to\varu Y\to\varu X.
\]
Notice that $(H)~\ket +\to^*\true$.

\section{Subject reduction}\label{sec:sr}

As we will now explain, the usual formulation of subject reduction is not
directly satisfied. We discuss the alternatives and opt for a weakened version
of subject reduction.
\subsection{Principal types and subtyping alternatives}
Since the terms of $\lvec$ are not explicitly typed, we are bound to have
sequents such as $\Gamma\vdash\ve{t}:T_1$ and $\Gamma\vdash\ve{t}:T_2$ with
distinct types $T_1$ and $T_2$ for the same term $\ve t$.  Using Rules $(+_I)$
and $(\alpha_I)$ we get the valid typing judgement $\Gamma\vdash\alpha\cdot \ve
t+\beta\cdot \ve t:\alpha\cdot T_1+\beta\cdot T_2$. Given that $\alpha\cdot \ve
t+\beta\cdot \ve t$ reduces to $(\alpha+\beta)\cdot \ve t$, a regular subject
reduction would ask for the valid sequent $\Gamma\vdash(\alpha+\beta)\cdot \ve
t:\alpha\cdot T_1+\beta\cdot T_2$.  But since in general we do not have
$\alpha\cdot T_1+\beta\cdot T_2\equiv(\alpha+\beta)\cdot
T_1\equiv(\alpha+\beta)\cdot T_2$, we need to find a way around this.

A first approach would be to use the notion of principal types. However, since
our type system includes {\em System~F}, the usual examples for the absence of
principal types apply to our settings: we cannot rely upon this method.

A second approach would be to ask for the sequent
$\Gamma\vdash(\alpha+\beta)\cdot \ve t:\alpha\cdot T_1+\beta\cdot T_2$ to be
valid. If we force this typing rule into the system, it seems to solve the
issue but then the type of a term becomes pretty much arbitrary: with typing
context $\Gamma$, the term $(\alpha+\beta)\cdot\ve t$ would then be typed with
any combination $\gamma\cdot T_1 + \delta\cdot T_2$, where
$\alpha+\beta=\gamma+\delta$.

The approach we favour in this paper is via a notion of order on types. The
order, denoted with $\sqsupseteq$, will be chosen so that the factorisation
rules make the types of terms smaller. We will ask in particular that
$(\alpha+\beta)\cdot T_1\sqsupseteq\alpha\cdot T_1+\beta\cdot T_2$ and
$(\alpha+\beta)\cdot T_2\sqsupseteq\alpha\cdot T_1+\beta\cdot T_2$ whenever
$T_1$ and $T_2$ are types for the same term.  This approach can also be
extended to solve a second pitfall coming from the rule ${\ve t} + \ve0 \to \ve
t$. Indeed, although $x:\varu{X}\vdash x + \ve0 : \varu{X}+0\cdot T$ is
well-typed for any inhabited $T$, the sequent $x:\varu{X}\vdash
x:\varu{X}+0\cdot T$ is not valid in general. We therefore extend the ordering
to also have $\varu{X}\sqsupseteq \varu{X}+0\cdot T$.

Notice that we are not introducing a subtyping relation with this ordering. For
example, although $\vdash (\alpha+\beta)\cdot\lambda x.\lambda
y.x:(\alpha+\beta)\cdot\forall\varu{X}.\varu{X}\to (\varu{X}\to\varu{X})$ is
valid and $(\alpha+\beta)\cdot\forall\varu{X}.\varu{X}\to
(\varu{X}\to\varu{X})\sqsupseteq \alpha\cdot \forall\varu{X}.\varu{X}\to
(\varu{X}\to\varu{X})+\beta\cdot\forall\varu{X}\varu{Y}.\varu{X}\to(\varu{Y}\to\varu{Y})$,
the sequent $\vdash (\alpha+\beta)\cdot\lambda x.\lambda y.x:\alpha\cdot\forall
\varu{X}.\varu{X}\to
(\varu{X}\to\varu{X})+\beta\cdot\forall\varu{X}\varu{Y}.\varu{X}\to(\varu{Y}\to\varu{Y})$
is not valid.

\subsection{Weak subject reduction}

We define the (antisymmetric) ordering relation $\sqsupseteq$ on types
discussed above as the smallest reflexive transitive and congruent relation
satisfying the rules:
\begin{enumerate}
  \item $(\alpha+\beta)\cdot T\sqsupseteq\alpha\cdot T+\beta\cdot T'$
    if there are $\Gamma,\ve{t}$ such that $\Gamma\vdash\alpha\cdot\ve{t}: \alpha\cdot T$ and $\Gamma\vdash\beta\cdot\ve{t}: \beta\cdot T'$.
  \item $T\sqsupseteq T+0.R$ for any type $R$.
  \item If $T\sqsupseteq R$ and $U\sqsupseteq V$, then $T+S\sqsupseteq
    R+S$,  $\alpha\cdot T\sqsupseteq\alpha\cdot R$, $U\to T\sqsupseteq
    U\to R$ and $\forall X.U\sqsupseteq\forall X.V$.
\end{enumerate}
Note the fact that $\Gamma\vdash \ve t: T$ and $\Gamma\vdash \ve t: T'$ does
not imply that $\beta\cdot T\sqsupseteq \beta\cdot T'$. For instance, although
$\beta\cdot T\sqsupseteq 0\cdot T+\beta\cdot T'$, we do not have $0\cdot
T+\beta\cdot T'\equiv\beta\cdot T'$.
\medskip

Let $R$ be any reduction rule from Figure~\ref{fig:Vec}, and $\to_R$ a one-step
reduction by rule $R$. A weak version of the subject reduction theorem can be
stated as follows.
\begin{theorem}[Weak subject reduction]\label{thm:subjectreduction}
  For any terms $\ve t$, $\ve t'$, any context $\Gamma$ and any type
  $T$, if $\ve{t}\to_R\ve{t}'$ and $\Gamma\vdash \ve t: T$, then:
  \begin{enumerate}
    \item if $R\notin$ Group F, then $\Gamma\vdash\ve t': T$;
    \item if $R\in$ Group F, then
      $\exists S\sqsupseteq T$ such that $\Gamma\vdash\ve t': S$ and
      $\Gamma\vdash\ve{t}: S$.
  \end{enumerate}
\end{theorem}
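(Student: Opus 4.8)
The plan is to prove this by induction on the derivation of the reduction $\ve{t}\to_R\ve{t}'$, following the inductive structure of the one-step reduction relation (the base rules of Groups E, F, A, B together with the five context rules). For each base case I would take an arbitrary typing derivation of $\Gamma\vdash\ve t:T$, analyse its shape, and exhibit the required typing of $\ve t'$. The context-rule cases will invoke the induction hypothesis together with a congruence argument: since $\sqsupseteq$ is defined to be congruent (closed under $+$, $\alpha\cdot$, $\to$ on the codomain, and $\forall$), a smaller type for the redex lifts to a smaller type for the surrounding term.

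\medskip

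\noindent\textbf{Key preliminary lemmas.} Before the main induction I would establish a handful of \emph{generation} (inversion) lemmas describing, for each syntactic form of term, what its possible typing derivations look like modulo the equivalence $\equiv$ and the rules $(\forall_I),(\forall_E),(\equiv)$. Concretely I expect to need: (i) if $\Gamma\vdash\lambda x.\ve t:T$ then $T\equiv\sum_i\alpha_i\cdot\forall\vec X_i.(U\to T_i)$ with $\Gamma,x:U\vdash\ve t:\sum_i\alpha_i\cdot T_i'$ for appropriate instances; (ii) an analogous inversion for applications matched against the exact premises of $(\to_E)$; (iii) a \emph{substitution lemma} for term substitution, namely that $\Gamma,x:U\vdash\ve t:T$ and $\Gamma\vdash\ve b:U$ give $\Gamma\vdash\ve t[\ve b/x]:T$, which is what the Group~B case ($(\lambda x.\ve t)\,\ve b\to\ve t[\ve b/x]$) turns on; and (iv) a type-substitution lemma for the $\forall$ rules. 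I would also record that $\equiv$ and the typing rules commute suitably, so that I may always assume a derivation is in a convenient normal form (for instance with a single trailing $(\equiv)$).

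\medskip

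\noindent\textbf{Handling the two clauses.} For the rules \emph{not} in Group~F (clause~1), the claim is that the type is preserved exactly, so in each such case I must produce a derivation of $\Gamma\vdash\ve t':T$ with the \emph{same} $T$. The Group~E, Group~A, and Group~B rules are all type-preserving: the E and A rules mirror the type-level equivalences ($\alpha\cdot(\beta\cdot T)\equiv(\alpha\beta)\cdot T$, distributivity, etc.) so that $(\equiv)$ rebalances the type, while the A rules that split application over $+$, $\alpha\cdot$, and $\ve0$ are matched exactly by the shape of $(\to_E)$ and by $(0_I)$; Group~B is discharged by the substitution lemma. For the Group~F rules (clause~2) — the factorisation rules such as $\alpha\cdot\ve t+\beta\cdot\ve t\to(\alpha+\beta)\cdot\ve t$ — exact preservation genuinely fails, as the motivating discussion explained. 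Here I would use the inversion lemma to recover that the two summands were typed as $\alpha\cdot T_1$ and $\beta\cdot T_2$ for \emph{the same term} $\ve t$, so $T\equiv\alpha\cdot T_1+\beta\cdot T_2$; then clause~(1) of the definition of $\sqsupseteq$ gives exactly $(\alpha+\beta)\cdot T_1\sqsupseteq T$, and I can type the contractum $(\alpha+\beta)\cdot\ve t$ with $S=(\alpha+\beta)\cdot T_1$ while also retyping the redex $\ve t$ itself with this same $S$ — satisfying both conjuncts $\Gamma\vdash\ve t':S$ and $\Gamma\vdash\ve t:S$. The remaining F rules ($\alpha\cdot\ve t+\ve t\to(\alpha+1)\cdot\ve t$, $\ve t+\ve t\to(1+1)\cdot\ve t$, $\ve t+\ve0\to\ve t$) are treated identically, the last one appealing to clause~(2) of the $\sqsupseteq$ definition, $T\sqsupseteq T+0\cdot R$.

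\medskip

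\noindent\textbf{Main obstacle.} I expect the genuine difficulty to lie in the inversion/generation lemmas for application against the baroque rule $(\to_E)$, because its premises bundle a sum over $i$ of forall-quantified arrows with a sum over $j$ of instantiations of the domain, and the conclusion is a double sum $\sum_{ij}\alpha_i\beta_j\cdot T_i[\vec A_j/\vec X]$. Peeling a given type back through $(\to_E)$, while the $(\equiv)$, $(\forall_I)$ and $(\forall_E)$ rules may have been interleaved freely, requires care: I must show that any derivation of a type for $(\ve s)\,\ve t$ can be rearranged into one ending in a single $(\to_E)$ preceded only by type-equivalence manipulations, so that the Group~A reduction cases can be matched premise-for-premise. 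The subtle point flagged in the text — that $\Gamma\vdash\ve t:T$ and $\Gamma\vdash\ve t:T'$ do \emph{not} give $\beta\cdot T\sqsupseteq\beta\cdot T'$ — means I must be disciplined about invoking clause~(1) of $\sqsupseteq$ only in the precise situation where the \emph{scaled} judgements $\Gamma\vdash\alpha\cdot\ve t:\alpha\cdot T$ and $\Gamma\vdash\beta\cdot\ve t:\beta\cdot T'$ are both available, which is exactly what the factorisation redex supplies.
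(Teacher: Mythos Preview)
Your overall strategy matches the paper's: induction on the reduction step, generation (inversion) lemmas for each term former, the substitution lemma for the $\beta$-case, and invocation of $\sqsupseteq$ exactly as you describe for the Group~F rules. Your analysis of the factorisation cases and of the congruence of $\sqsupseteq$ for the context rules is correct and essentially what the paper does.

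The one substantive divergence is in how you propose to tame the interleaving of $(\forall_I)$, $(\forall_E)$, $(\equiv)$ with $(\to_E)$. You plan to ``show that any derivation of a type for $(\ve s)\,\ve t$ can be rearranged into one ending in a single $(\to_E)$ preceded only by type-equivalence manipulations.'' The paper does \emph{not} attempt such a rearrangement, and it is not clear it goes through: a $(\forall_I)$ applied after $(\to_E)$ binds a variable that may occur in the $T_i$, the $U$, and the $\vec A_j$ simultaneously, and there is no obvious way to absorb it back into the $\forall\vec X$ already carried by the function premise without commuting the binder past the arrow. Instead of normalising derivations, the paper introduces an auxiliary relation $\succeq^{\ve t}_{\V,\Gamma}$ on types (Definition~\ref{def:order}) recording precisely that one type arises from another by a finite chain of $(\forall_I)$, $(\forall_E)$ and $(\equiv)$ steps, all applied to the fixed term $\ve t$. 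The generation lemmas for applications and abstractions (Lemmas~\ref{lem:app} and~\ref{lem:abs}) then conclude only up to $\succeq$, not up to $\equiv$, and the case analysis is closed by two further lemmas: an \emph{arrow comparison} lemma (Lemma~\ref{lem:arrowscomp}) saying that if $V\to R \succeq^{\ve t}_{\V,\Gamma} \forall\vec X.(U\to T)$ then $U\to T$ is a substitution instance of $V\to R$, and a \emph{stability} lemma (Lemma~\ref{lem:subjectreductionofrelation}) saying that if $T\succeq^{\ve t}_{\V,\Gamma}R$, $\ve t\to\ve r$ and $\Gamma\vdash\ve r:T$, then $T\succeq^{\ve r}_{\V,\Gamma}R$. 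The latter is what lets one replay the trailing $\forall$-manipulations on the reduct and thereby recover $\Gamma\vdash\ve t':T$ in the Group~A and Group~B cases. Your plan would need either this device or a worked-out proof that the derivation rearrangement you sketch is actually sound; the paper supplies the former.
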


\subsection{Prerequisites to the proof}\label{sec:prereq}

The proof of Theorem~\ref{thm:subjectreduction} requires some machinery that we
develop in this section. Omitted proofs can be found in~\ref{app:srpre}.

The following lemma gives a characterisation of types as linear combinations of
unit types and general variables. 

\begin{lemma}[Characterisation of types]\label{lem:typecharact}
  For any type $T$ in $\mathcal{G}$, there exist $n,m\in\mathbb{N}$,
  $\alpha_1,\dots,\alpha_n$, $\beta_1,\dots,\beta_m\in\Sc$, distinct unit types
  $U_1,\dots,U_n$ and distinct general variables $\vara{X}_1,\dots,\vara{X}_m$
  such that \pushQED{\qed} 
  \[
    T\equiv\sui{n}\alpha_i\cdot U_i+\suj{m}\beta_j\cdot\vara{X}_j\ .\qedhere
  \]
  \popQED
\end{lemma}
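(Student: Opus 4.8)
The plan is to prove the characterization by structural induction on the grammar of types given in Figure~\ref{fig:types}, using the equivalence relation $\equiv$ to normalize at each step. The grammar says a type $T$ is built from unit types $U$, scalings $\alpha\cdot T$, sums $T+R$, and general variables $\vara{X}$, so these four cases drive the induction. The two key invariants I must maintain through the induction are that the unit types $U_i$ appearing in the combination are pairwise \emph{distinct} and that the general variables $\vara{X}_j$ are pairwise \emph{distinct}; the equivalences in Figure~\ref{fig:types} — in particular $\alpha\cdot T+\beta\cdot T\equiv(\alpha+\beta)\cdot T$ together with associativity and commutativity of $+$ — are exactly what let me merge repeated summands to restore distinctness.

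First I would handle the base cases. If $T=\vara{X}$ is a general variable, take $n=0$, $m=1$, $\beta_1=1$, $\vara{X}_1=\vara{X}$, using $1\cdot\vara{X}\equiv\vara{X}$. If $T=U$ is a unit type, take $n=1$, $m=0$, $\alpha_1=1$, $U_1=U$, again via $1\cdot T\equiv T$. For the inductive step on $\alpha\cdot T$, I apply the induction hypothesis to $T$ to get $T\equiv\sum_i\alpha_i\cdot U_i+\sum_j\beta_j\cdot\vara{X}_j$, then distribute the scalar using $\alpha\cdot(R+S)\equiv\alpha\cdot R+\alpha\cdot S$ (the equivalence $\alpha\cdot T+\alpha\cdot R\equiv\alpha\cdot(T+R)$ read right-to-left) and $\alpha\cdot(\beta\cdot T)\equiv(\alpha\times\beta)\cdot T$; the resulting coefficients are $\alpha\times\alpha_i$ and $\alpha\times\beta_j$, and distinctness of the $U_i$ and $\vara{X}_j$ is inherited unchanged.

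The sum case $T=R+S$ is where the real work sits, and I expect it to be the main obstacle. Applying the induction hypothesis to $R$ and to $S$ separately yields two normalized combinations, but simply concatenating them need not preserve distinctness: a unit type may occur in both $R$'s list and $S$'s list, and likewise for general variables. The plan is to merge the two lists, and whenever a unit type $U$ (or variable $\vara{X}$) appears in both with coefficients $\gamma$ and $\delta$, collapse the two occurrences into a single summand with coefficient $\gamma+\delta$ using $\gamma\cdot U+\delta\cdot U\equiv(\gamma+\delta)\cdot U$, appealing to associativity and commutativity of $+$ to bring the matching summands adjacent. After exhaustively performing these merges I recover a combination in which every unit type and every general variable occurs at most once, which is precisely the required normal form; the bounds $n,m\in\mathbb{N}$ are immediate since the merged lists are finite. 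The only subtlety to check is that ``distinct unit types'' is well-defined, i.e.\ that equality of unit types is taken up to the congruence $\equiv$ restricted to unit types, so that merging is driven by genuine type identity rather than syntactic accident; with that convention fixed, the induction closes and the statement follows.
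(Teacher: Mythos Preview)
Your proposal is correct and follows essentially the same structural induction as the paper's own proof: base cases for unit types and general variables via $1\cdot T\equiv T$, the scalar case by distributing via $\alpha\cdot(\beta\cdot T)\equiv(\alpha\times\beta)\cdot T$ and $\alpha\cdot(T+R)\equiv\alpha\cdot T+\alpha\cdot R$, and the sum case by concatenating the two inductive decompositions and merging repeated summands with $\alpha\cdot T+\beta\cdot T\equiv(\alpha+\beta)\cdot T$. Your remark about distinctness of unit types being up to $\equiv$ is in fact a non-issue here, since the paper observes (in the proof of Lemma~\ref{lem:equivdistinctscalars}) that for unit types $U\equiv V\Leftrightarrow U=V$.
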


\noindent Our system admits weakening, as stated by the following lemma.

\begin{lemma}[Weakening]\label{lem:weakening}
  Let $\ve t$ be such that $x\not\in\FV{\ve t}$.  Then $\Gamma\vdash\ve t:T$ is
  derivable if and only if $\Gamma,x:U\vdash\ve t:T$ is derivable.
\end{lemma}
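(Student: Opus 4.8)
The plan is to prove the Weakening Lemma by a standard but careful structural induction on the typing derivation, establishing both directions of the biconditional. The ``if'' direction (strengthening: from $\Gamma, x:U \vdash \ve t : T$ with $x \notin \FV{\ve t}$ conclude $\Gamma \vdash \ve t : T$) and the ``only if'' direction (weakening proper) are both handled by induction, and the two directions are essentially symmetric enough that I would set them up together, or prove weakening first and then strengthening, noting that the crucial hypothesis $x \notin \FV{\ve t}$ is what makes the extra assumption $x:U$ inert.

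First I would fix the direction and induct on the structure of the derivation of the premise, casing on the last rule applied. The rules that introduce or consume term structure --- $(ax)$, $(\to_I)$, $(\to_E)$, $(0_I)$, $(\alpha_I)$, $(+_I)$ --- each recurse on strictly smaller subderivations whose subjects are subterms of $\ve t$, so the side condition $x \notin \FV{\ve t}$ propagates to each subterm and the induction hypothesis applies directly; I then reassemble the conclusion with the same rule, now over the extended (or reduced) context. The purely type-manipulating rules $(\forall_I)$, $(\forall_E)$, and $(\equiv)$ leave the term and the term-context bindings untouched, so they pass through transparently --- the only point requiring a word is $(\forall_I)$, where the side condition $X \notin \FV{\Gamma}$ must be checked against the enlarged context $\Gamma, x:U$; this is unproblematic provided we choose $X$ fresh (by the usual implicit $\alpha$-renaming of bound type variables), since adding the binding $x:U$ only adds the free type variables of $U$, which we may assume disjoint from the bound $X$.

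\emph{The two genuinely load-bearing cases are the axiom and the $\lambda$-abstraction.} For the axiom $(ax)$ in the weakening direction, from $\Gamma \vdash y : V$ (so $y:V \in \Gamma$) I must produce $\Gamma, x:U \vdash y:V$; since $\Gamma, x:U$ is a well-formed context --- here I rely on $x$ not already occurring in $\Gamma$, which is guaranteed because $x \notin \FV{\ve t}$ forces $x \neq y$ and, more to the point, contexts are sets with each variable appearing once, so $x:U$ can be adjoined --- the axiom fires again. In the strengthening direction the axiom must have concluded $y:V$ with $y \neq x$ (again because $x \notin \FV{\ve t} = \{y\}$), so the binding $x:U$ was never the one used and can be dropped. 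For $(\to_I)$, where $\ve t = \lambda y.\ve s$, I apply the induction hypothesis to the subderivation of $\Gamma, x:U, y:W \vdash \ve s : S$; the subtlety is the interaction of the two bindings $x:U$ and $y:W$, which I dispatch by $\alpha$-renaming $y$ so that $y \neq x$ and $y \notin \FV{U}$, after which the contexts commute freely (they are sets) and the induction hypothesis, whose side condition $x \notin \FV{\ve s}$ follows from $x \notin \FV{\lambda y.\ve s}$ together with $x \neq y$, applies cleanly.

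The main obstacle I anticipate is not any single case but the bookkeeping around variable freshness: ensuring that the bound term variable in $(\to_I)$ and the bound type variable in $(\forall_I)$ can always be chosen disjoint from $x$ and from $\FV{U}$, so that enlarging the context by $x:U$ never triggers a spurious capture or invalidates a side condition. Since the calculus identifies terms and types up to $\alpha$-equivalence, this is routine, but it is the only place where the proof is more than mechanical rule-matching, and I would state the freshness convention explicitly at the outset to keep the $(\to_I)$ and $(\forall_I)$ cases clean.
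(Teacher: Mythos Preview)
Your proposal is correct and follows exactly the approach the paper takes: the paper's proof reads in its entirety ``By a straightforward induction on the type derivation,'' and your detailed case analysis is precisely the unfolding of that induction. Your attention to the freshness bookkeeping in the $(\to_I)$ and $(\forall_I)$ cases is appropriate but more explicit than the paper bothers to be.
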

\begin{proof}
  By a straightforward induction on the type derivation.
\end{proof}

\begin{lemma}[Equivalence between sums of distinct elements (up to $\equiv$)]\label{lem:equivdistinctscalars}
  Let $U_1,\dots,U_n$ be a set of distinct (not equivalent) unit types, and let
  $V_1,\dots,V_m$ be also a set distinct unit types. If $\sui{n}\alpha_i\cdot
  U_i\equiv\suj{m}\beta_j\cdot V_j$, then $m=n$ and there exists a permutation
  $p$ of $m$ such that $\forall i$, $\alpha_i=\beta_{p(i)}$ and $U_i\equiv
  V_{p(i)}$.
  \qed
\end{lemma}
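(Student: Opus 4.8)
The plan is to prove Lemma~\ref{lem:equivdistinctscalars} by analysing the structure of the equivalence relation $\equiv$ generated by the six axioms displayed in Figure~\ref{fig:types}. The key observation is that $\equiv$ is generated by the vector-space axioms (unit of scalar multiplication, compatibility of scalar multiplication, distributivity of scalars over both sums, commutativity and associativity of $+$, and the factorisation rule $\alpha\cdot T+\beta\cdot T\equiv(\alpha+\beta)\cdot T$), and that none of these axioms can ever \emph{merge} two non-equivalent unit types or split one into a sum. The strategy is therefore to extract a well-defined notion of ``coefficient of a unit type'' that is invariant under $\equiv$, and then read off the conclusion from the invariance.

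\medskip

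First I would set up a canonical measure. For a fixed unit type $U$ and a type $T$, define a coefficient $c_U(T)\in\Sc$ by structural recursion on $T$: on a unit type $V$, set $c_U(V)=1$ if $V\equiv U$ and $0$ otherwise; on a general variable $\vara{X}$, set $c_U(\vara{X})=0$; and extend by $c_U(\alpha\cdot T')=\alpha\times c_U(T')$ and $c_U(T'+R')=c_U(T')+c_U(R')$. The crucial step is to verify that $c_U$ is well-defined on $\equiv$-classes, i.e.\ that $T\equiv R$ implies $c_U(T)=c_U(R)$; by the congruence and transitivity of $\equiv$ it suffices to check this for each of the six generating axioms individually, which is a routine ring computation in each case (for the factorisation axiom one uses distributivity of $\times$ over $+$ in $\Sc$). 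A symmetric treatment yields a coefficient $d_\vara{X}(T)$ for each general variable, which we will also need, but the unit-type case carries the essential content.

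\medskip

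Given well-definedness, the conclusion follows quickly. Applying $c_{U_i}$ to the hypothesis $\sum_{i=1}^n\alpha_i\cdot U_i\equiv\sum_{j=1}^m\beta_j\cdot V_j$, the left-hand side evaluates to $\alpha_i$ because the $U_k$ are pairwise distinct (and a fortiori pairwise non-equivalent, as the statement speaks of distinct unit types modulo $\equiv$), so $c_{U_i}(U_k)=0$ for $k\neq i$; the right-hand side evaluates to $\sum_{j}\beta_j\cdot c_{U_i}(V_j)$. Hence $U_i$ must be $\equiv$-equivalent to exactly those $V_j$ with nonzero contribution, and by distinctness of the $V_j$ there is at most one such $j$; one defines $p(i)$ to be that index and reads off $\alpha_i=\beta_{p(i)}$ and $U_i\equiv V_{p(i)}$. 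Running the same argument with the roles of the two sums exchanged shows $p$ is a bijection, giving $m=n$ and that $p$ is the desired permutation. The main obstacle is the well-definedness check for $c_U$: one must be careful that ``distinct'' is interpreted up to $\equiv$ so that distinct unit types genuinely have disjoint coefficients, and one must confirm that the factorisation axiom $\alpha\cdot T+\beta\cdot T\equiv(\alpha+\beta)\cdot T$ is respected --- this is exactly the axiom whose compatibility with $c_U$ relies on ring distributivity and is the place where the argument could break if $\Sc$ were not a ring.
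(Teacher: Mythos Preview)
Your coefficient-invariant approach is essentially a cleaner repackaging of the paper's terse ``case-by-case analysis over the equivalence rules'': both amount to checking each generating axiom of $\equiv$.

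There is, however, a genuine gap. The map $c_U$ alone does not separate $\equiv$-classes: because this system deliberately has no absorption axiom (the paper stresses that $0\cdot T$ is \emph{not} neutral for $+$), the types $0\cdot U_1$ and $0\cdot U_2$ with $U_1\not\equiv U_2$ are non-equivalent, yet $c_V$ vanishes on both for every $V$. Your step ``one defines $p(i)$ to be that index'' therefore fails when $\alpha_i=0$: the equation $\alpha_i=\sum_{j:V_j\equiv U_i}\beta_j$ may hold with an empty sum, leaving no index to assign to $p(i)$, and the bijection argument for $m=n$ collapses. The repair is to carry a second invariant alongside $c_U$: a support $\mathrm{supp}(T)$ recording the set of $\equiv$-classes of unit types (and general variables) occurring in $T$, with $\mathrm{supp}(\alpha\cdot T')=\mathrm{supp}(T')$ regardless of $\alpha$ and $\mathrm{supp}(T'+R')=\mathrm{supp}(T')\cup\mathrm{supp}(R')$. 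The same axiom-by-axiom verification you describe shows $\mathrm{supp}$ is $\equiv$-invariant; applied to the hypothesis it yields $\{[U_1],\dots,[U_n]\}=\{[V_1],\dots,[V_m]\}$, which gives $m=n$ and the permutation $p$ outright, after which your $c_{U_i}$ correctly delivers $\alpha_i=\beta_{p(i)}$.
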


\begin{lemma}[Equivalences $\forall_I$]\label{lem:equivforall}~
  \begin{enumerate}
    \item\label{it:equivforall1}
      $\sui{n}\alpha_i\cdot U_i\equiv\suj{m}\beta_j\cdot V_j\Leftrightarrow\sui{n}\alpha_i\cdot\forall X.U_i\equiv\suj{m}\beta_j\cdot\forall X.V_j$.
    \item\label{it:equivforall2} $\sui{n}\alpha_i\cdot\forall X.U_i\equiv\suj{m}\beta_j\cdot V_j\Rightarrow\forall V_j,\exists W_j~/~V_j\equiv\forall X.W_j$.
    \item\label{it:equivforall3} $T\equiv R\Rightarrow T[A/X]\equiv R[A/X]$.
      \qed
  \end{enumerate}
\end{lemma}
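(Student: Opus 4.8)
The plan is to dispatch part~\ref{it:equivforall3} as a routine induction, and to reduce parts~\ref{it:equivforall1} and~\ref{it:equivforall2} to Lemma~\ref{lem:equivdistinctscalars} by collecting like terms, isolating a single inversion principle for $\forall$ as the one genuinely delicate point. For part~\ref{it:equivforall3}, I would induct on the derivation of $T\equiv R$, viewing $\equiv$ as the reflexive, symmetric, transitive and congruent closure of the six generating axioms. Each axiom is stable under the substitution $[A/X]$ because substitution commutes with $+$ and $\cdot$ (\ie $(\alpha\cdot T+\beta\cdot R)[A/X]=\alpha\cdot T[A/X]+\beta\cdot R[A/X]$), so applying $[A/X]$ to both sides of an axiom instance yields another instance of the same axiom; the closure cases are immediate since substitution commutes with every type constructor, taking the usual care to avoid capture when $X$ meets a $\forall$-binder. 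This yields $T[A/X]\equiv R[A/X]$.

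For part~\ref{it:equivforall1}, I would first normalise each side by grouping $\equiv$-equivalent unit types: using commutativity, associativity and the factorisation axiom $\alpha\cdot T+\beta\cdot T\equiv(\alpha+\beta)\cdot T$ together with congruence, one gets $\sui{n}\alpha_i\cdot U_i\equiv\suk{p}\gamma_k\cdot U_k'$ with the $U_k'$ pairwise non-equivalent, and similarly for the $V_j$. The $(\Rightarrow)$ direction is then cheap: the very same grouping derivation, read with every $U_i$ replaced by $\forall X.U_i$, is still valid, since its only nontrivial ingredient is $U_i\equiv U_k'\Rightarrow\forall X.U_i\equiv\forall X.U_k'$, which is mere congruence; combining this with Lemma~\ref{lem:equivdistinctscalars} applied to the distinct normalised sides, the matching of coefficients and of unit types transports through the wrapping. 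The $(\Leftarrow)$ direction is where the real content lies: after the analogous normalisation of the wrapped sums and an application of Lemma~\ref{lem:equivdistinctscalars}, I am left needing $\forall X.U\equiv\forall X.V\Rightarrow U\equiv V$ in order to unwrap, which is precisely the $n=m=1$ instance of the statement and so must be proved separately.

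For part~\ref{it:equivforall2}, I would again normalise both sides, apply Lemma~\ref{lem:equivdistinctscalars} to the distinct forms to obtain $V_l'\equiv\forall X.U_{p(l)}'$ for each representative unit type, and then observe that every original $V_j$ is $\equiv$ to its group representative $V_l'$; hence $V_j\equiv\forall X.U_{p(l)}'$, which furnishes the required $W_j$.

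The main obstacle is the inversion principle $\forall X.U\equiv\forall X.V\Rightarrow U\equiv V$ on unit types: it cannot be imported from part~\ref{it:equivforall1} (it is a special case of it), and a naive induction on the $\equiv$-derivation fails because transitivity passes through non-$\forall$-headed detours such as $1\cdot\forall X.U$. The clean route I would take is to orient the six equivalence axioms into a rewrite system on types, mirroring the confluent and terminating E/F rules already used for terms, and prove it confluent and strongly normalising, so that $S\equiv S'$ holds iff $S$ and $S'$ share a normal form. Since these rules only rearrange the $+/\cdot$ structure, and a top-level $\forall X$ can be rewritten only inside its body, the normal form of $\forall X.U$ is $\forall X$ applied to the normal form of $U$; equal normal forms of $\forall X.U$ and $\forall X.V$ then force equal normal forms of $U$ and $V$, \ie $U\equiv V$. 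The same normal-form machinery is presumably what underlies Lemma~\ref{lem:equivdistinctscalars}, so this is the single piece of infrastructure on which the whole cluster of equivalence lemmas rests.
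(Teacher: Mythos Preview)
Your proposal is correct but considerably more elaborate than the paper's argument. The paper's proof of item~(1) is three lines: apply Lemma~\ref{lem:equivdistinctscalars} directly to the two sums, obtain $m=n$ together with $\alpha_i=\beta_{p(i)}$ and $U_i\equiv V_{p(i)}$ (respectively $\forall X.U_i\equiv\forall X.V_{p(i)}$ in the $\Leftarrow$ direction), and conclude. Item~(2) is dismissed as ``similar'', and item~(3) as a straightforward induction, exactly as you do.

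The substantive difference is how the inversion $\forall X.U\equiv\forall X.V\Rightarrow U\equiv V$ is obtained. You treat this as the delicate point and propose to orient the six axioms into a confluent, terminating rewrite system on types to extract it from uniqueness of normal forms. The paper instead relies on the observation, stated in the proof of Lemma~\ref{lem:equivdistinctscalars}, that ``since all the equivalences are stated between terms with sums and/or scalars, when $U$ and $V$ are unit types, $U\equiv V\Leftrightarrow U=V$''. With that in hand, $\forall X.U_i\equiv\forall X.V_{p(i)}$ collapses to syntactic equality and the inversion is immediate. Your rewrite-system route is a perfectly sound way to justify this observation (and arguably the cleanest way to make it fully precise, since congruence inside arrow codomains means the literal statement $U\equiv V\Rightarrow U=V$ is too strong; what is actually needed and true is rigidity of the outermost constructor). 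So the machinery you build is not wrong, just heavier than what the paper spends: you are essentially supplying the infrastructure that the paper's proof of Lemma~\ref{lem:equivdistinctscalars} takes for granted. Your explicit normalisation step, collapsing repeated unit types before invoking Lemma~\ref{lem:equivdistinctscalars}, is also a point of extra care that the paper leaves implicit.
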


For the proof of subject reduction, we use the standard strategy developed by
Barendregth~\cite{Barendregt92}\footnote{Note that Barendregth's original proof
  contains a mistake~\cite{stackexchange}. We use the corrected proof proposed
  in~\cite{ArrighiDiazcaroLMCS12}}. It consists in defining a relation betwen
  types of the form $\forall X.T$ and $T$. For our vectorial type system, we
  take into account linear combinations of types

\begin{definition}\label{def:order} For any types $T, R$, any context $\Gamma$ and any term $\ve t$ such that
  $$\prooftree\Gamma\vdash\ve t:T
  \justifies
  \prooftree\vdots
  \justifies\Gamma\vdash\ve t:R
\endprooftree
\endprooftree$$
\begin{enumerate}
  \item if $X\notin\FV{\Gamma}$, write $T\succ^{\ve t}_{X,\Gamma} R$ if either
    \begin{itemize}
      \item $T\equiv\sui{n}\alpha_i\cdot U_i$ and  $R\equiv\sui{n}\alpha_i\cdot\forall X.U_i$,\quad or
      \item $T\equiv\sui{n}\alpha_i\cdot\forall X.U_i$ and $R\equiv \sui{n}\alpha_i\cdot U_i[A/X]$.
    \end{itemize}
  \item if $\V$ is a set of type variables such that $\V\cap\FV{\Gamma}=\emptyset$, we define $\succeq^{\ve t}_{\V,\Gamma}$ inductively by
    \begin{itemize}
      \item If $X\in \V$ and $T\succ^{\ve t}_{X,\Gamma} R$, then $T\succeq^{\ve t}_{\{X\},\Gamma} R$.
      \item If $\V_1,\V_2\subseteq\V$, $T\succeq^{\ve t}_{\V_1,\Gamma} R$ and $R\succeq^{\ve t}_{\V_2,\Gamma} S$, then $T\succeq^{\ve t}_{\V_1\cup\V_2,\Gamma} S$.
      \item If $T\equiv R$, then $T\succeq^{\ve t}_{\V,\Gamma} R$.
    \end{itemize}
\end{enumerate}
\end{definition}

\begin{example}
  Let the following be a valid derivation.
  $$\prooftree
  \prooftree
  \prooftree
  \prooftree
  \prooftree\Gamma\vdash\ve t:T\qquad T\equiv \sui{n}\alpha_i\cdot U_i
  \justifies\Gamma\vdash\ve t:\sui{n}\alpha_i\cdot U_i\qquad \varu{X}\notin\FV{\Gamma}
  \using\equiv
\endprooftree
\justifies\Gamma\vdash\ve t:\sui{n}\alpha_i\cdot\forall\varu{X}.U_i
\using\forall_\varu{I}
      \endprooftree
      \justifies\Gamma\vdash\ve t:\sui{n}\alpha_i\cdot U_i[V/\varu{X}]
      \using\forall_\varu{E}
    \endprooftree
    \vara{Y}\notin\FV{\Gamma}
    \justifies\Gamma\vdash\ve t:\sui{n}\alpha_i\cdot\forall\vara{Y}.U_i[V/\varu{X}]
    \using\forall_\varu{I}
  \endprooftree
  \qquad
  \sui{n}\alpha_i\cdot\forall\vara{Y}.U_i[V/\varu{X}]\equiv R
  \justifies\Gamma\vdash\ve t:R
  \using\equiv
\endprooftree$$
Then $T\succeq^{\ve t}_{\{\varu{X},\vara{Y}\},\Gamma} R$.
\end{example}
\medskip

\noindent Note that this relation is stable under reduction in the following way:

\begin{lemma}[$\succeq$-stability]\label{lem:subjectreductionofrelation}
  If $T\succeq^{\ve t}_{\V,\Gamma} R$, $\ve t\to\ve r$ and $\Gamma\vdash\ve r: T$, then $T\succeq^{\ve r}_{\V,\Gamma} R$.
  \qed
\end{lemma}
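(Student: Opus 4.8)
The plan is to proceed by induction on the derivation of $T\succeq^{\ve t}_{\V,\Gamma} R$, following the three clauses of Definition~\ref{def:order}. The guiding observation is that the type-level side conditions appearing in the definition of $\succ^{\ve t}_{X,\Gamma}$ — namely the equivalences $T\equiv\sui{n}\alpha_i\cdot U_i$ and $R\equiv\sui{n}\alpha_i\cdot\forall X.U_i$, and their instantiated counterpart — do not mention the term at all. The term $\ve t$ enters the definition only through the requirement that the displayed derivation from $\Gamma\vdash\ve t:T$ to $\Gamma\vdash\ve t:R$ exists, and this derivation uses exclusively the rules $(\forall_I)$, $(\forall_E)$ and $(\equiv)$, whose applicability depends solely on the types and the context, not on the syntactic shape of the subject. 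Hence the whole argument reduces to re-deriving the relevant judgements for $\ve r$ out of the single hypothesis $\Gamma\vdash\ve r:T$.

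For the base case $T\succ^{\ve t}_{X,\Gamma} R$ with $X\in\V$, the type equivalences are inherited verbatim, so it suffices to produce a derivation of $\Gamma\vdash\ve r:R$ from $\Gamma\vdash\ve r:T$. In the first alternative I would apply $(\forall_I)$ — whose side condition $X\notin\FV{\Gamma}$ holds because $X\in\V$ and $\V\cap\FV{\Gamma}=\emptyset$ — followed by $(\equiv)$; in the second alternative I would instead apply $(\forall_E)$ followed by $(\equiv)$. In either case this yields $T\succ^{\ve r}_{X,\Gamma} R$, hence $T\succeq^{\ve r}_{\{X\},\Gamma} R$.

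For the transitivity clause, where $T\succeq^{\ve t}_{\V_1,\Gamma} R$ and $R\succeq^{\ve t}_{\V_2,\Gamma} S$ give $T\succeq^{\ve t}_{\V_1\cup\V_2,\Gamma} S$, I would first apply the induction hypothesis to the left premise using $\Gamma\vdash\ve r:T$, obtaining $T\succeq^{\ve r}_{\V_1,\Gamma} R$. By Definition~\ref{def:order} this relation carries with it a derivation of $\Gamma\vdash\ve r:R$, which is exactly the hypothesis needed to apply the induction hypothesis to the right premise, yielding $R\succeq^{\ve r}_{\V_2,\Gamma} S$. Composing the two gives $T\succeq^{\ve r}_{\V_1\cup\V_2,\Gamma} S$. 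The equivalence clause $T\equiv R$ is immediate: the equivalence is term-independent, and $(\equiv)$ turns $\Gamma\vdash\ve r:T$ into $\Gamma\vdash\ve r:R$.

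The only genuinely delicate point is the bookkeeping of the intermediate typing judgement in the transitivity case: one must establish $\Gamma\vdash\ve r:R$ \emph{before} invoking the induction hypothesis on the second relation, which is why the induction must be phrased so that the conclusion $T\succeq^{\ve r}_{\V_1,\Gamma} R$ already supplies this judgement. It is worth noting that the reduction hypothesis $\ve t\to\ve r$ is not used in the proof itself; its role is only to guarantee, in the setting where this lemma is invoked, that the premise $\Gamma\vdash\ve r:T$ is available.
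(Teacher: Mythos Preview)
Your proposal is correct and follows essentially the same approach as the paper. The paper's proof is terser---it simply asserts ``it suffices to show this for $\succ^{\ve t}_{X,\Gamma}$'' and then handles the two alternatives of the base case exactly as you do---whereas you spell out the induction on the structure of $\succeq$ and in particular the bookkeeping of the intermediate judgement $\Gamma\vdash\ve r:R$ in the transitivity clause, which the paper leaves implicit.
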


\noindent The following lemma states that if two arrow types are ordered, then they are equivalent up to some substitutions.

\begin{lemma}[Arrows comparison]\label{lem:arrowscomp} If
  $V\to R\succeq^{\ve t}_{\V,\Gamma} \forall\vec X.(U\to T)$, then $U\to T\equiv(V\to R)[\vec{A}/\vec{Y}]$, with $\vec Y\notin \FV{\Gamma}$.
  \qedhere
\end{lemma}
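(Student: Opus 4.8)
The plan is to analyze the structure of the relation $V\to R\succeq^{\ve t}_{\V,\Gamma}\forall\vec X.(U\to T)$ by induction on its inductive definition (Definition~\ref{def:order}, part~2), tracking how an arrow type on the left can end up equivalent to a $\forall$-prefixed arrow type on the right. The key observation is that the only way the relation moves between a bare arrow and a quantified one is through the two clauses of $\succ^{\ve t}_{X,\Gamma}$: either a $\forall$ is \emph{introduced} (going from $\sui{n}\alpha_i\cdot U_i$ to $\sui{n}\alpha_i\cdot\forall X.U_i$) or a $\forall$ is \emph{eliminated} (performing a substitution $[A/X]$). Since the left-hand side $V\to R$ is a single unit type with no leading $\forall$ and coefficient $1$, the characterisation of types (Lemma~\ref{lem:typecharact}) forces every intermediate type in the chain to be (equivalent to) a single unit type with coefficient $1$, so the whole chain stays within unquantified-or-singly-quantified arrow types.

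First I would set up the induction on the derivation of $\succeq^{\ve t}_{\V,\Gamma}$. In the base case $T\equiv R$ (the reflexive/equivalence clause), we have $V\to R'\equiv\forall\vec X.(U\to T)$ with $\vec X$ empty, and Lemma~\ref{lem:equivdistinctscalars} applied to two singleton sums shows $V\to R'\equiv U\to T$ directly, giving the claim with an empty substitution. In the single-step base case $X\in\V$ with $V\to R'\succ^{\ve t}_{X,\Gamma}\forall\vec X.(U\to T)$, I would appeal to the two-case definition of $\succ$: the introduction case yields $\forall X.(V\to R')$ on the right, matching $\forall\vec X.(U\to T)$ with $U\to T\equiv V\to R'$ up to the bound variable, while the elimination case yields $(V\to R')[A/X]$, which is exactly an instance $U\to T\equiv(V\to R)[A/X]$. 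For the transitivity clause, I would compose: given $V\to R'\succeq^{\ve t}_{\V_1,\Gamma} S$ and $S\succeq^{\ve t}_{\V_2,\Gamma}\forall\vec X.(U\to T)$, the characterisation lemma forces $S$ itself to be equivalent to a single (possibly quantified) arrow type $\forall\vec Y'.(U'\to T')$, so the induction hypothesis applies to both halves and the substitutions compose into a single $[\vec A/\vec Y]$.

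The side condition $\vec Y\notin\FV{\Gamma}$ is maintained throughout because each $\forall$-introduction step in Definition~\ref{def:order} already carries the hypothesis $X\notin\FV{\Gamma}$, so the variables being instantiated are exactly the fresh ones, and Lemma~\ref{lem:equivforall} (particularly parts~\ref{it:equivforall1} and~\ref{it:equivforall2}) lets me commute the equivalences past the $\forall$-binders and confirm that the matched type under the quantifier is genuinely an arrow. The main obstacle I anticipate is the bookkeeping in the transitivity case: I must argue that the intermediate type $S$ cannot secretly become a nontrivial linear combination (several distinct unit summands, or a nonunit coefficient) that would break the ``single arrow'' shape. This is where Lemma~\ref{lem:typecharact} together with Lemma~\ref{lem:equivdistinctscalars} does the real work, since comparing $S$ to the singleton sum $V\to R$ (coefficient $1$, one unit type) pins down $S$ to be equivalent to a single unit type with coefficient $1$ as well; propagating this rigidity through each $\succ$-step, and correctly accumulating the substitutions $[A/X]$ into the vectorial $[\vec A/\vec Y]$ while keeping $\vec Y$ disjoint from $\FV{\Gamma}$, is the delicate part of the argument.
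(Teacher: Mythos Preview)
Your direct induction on the derivation of $\succeq$ is a different route from the paper's. The paper introduces an auxiliary ``forall-stripping'' map $(\cdot)^\circ$ on types (with $(\forall X.T)^\circ = T^\circ$ and $(U\to T)^\circ = U\to T$), proves it respects $\equiv$ and commutes with substitution, and then shows the general fact that $V\succeq\forall\vec X.U$ implies $U^\circ\equiv V^\circ[\vec A/\vec Y]$. The lemma follows immediately because both $U\to T$ and $V\to R$ are fixed points of $(\cdot)^\circ$. The advantage of this device is that it is insensitive to how many $\forall$'s the intermediate types carry, so transitivity is automatic.

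Your plan, by contrast, has a real gap precisely at transitivity. You write that ``the induction hypothesis applies to both halves,'' but the statement you are proving---and hence your IH---has a bare arrow type $V\to R$ on the \emph{left}. In the split $V\to R\succeq_{\V_1} S$ and $S\succeq_{\V_2}\forall\vec X.(U\to T)$, you yourself argue that $S$ is (equivalent to) a single unit type of the form $\forall\vec Y'.(U'\to T')$; but then the second half starts from a $\forall$-prefixed type, not a bare arrow, so the IH does \emph{not} apply to it. To repair this you would have to strengthen the statement being proved to allow $\forall$-prefixed arrows on the left, e.g.\ ``if $\forall\vec Z.(V\to R)\succeq\forall\vec X.(U\to T)$ then $U\to T\equiv(V\to R)[\vec A/\vec Y]$.'' That strengthening is exactly what the paper's $(\cdot)^\circ$ operator achieves implicitly and uniformly.

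A minor point: in your single-step base case, the elimination clause of $\succ$ cannot fire. Elimination requires the \emph{left} side to be (equivalent to) $\forall X.W$, which $V\to R$ is not; so that subcase is vacuous rather than producing $(V\to R)[A/X]$ as you suggest. Only the introduction clause is live there.
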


Before proving Theorem~\ref{thm:subjectreduction}, we need to prove some basic properties of the system.

\begin{lemma}[Scalars]\label{lem:scalars}
  For any context $\Gamma$, term $\ve t$, type $T$ and scalar $\alpha$, if
  $\Gamma\vdash\alpha\cdot\ve{t}: T$, then there exists a type $R$ such that
  $T\equiv\alpha\cdot R$ and $\Gamma\vdash\ve{t}: R$.  Moreover, if the minimum
  size of the derivation of $\Gamma\vdash\alpha\cdot\ve t:T$ is $s$, then if
  $T=\alpha\cdot R$, the minimum size of the derivation of $\Gamma\vdash\ve
  t:R$ is at most $s-1$, in other case, its minimum size is at most $s-2$.
  \qed
\end{lemma}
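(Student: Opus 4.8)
The plan is to proceed by induction on the structure of the derivation of $\Gamma\vdash\alpha\cdot\ve t:T$, examining which typing rule was applied last. The term $\alpha\cdot\ve t$ is a scaled term, so only a handful of rules can produce it at the root: the scalar introduction rule $(\alpha_I)$, the forall rules $(\forall_I)$ and $(\forall_E)$, and the equivalence rule $(\equiv)$. Crucially, rules such as $(ax)$, $(\to_I)$, $(0_I)$, $(+_I)$ and $(\to_E)$ cannot apply, since their subjects are not of the form $\alpha\cdot\ve t$; this is what makes the case analysis tractable.

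First I would treat the base case where the last rule is $(\alpha_I)$. Here the premise is exactly $\Gamma\vdash\ve t:R$ with conclusion $\Gamma\vdash\alpha\cdot\ve t:\alpha\cdot R$, so we take $T\equiv\alpha\cdot R$ directly, and the derivation of $\Gamma\vdash\ve t:R$ has size one less than the original, matching the claimed $s-1$ bound. Next, for the $(\equiv)$ case, the last step replaces $T$ by an equivalent $T'$; by the induction hypothesis applied to the premise we obtain $R$ with $T\equiv\alpha\cdot R$ and $\Gamma\vdash\ve t:R$, and transitivity of $\equiv$ closes the case without increasing the derivation size beyond what is needed. The interesting cases are $(\forall_I)$ and $(\forall_E)$: here the type has the shape $\sum_i\alpha_i\cdot\forall X.U_i$ (resp.\ $\sum_i\alpha_i\cdot U_i[A/X]$), and by Lemma~\ref{lem:typecharact} together with Lemma~\ref{lem:equivforall} I can pull the scalar $\alpha$ out of the whole linear combination, apply the induction hypothesis to the premise to recover the un-scaled type, and then re-apply the same forall rule to reconstruct $\Gamma\vdash\ve t:R$. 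This is where the second, quantitative part of the statement comes into play: when the scalar must be extracted through an equivalence step rather than directly off an $(\alpha_I)$, one loses an extra unit of derivation size, which is exactly why the bound degrades from $s-1$ to $s-2$ in the case where $T$ is only equivalent to, rather than syntactically equal to, $\alpha\cdot R$.

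The main obstacle I expect is the bookkeeping of derivation sizes in the interaction between $(\equiv)$ and the forall rules. The cleanest route is to first establish the existential part ($T\equiv\alpha\cdot R$ with $\Gamma\vdash\ve t:R$) by the straightforward induction above, and only then revisit the argument to track sizes, distinguishing the two regimes: when the root rule is $(\alpha_I)$ we get the tight $s-1$ bound, whereas any case routed through an $(\equiv)$ step (needed precisely to massage the type into the canonical form of Lemma~\ref{lem:typecharact} before peeling off $\alpha$) costs the extra step and yields $s-2$. The key technical inputs are Lemma~\ref{lem:typecharact} to put $T$ in canonical form and Lemma~\ref{lem:equivforall}, parts~\ref{it:equivforall1} and~\ref{it:equivforall2}, to commute the scalar factorisation past the $\forall$ binder so that the induction hypothesis can be invoked on a genuinely smaller derivation.
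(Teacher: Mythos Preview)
Your proposal is correct and follows essentially the same approach as the paper: induction on the typing derivation with case analysis on the last rule being one of $(\alpha_I)$, $(\forall_I)$, $(\forall_E)$, $(\equiv)$, using Lemma~\ref{lem:typecharact} to normalise the premise type and Lemma~\ref{lem:equivforall} to commute the scalar past the $\forall$ binder in the quantifier cases. The paper's size bookkeeping in the $\forall$ cases is slightly finer than what you sketch (it distinguishes the subcase $n=1$, where no extra $(\equiv)$ step is needed after reapplying the $\forall$ rule, from the general case), but your identification of where the $s-1$ versus $s-2$ split arises is accurate.
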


\noindent
The following lemma shows that the type for $\ve 0$ is always $0\cdot T$.

\begin{lemma}[Type for zero]\label{lem:termzero}
  Let $\ve t=\ve 0$ or $\ve t=\alpha\cdot\ve 0$, then $\Gamma\vdash\ve t:T$ implies $T\equiv 0\cdot R$.
  \qed
\end{lemma}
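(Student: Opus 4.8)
The plan is to proceed by induction on the typing derivation, handling first the case $\ve t=\ve 0$ and then deducing the case $\ve t=\alpha\cdot\ve 0$ from it. The first observation that drives the whole argument is that the term $\ve 0$ is syntactically neither a variable, an abstraction, an application, a sum, nor a scaled term, so the only rules that can conclude a derivation of $\Gamma\vdash\ve 0:T$ are $(0_I)$, $(\forall_I)$, $(\forall_E)$ and $(\equiv)$. The induction therefore only has to examine these four cases. The case $(0_I)$ is the base case: it gives $T=0\cdot T'$ directly, hence $T\equiv 0\cdot T'$. The case $(\equiv)$ is immediate by transitivity of $\equiv$, since the premise $\Gamma\vdash\ve 0:T'$ with $T'\equiv T$ yields $T'\equiv 0\cdot R$ by the induction hypothesis, whence $T\equiv 0\cdot R$.

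The heart of the argument lies in the $(\forall_I)$ and $(\forall_E)$ cases, where the type is genuinely transformed. The key is the observation that a type being of the form $0\cdot R$ is equivalent, via Lemma~\ref{lem:typecharact} together with Lemma~\ref{lem:equivdistinctscalars}, to all of its scalar coefficients being $0$ once the type is written as a combination of distinct unit types and distinct general variables. For $(\forall_I)$, the premise $\Gamma\vdash\ve 0:\sui{n}\alpha_i\cdot U_i$ gives, by the induction hypothesis, $\sui{n}\alpha_i\cdot U_i\equiv 0\cdot R$; grouping the $U_i$ into distinct classes and comparing distinct-element decompositions forces every grouped coefficient to vanish, so that the conclusion type satisfies $\sui{n}\alpha_i\cdot\forall X.U_i\equiv\sui{n}0\cdot\forall X.U_i\equiv 0\cdot\sui{n}\forall X.U_i$, which is again of the required form. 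The $(\forall_E)$ case is handled identically, using $\sui{n}0\cdot U_i[A/X]\equiv 0\cdot\sui{n} U_i[A/X]$, and Lemma~\ref{lem:equivforall} is what guarantees that the comparison of decompositions is compatible with the $\forall$ binder (congruence of $\equiv$ ensures equivalent $U_i$ remain equivalent under $\forall X.(-)$ and under $(-)[A/X]$).

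Finally, for $\ve t=\alpha\cdot\ve 0$, the Scalars Lemma (Lemma~\ref{lem:scalars}) provides a type $R$ with $T\equiv\alpha\cdot R$ and $\Gamma\vdash\ve 0:R$; the case just established yields $R\equiv 0\cdot R'$, so that $T\equiv\alpha\cdot(0\cdot R')\equiv(\alpha\times 0)\cdot R'\equiv 0\cdot R'$ using the scalar equivalences. This disposes of both forms of $\ve t$.

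I expect the $(\forall_I)/(\forall_E)$ step to be the main obstacle: one must be sure that nullness (being $\equiv 0\cdot R$) is preserved when a bound variable is generalised or instantiated inside the summands. The point is that manipulating $0\cdot R$ syntactically is awkward, so the argument deliberately routes through the ``all coefficients are zero'' characterisation, where the transformations of $(\forall_I)$ and $(\forall_E)$ act summand-by-summand and visibly leave every coefficient equal to $0$.
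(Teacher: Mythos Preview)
Your proposal is correct and follows essentially the same approach as the paper's proof: induction on the typing derivation, with the $\forall_I/\forall_E$ cases handled by the ``all coefficients vanish'' argument via Lemmas~\ref{lem:typecharact} and~\ref{lem:equivdistinctscalars} (and Lemma~\ref{lem:equivforall} to transport the equivalence through the binder). The only minor difference is that the paper treats $\ve t=\ve 0$ and $\ve t=\alpha\cdot\ve 0$ in a single simultaneous induction (the $\alpha_I$ case falling back on the premise $\Gamma\vdash\ve 0:T$), whereas you first settle $\ve 0$ and then invoke Lemma~\ref{lem:scalars} for $\alpha\cdot\ve 0$; both routes are valid.
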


\begin{lemma}[Sums]\label{lem:sums}
  If $\Gamma\vdash\ve t+\ve r:S$, then $S\equiv T+R$ with $\Gamma\vdash\ve t:T$ 
  and $\Gamma\vdash\ve r:R$.
  Moreover, if the size of the derivation of $\Gamma\vdash\ve t+\ve r:S$ is $s$, 
  then if $S=T+R$, the minimum sizes of the derivations of $\Gamma\vdash\ve t:T$ 
  and $\Gamma\vdash\ve r:R$ are at most $s-1$, and if $S\neq T+R$, the 
  minimum sizes of these derivations are at most $s-2$.
  \qed
\end{lemma}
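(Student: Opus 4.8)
The plan is to prove the statement by induction on the size $s$ of the given derivation of $\Gamma\vdash\ve{t}+\ve{r}:S$, with a case analysis on its last rule. Since the subject $\ve{t}+\ve{r}$ is syntactically a sum, the only rules that can conclude such a sequent are $(+_I)$, $(\equiv)$, $(\forall_I)$ and $(\forall_E)$: the axiom, $(\to_I)$, $(\to_E)$, $(0_I)$ and $(\alpha_I)$ each have a subject of a different shape ($x$, $\lambda x.\ve{t}$, an application, $\ve{0}$, or $\alpha\cdot\ve{t}$ respectively). This immediately splits the proof into four cases, of which only two are genuinely involved.

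The base case is $(+_I)$, whose premises are exactly $\Gamma\vdash\ve{t}:T$ and $\Gamma\vdash\ve{r}:R$ with $S=T+R$ literally, so both subderivations have size at most $s-1$, matching the bound claimed when $S=T+R$. The case $(\equiv)$ is routine: the rule derives $S$ from a premise $\Gamma\vdash\ve{t}+\ve{r}:S'$ with $S'\equiv S$ of size $s-1$, and the induction hypothesis yields $T,R$ with $T+R\equiv S'\equiv S$, the equivalences composing by transitivity of $\equiv$.

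The heart of the proof is the pair of cases $(\forall_I)$ and $(\forall_E)$, where the quantifier must be distributed over the sum. Consider $(\forall_I)$: its premise is $\Gamma\vdash\ve{t}+\ve{r}:\sui{n}\alpha_i\cdot U_i$ with $X\notin\FV{\Gamma}$, and its conclusion has $S\equiv\sui{n}\alpha_i\cdot\forall X.U_i$. The induction hypothesis applied to the premise gives $\Gamma\vdash\ve{t}:T'$ and $\Gamma\vdash\ve{r}:R'$ with $T'+R'\equiv\sui{n}\alpha_i\cdot U_i$. Using Lemma~\ref{lem:typecharact} I put $T'$ and $R'$ in canonical form; since their sum is equivalent to a linear combination of unit types \emph{only}, and since a residual $0\cdot\vara{Y}$ can never be discarded (there is no neutral element for $+$), the uniqueness of Lemma~\ref{lem:equivdistinctscalars} forces the general-variable summands of $T'$ and $R'$ to be absent. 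Hence each of $T'$ and $R'$ is equivalent to a pure linear combination of scaled unit types, so $(\forall_I)$ may be applied to $\ve{t}$ and to $\ve{r}$ separately (legitimate because $X\notin\FV{\Gamma}$) and the results recombined with $(+_I)$; Lemma~\ref{lem:equivforall}(\ref{it:equivforall1}) then guarantees that the resulting type is equivalent to $\sui{n}\alpha_i\cdot\forall X.U_i\equiv S$. The case $(\forall_E)$ is dual, with Lemma~\ref{lem:equivforall}(\ref{it:equivforall2}) playing the key role: it ensures that each unit-type summand of $T'$ and $R'$ is itself a $\forall X$-type, so that $(\forall_E)$ can be applied componentwise before recombining.

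The main obstacle is precisely this componentwise reshaping: one must show that the types $T'$ and $R'$ returned by the induction hypothesis \emph{individually} have the shape demanded by the rule to be reapplied — a pure sum of scaled unit types for $(\forall_I)$, and a sum of $\forall X$-types for $(\forall_E)$. This is where the absence of a neutral element for $+$ is essential, ruling out stray $0\cdot\vara{Y}$ summands that would otherwise block the decomposition, and where Lemmas~\ref{lem:typecharact}, \ref{lem:equivdistinctscalars} and \ref{lem:equivforall} do the real work. The size bounds of the ``moreover'' clause are then obtained by standard bookkeeping, tracking in each case the rules consumed and produced on top of the subderivations supplied by the induction hypothesis and using that the quantifier and equivalence steps can be permuted past $(+_I)$.
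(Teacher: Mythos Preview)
Your proposal is correct and follows essentially the same strategy as the paper's proof: induction on the derivation, with the same four-way case split and the same trivial handling of $(+_I)$ and $(\equiv)$. The only difference lies in the $\forall$ cases: the paper simply asserts that the types $T',R'$ returned by the induction hypothesis decompose as sums over overlapping subsets $N,M\subseteq\{1,\dots,n\}$ of the original summands $U_i$, whereas you go through Lemma~\ref{lem:typecharact} and the no-neutral-element observation to \emph{justify} why $T'$ and $R'$ cannot carry stray general-variable summands, and invoke Lemma~\ref{lem:equivforall} to check that the summands have the right $\forall X$-shape for $(\forall_E)$. Your route is slightly more explicit but amounts to the same argument; the paper's subset decomposition is exactly what your canonical-form analysis produces once the general-variable summands are ruled out. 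One small remark: you need not ``recombine with $(+_I)$'' --- the lemma only asks for separate derivations of $\Gamma\vdash\ve t:T$ and $\Gamma\vdash\ve r:R$ together with $T+R\equiv S$, which is what Lemma~\ref{lem:equivforall}(\ref{it:equivforall1}) already gives you.
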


\begin{lemma}[Applications]\label{lem:app}
  If $\Gamma\vdash(\ve t)~\ve r:T$, then $\Gamma\vdash\ve t:\sui{n}\alpha_i\cdot\forall\vec{X}.(U\to T_i)$ and $\Gamma\vdash\ve r:\suj{m}\beta_j\cdot U[\vec{A}_j/\vec{X}]$
  where $\sui{n}\suj{m}\alpha_i\times\beta_j\cdot T_i[\vec{A}_j/\vec{X}]\succeq^{(\ve t)\ve r}_{\V,\Gamma} T$ for some $\V$.
  \qed
\end{lemma}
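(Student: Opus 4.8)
The plan is to proceed by induction on the derivation of $\Gamma\vdash(\ve t)~\ve r:T$, inspecting the last rule applied. Since $(\ve t)~\ve r$ is syntactically an application, the rules $ax$, $0_I$, $\to_I$, $\alpha_I$ and $+_I$ cannot conclude the derivation, as their conclusions have the wrong subject shape. Hence the only possibilities are $\to_E$, $\forall_I$, $\forall_E$ and $\equiv$. The last three leave the subject untouched, which is exactly why they are meant to be absorbed into the relation $\succeq^{(\ve t)\ve r}_{\V,\Gamma}$ of Definition~\ref{def:order}; the rule $\to_E$ is the base case, producing the two premises demanded by the statement.

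For the base case, if the derivation ends with $\to_E$, the shape of that rule directly gives $\Gamma\vdash\ve t:\sui{n}\alpha_i\cdot\forall\vec X.(U\to T_i)$ and $\Gamma\vdash\ve r:\suj{m}\beta_j\cdot U[\vec A_j/\vec X]$ with $T=\sui{n}\suj{m}\alpha_i\times\beta_j\cdot T_i[\vec A_j/\vec X]$. Taking $\V=\emptyset$, the required relation holds by reflexivity, through the equivalence clause $T\equiv T$ of $\succeq$. If the derivation ends with $\equiv$, with premise $\Gamma\vdash(\ve t)~\ve r:T'$ and $T'\equiv T$, the induction hypothesis supplies the two premises together with a chain $\sui{n}\suj{m}\alpha_i\times\beta_j\cdot T_i[\vec A_j/\vec X]\succeq^{(\ve t)\ve r}_{\V,\Gamma}T'$; since $\succeq$ absorbs $\equiv$ and is transitive (clause~2 of Definition~\ref{def:order}), we append one equivalence step to reach $T$, keeping $\V$ unchanged.

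The two remaining cases are the crux. If the derivation ends with $\forall_I$, the premise reads $\Gamma\vdash(\ve t)~\ve r:\suk{p}\gamma_k\cdot U_k'$ and $T\equiv\suk{p}\gamma_k\cdot\forall X.U_k'$ with $X\notin\FV\Gamma$; note that the premise is a sum of scaled \emph{unit} types by the syntactic restriction of the rule. The induction hypothesis gives the two premises and a chain $\succeq^{(\ve t)\ve r}_{\V,\Gamma}$ ending at $\suk{p}\gamma_k\cdot U_k'$, and the first clause of Definition~\ref{def:order} yields $\suk{p}\gamma_k\cdot U_k'\succ^{(\ve t)\ve r}_{X,\Gamma}T$; promoting this to $\succeq^{(\ve t)\ve r}_{\{X\},\Gamma}$ and using transitivity enlarges the variable set to $\V\cup\{X\}$. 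The $\forall_E$ case is symmetric, the premise being a sum of scaled $\forall$-types, so that the second clause of Definition~\ref{def:order} lets us append the instantiation step $\suk{p}\gamma_k\cdot\forall X.U_k'\succ^{(\ve t)\ve r}_{X,\Gamma}\suk{p}\gamma_k\cdot U_k'[A/X]$. In either case, matching the premise type against the shape demanded by the relevant clause may be streamlined by Lemma~\ref{lem:typecharact}.

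The main obstacle is the bookkeeping around $\succeq$. One must ensure that every intermediate type in the accumulated chain is genuinely inhabited by $(\ve t)~\ve r$ under $\Gamma$, so that the standing hypothesis of Definition~\ref{def:order} (the existence of a derivation between the two types) is met; this holds because the topmost type is derivable by $\to_E$ from the two premises and every subsequent type is derivable from it by the very $\forall$/$\equiv$ step being absorbed. One must also track the freshness condition $X\notin\FV\Gamma$ at each $\forall$ step, which is available for $\forall_I$ from its side-condition and for $\forall_E$ by the Barendregt convention on bound type variables. Once these conditions are carried along the induction, the closure properties of $\succeq$ (generation by $\succ$, transitivity, and absorption of $\equiv$) assemble the desired chain.
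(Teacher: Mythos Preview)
Your proposal is correct and follows essentially the same approach as the paper: induction on the derivation, with $\to_E$ as the base case and the $\forall_I$, $\forall_E$, $\equiv$ cases each appending one step to the $\succeq$-chain via the corresponding clause of Definition~\ref{def:order}. Your treatment is in fact slightly more explicit than the paper's (you spell out the freshness condition needed for $\forall_E$ via the Barendregt convention and the inhabitation requirement underlying Definition~\ref{def:order}), but the structure and key ideas coincide.
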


\begin{lemma}[Abstractions]\label{lem:abs} If $\Gamma\vdash\lambda x.\ve t:T$, then $\Gamma,x:U\vdash\ve t:R$ where $U\to R\succeq^{\lambda x.\ve t}_{\V,\Gamma} T$ for some $\V$.
  \qed
\end{lemma}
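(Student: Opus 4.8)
The plan is to proceed by induction on the derivation of $\Gamma\vdash\lambda x.\ve t:T$, analysing the last rule used. Since $\lambda x.\ve t$ is a basis term (an abstraction), its syntactic shape immediately rules out most rules: it cannot be the conclusion of $(ax)$, which types a variable, of $(0_I)$, which types $\ve 0$, of $(\to_E)$, which types an application, of $(\alpha_I)$, which types a term of the form $\alpha\cdot\ve s$, nor of $(+_I)$, which types a sum $\ve s+\ve r$. Hence only four rules can conclude the derivation: $(\to_I)$, $(\forall_I)$, $(\forall_E)$ and $(\equiv)$.

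The base case is $(\to_I)$. Here the premise is exactly $\Gamma,x:U\vdash\ve t:R$ and the conclusion is $\Gamma\vdash\lambda x.\ve t:U\to R$, so $T=U\to R$. Taking $\V=\emptyset$, reflexivity of $\succeq$ (which follows from the equivalence clause of Definition~\ref{def:order} applied to $T\equiv T$) gives $U\to R\succeq^{\lambda x.\ve t}_{\emptyset,\Gamma}T$, as required.

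For the three inductive cases I apply the induction hypothesis to the immediate premise, which concerns the same term $\lambda x.\ve t$ and yields a context extension $\Gamma,x:U\vdash\ve t:R$ together with $U\to R\succeq^{\lambda x.\ve t}_{\V',\Gamma}T'$, where $T'$ is the premise type; I then append a single $\succeq$-step matching the rule and close by transitivity. In the $(\forall_I)$ case the premise is $\Gamma\vdash\lambda x.\ve t:\sum_i\alpha_i\cdot U_i$ and the conclusion is $T=\sum_i\alpha_i\cdot\forall X.U_i$ with $X\notin\FV{\Gamma}$; the first clause in the definition of $\succ$ gives $\sum_i\alpha_i\cdot U_i\succ^{\lambda x.\ve t}_{X,\Gamma}\sum_i\alpha_i\cdot\forall X.U_i$, and composing with the hypothesis yields $U\to R\succeq^{\lambda x.\ve t}_{\V'\cup\{X\},\Gamma}T$. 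The $(\forall_E)$ case is symmetric, using the second clause of $\succ$. The $(\equiv)$ case is handled by the equivalence clause of Definition~\ref{def:order}: from $T'\equiv T$ we obtain $T'\succeq^{\lambda x.\ve t}_{\V',\Gamma}T$, and transitivity again closes the goal.

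The one point requiring care, and the main obstacle, is the side condition built into Definition~\ref{def:order}: each $\succ^{\lambda x.\ve t}_{X,\Gamma}$ step must be witnessed by an actual sub-derivation typing $\lambda x.\ve t$, with the lower type sitting above the higher one. This is precisely the sub-derivation being traversed in the induction, so the witness is supplied for free by the derivation tree itself; I would make this explicit by carrying the relevant prefix of the derivation along with the induction hypothesis. The secondary bookkeeping consists in respecting the two sorts of variables, so that $(\forall_I)$ and $(\forall_E)$ are instantiated at $\varu X$ or $\vara X$ with the appropriate restriction on the substituted type, and in checking that the freshness condition $X\notin\FV{\Gamma}$ is inherited; both follow directly from the hypotheses of the corresponding rules.
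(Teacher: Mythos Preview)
Your proposal is correct and follows essentially the same approach as the paper: induction on the typing derivation, with $(\to_I)$ as the base case and $(\forall_I)$, $(\forall_E)$, $(\equiv)$ handled by appending one $\succeq$-step to the induction hypothesis via transitivity. Your treatment is in fact more explicit than the paper's (which compresses the two $\forall$-cases into one line and is loose about the index set $\V$), and your discussion of the derivation-witness side condition in Definition~\ref{def:order} is a useful clarification the paper leaves implicit.
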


\noindent
A basis term can always be given a unit type.
\begin{lemma}[Basis terms]\label{lem:basevectors}
  For any context $\Gamma$, type $T$ and basis term $\ve{b}$, if
  $\Gamma\vdash\ve{b}: T$ then there exists a unit type $U$ such
  that $T\equiv U$.
  \qed
\end{lemma}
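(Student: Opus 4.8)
The plan is to proceed by induction on the derivation of $\Gamma\vdash\ve{b}:T$, performing a case analysis on the last rule applied. Since $\ve{b}$ is a basis term, i.e.\ either a variable $x$ or an abstraction $\lambda x.\ve{s}$, the last rule cannot be $(0_I)$, $(+_I)$, $(\alpha_I)$ or $(\to_E)$: the conclusions of these rules are respectively of the form $\ve{0}$, $\ve{s}+\ve{r}$, $\alpha\cdot\ve{s}$ and $(\ve{s})\,\ve{r}$, none of which is a basis term. This leaves five cases to treat.

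The base cases are immediate. If the last rule is $(ax)$, then $\ve{b}$ is a variable and $T$ is literally the unit type $U$ declared for it in $\Gamma$, so $T\equiv U$. If the last rule is $(\to_I)$, then $\ve{b}=\lambda x.\ve{s}$ and $T=U\to R$, which is a unit type by definition. If the last rule is $(\equiv)$, the premise gives $\Gamma\vdash\ve{b}:T'$ with $T'\equiv T$; by the induction hypothesis $T'\equiv U$ for some unit type $U$, and transitivity of $\equiv$ yields $T\equiv U$.

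The two substantive cases are $(\forall_I)$ and $(\forall_E)$. For these I would first isolate a small auxiliary fact: if a sum $\sum_{i=1}^n\alpha_i\cdot U_i$ of unit types is equivalent to a unit type $V$, then after grouping equivalent summands it collapses to a single term of coefficient $1$ equivalent to $V$. This follows from Lemma~\ref{lem:equivdistinctscalars}: writing $V$ as $1\cdot V$ and the grouped sum with distinct unit types, the lemma forces the number of summands to be $1$, the scalar to be $1$, and the remaining unit type to be equivalent to $V$. With this fact in hand, the $(\forall_I)$ case is handled by applying the induction hypothesis to the premise $\sum_i\alpha_i\cdot U_i$, collapsing it to a single $1\cdot U_{i_0}$, and then using Lemma~\ref{lem:equivforall}(\ref{it:equivforall1}) to conclude that the derived type $\sum_i\alpha_i\cdot\forall X.U_i$ is equivalent to $\forall X.U_{i_0}$, which is a unit type. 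The $(\forall_E)$ case is symmetric: each summand $\forall X.U_i$ is itself a unit type, so the induction hypothesis collapses the premise $\sum_i\alpha_i\cdot\forall X.U_i$ to a single $1\cdot\forall X.U_{i_0}$, whence the conclusion $\sum_i\alpha_i\cdot U_i[A/X]$ is equivalent to $U_{i_0}[A/X]$; since substitution preserves unit types, this is again a unit type.

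The main obstacle I anticipate is the bookkeeping in the $(\forall_I)$ and $(\forall_E)$ cases, specifically ensuring that the grouping of equivalent summands is compatible on both sides of the quantifier, namely that $\forall X.U_i\equiv\forall X.U_j$ iff $U_i\equiv U_j$ and that substitution commutes with $\equiv$. Both facts are delivered by Lemma~\ref{lem:equivforall}, so once the auxiliary collapsing fact is cleanly stated the remaining steps are routine applications of the equivalence lemmas together with congruence of $\equiv$.
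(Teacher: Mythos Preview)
Your proposal is correct and follows essentially the same approach as the paper: induction on the typing derivation, with the trivial cases $(ax)$, $(\to_I)$, $(\equiv)$ handled immediately and the work concentrated in the $\forall$ rules. The paper's proof is terser---it packages the $\forall_I$/$\forall_E$ step using the auxiliary relation $\succeq^{\ve b}_{\V,\Gamma}$ and then appeals to ``a straightforward case analysis'' to conclude that the resulting sum is equivalent to a unit type---whereas you spell out that case analysis explicitly via Lemma~\ref{lem:equivdistinctscalars} and Lemma~\ref{lem:equivforall}; the content is the same.
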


The final stone for the proof of Theorem~\ref{thm:subjectreduction} is a lemma
relating well-typed terms and substitution.

\begin{lemma}[Substitution lemma]\label{lem:substitution}
  For any term ${\ve t}$, basis term $\ve b$, term variable $x$, context $\Gamma$, types $T$, $U$, type variable $X$ and type $A$, where $A$ is a unit type if $X$ is a unit variables, otherwise $A$ is a general type, we have,
  \begin{enumerate}
    \item\label{it:substitutionTypes} if $\Gamma\vdash\ve{t}: T$, then $\Gamma[A/X]\vdash\ve{t}: T[A/X]$;
    \item\label{it:substitutionTerms} if $\Gamma,x:U\vdash\ve t:T$, $\Gamma\vdash\ve b:U$ then $\Gamma\vdash\ve t[\ve b/x]: T$.
      \qed
  \end{enumerate}
\end{lemma}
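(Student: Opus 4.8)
The plan is to prove both parts of the Substitution Lemma by induction on the derivation of the typing judgement in the hypothesis, treating the two statements separately.

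For part~\ref{it:substitutionTypes} (type substitution), I would proceed by structural induction on the derivation of $\Gamma\vdash\ve t:T$, and show that applying the substitution $[A/X]$ uniformly to the context, the term's type, and every type appearing in the derivation yields a valid derivation of $\Gamma[A/X]\vdash\ve t:T[A/X]$. The axiom case is immediate since substitution commutes with the context lookup. For $(\to_I)$, $(\alpha_I)$, $(+_I)$ and $(0_I)$ the inductive hypothesis applies directly, using that $[A/X]$ distributes over $\alpha\cdot$, $+$, and $\to$ by definition of substitution on types. The $(\equiv)$ case uses Lemma~\ref{lem:equivforall}\eqref{it:equivforall3}, which states precisely that $T\equiv R\Rightarrow T[A/X]\equiv R[A/X]$. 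The delicate cases are the quantifier rules $(\forall_I)$ and $(\forall_E)$: here I must handle variable capture by assuming, via $\alpha$-renaming, that the bound variable $Y$ in $\forall Y$ is distinct from $X$ and does not occur free in $A$, so that the substitutions $[A/X]$ and $[B/Y]$ commute; and for $(\forall_I)$ I must check the side condition $Y\notin\FV{\Gamma}$ is preserved as $Y\notin\FV{\Gamma[A/X]}$, which holds under the renaming convention. The sorting discipline (unit variables $\varu X$ receive unit types, general variables $\vara X$ receive arbitrary types) must be respected throughout, but the hypothesis already encodes this constraint on $A$.

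For part~\ref{it:substitutionTerms} (term substitution), I would again use induction on the derivation of $\Gamma,x:U\vdash\ve t:T$, replacing the variable $x$ by the basis term $\ve b$. The base case is the axiom: if $\ve t=x$ then $T\equiv U$ (up to the axiom giving exactly $U$) and $\ve t[\ve b/x]=\ve b$, so the hypothesis $\Gamma\vdash\ve b:U$ closes the case; if $\ve t=y\neq x$ then substitution is vacuous and we invoke Lemma~\ref{lem:weakening} to drop $x:U$ from the context. The structural rules $(\to_I)$, $(\alpha_I)$, $(+_I)$, $(0_I)$, $(\equiv)$ propagate through the inductive hypothesis directly, with the usual care in $(\to_I)$ that the newly bound $\lambda$-variable is fresh for $\ve b$ to avoid capture. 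The interesting cases are $(\forall_I)$, $(\forall_E)$ and $(\to_E)$, where I need part~\ref{it:substitutionTypes} as a subroutine: for instance, when the argument $\ve b$ must be typed against an instantiated domain, or when applying a forall elimination that interacts with the substituted term.

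The main obstacle I expect is in the $(\to_E)$ case of part~\ref{it:substitutionTerms}, where the elaborate arrow-elimination rule forces one to reconcile the type-level substitutions $[\vec A_j/\vec X]$ appearing in the rule with the term-level substitution $[\ve b/x]$. Here I anticipate needing to invoke part~\ref{it:substitutionTypes} to push $[A/X]$ through the instantiations, and to use Lemma~\ref{lem:equivforall} to keep the equivalences aligned; careful bookkeeping of which type variables are fresh for $\ve b$ and for the $\vec A_j$ is essential. The remaining genuine subtlety is the interaction of the two sorts of type variables with the sorting constraint, but since the statement of the lemma already restricts $A$ to be a unit type exactly when $\vec X$ are unit variables, this is a matter of verification rather than a real difficulty.
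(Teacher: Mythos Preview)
Your proposal is correct and follows the same structure as the paper's proof: induction on the typing derivation for both parts, with the standard case analysis.

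One clarification: you over-anticipate the difficulty in the $(\to_E)$ case of part~\ref{it:substitutionTerms}. The paper's proof does \emph{not} invoke part~\ref{it:substitutionTypes} anywhere in part~\ref{it:substitutionTerms}. In $(\to_E)$, both premises $\Gamma,x:U\vdash\ve r:\sui{n}\alpha_i\cdot\forall\vec Y.(V\to T_i)$ and $\Gamma,x:U\vdash\ve u:\suj{m}\beta_j\cdot V[\vec B_j/\vec Y]$ carry $x:U$ in context, so the induction hypothesis applies directly to each, yielding the same types for $\ve r[\ve b/x]$ and $\ve u[\ve b/x]$; then one reapplies $(\to_E)$ verbatim with the same $\vec B_j$. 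The type-level substitutions $[\vec B_j/\vec Y]$ built into the rule do not interact with the term-level substitution $[\ve b/x]$ at all. Similarly, $(\forall_I)$ and $(\forall_E)$ in part~\ref{it:substitutionTerms} are just induction hypothesis plus the rule, with the only check being that the side condition $Y\notin\FV{\Gamma,x:U}$ already ensures $Y\notin\FV{\Gamma}$. The genuinely delicate substitution commutation you describe (between $[A/X]$ and the $[\vec B_j/\vec Y]$) occurs only in the $(\to_E)$ case of part~\ref{it:substitutionTypes}, which you correctly flagged under the quantifier discussion.
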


\noindent
The proof of subject reduction (Theorem~\ref{thm:subjectreduction}), follows by induction using the previous defined lemmas. It can be foun in full details in~\ref{app:srproof}.

\section{Strong normalisation}\label{sec:SN}

For proving strong normalisation of well-typed terms, we use
reducibility candidates, a well-known method described for example
in~\cite[Ch.~14]{GirardLafontTaylor89}. The technique is adapted to
linear combinations of terms.
Omitted proofs in this section can be found \mbox{in~\ref{app:SN}}.

A {\em neutral term} is a term that is not a $\lambda$-abstraction and
that does reduce to something. The set of {\em closed neutral terms} is denoted with $\Neutral$. We write $\CT$ for the set of closed terms and $\SN$ for the set of closed, strongly normalising terms. If $\ve t$ is any term, $\Red(\ve t)$ is the set of all terms $\ve t'$ such that $\ve t\to \ve t'$.
It is naturally extended to sets of terms. We say that a set $S$ of closed terms is a reducibility candidate, denoted with $S\in\RC$ if the following conditions are verified:
\begin{description}
  \item[$\RCn_1$] Strong normalisation:
    $S\subseteq\SN$.
  \item[$\RCn_2$] Stability under reduction: $\ve t\in S$ implies
    $\Red(\ve t)\subseteq S$.
  \item[$\RCn_3$] Stability under neutral expansion: If $\ve
    t\in\Neutral$ and $\Red(\ve t)\subseteq S$ then $\ve t\in S$.
  \item[$\RCn_4$] The common inhabitant: $\ve 0\in S$.
\end{description}

\noindent
We define the notion of {\em algebraic context} over a list of terms $\vec {\ve
t}$, with the following grammar:
\[
  F(\vec{\ve t}),G(\vec{\ve t})\quad::=\quad \ve t_i~|~ F(\vec{\ve t}) +
  G(\vec{\ve t})~|~\alpha\cdot F(\vec{\ve t}) ~|~ {\ve 0},
\]
where $\ve t_i$ is the $i$-th element of the list $\ve t$.  Given a set of
terms $S=\{\ve s_i\}_i$, we write $\mathcal{F}(S)$ for the set of terms of the
form $F(\vec{\ve s})$ when $F$ spans over algebraic contexts.

We introduce two conditions on contexts, which will be handy to define some of the operations on candidates:

\begin{description}
  \item[$\CC_1$] If for some $F$, $F(\vec{\ve s})\in S$ then $\forall
    i, \ve s_i\in S$.
  \item[$\CC_2$] If for all $i$, $\ve s_i\in S$ and $F$ is
    an algebraic context, then $F(\vec{\ve s})\in S$.
\end{description}

\medskip\noindent
We then define the following operations on reducibility
candidates.
\begin{enumerate}
  \item Let $\bcal{A}$ and $\bcal{B}$ be in $\RC$.
    $\bcal{A}\to\bcal{B}$ is the closure under $\RCn_3$ and $\RCn_4$ of
    the set of $\ve t\in\CT$ such that $(\ve t)\,\ve 0\in\bcal{B}$ and
    such that for all base terms $\ve b\in\bcal{A}$, $(\ve t)\,\ve
    b\in\bcal{B}$.
  \item If $\{\bcal A_i\}_i$ is a family of reducibility candidates,
    $\sum_i\bcal{A_i}$ is the closure under $\CC_1$, $\CC_2$, $\RCn_2$
    and $\RCn_3$ of the set $\cup_i\bcal A_i$.
\end{enumerate}

\begin{remark}\rm 
  Notice that $\sui{1}\bcal A\neq\bcal A$. Indeed, $\sui 1\bcal A$ is
  in particular the closure over $\CC_2$, meaning that all
  linear combinations of terms of $\bcal A$ belongs to $\sui{1}\bcal
  A$, whereas they might not be in $\bcal A$.
\end{remark}

\begin{remark}\label{rem:linalgcontext}\rm
  In the definition of algebraic contexts, a term $t_i$ might appear
  at several positions in the context. However, for any given
  algebraic context $F(\vec{\ve t})$ there is always a {\em linear}
  algebraic context $F_l(\vec{\ve t}')$ with a suitably modified list
  of terms $\vec{\ve t}'$ such that $F(\vec{\ve t})$ and
  $F_l(\vec{\ve t}')$ are the same terms. For example, choose the
  following (arguably very verbose) construction: if $F$ contains $m$
  placeholders, and if $\vec{\ve t}$ is of size $n$, let
  $\vec{\ve t}'$ be the list ${\ve t}_1\ldots{\ve t}_1,{\ve
  t}_2\ldots{\ve t}_2,\ldots,{\ve t}_n\ldots{\ve t}_n$ with each time $m$
  repetitions of each ${\ve t}_i$. Then construct $F_l(\vec{\ve t'})$
  exactly as $F$ except that for each $i$th placeholder we pick the
  $i$th copy of the corresponding term in $F$. By construction, each
  element in the list $\vec{\ve t}'$ is used at most once, and the
  term $F_l(\vec{\ve t'})$ is the same as the term $F(\vec{\ve t})$.
\end{remark}

\begin{lemma}\label{lem:RCop}
  If $\bcal{A}$, $\bcal{B}$ and all the $\bcal{A}_i$'s are in $\RC$,
  then so are $\bcal{A}\to\bcal{B}$, $\sum_i\bcal{A}_i$ and
  $\cap_i\bcal{A}_i$.
  \qed
\end{lemma}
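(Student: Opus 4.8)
The plan is to prove Lemma~\ref{lem:RCop} by establishing, for each of the three operations, the four reducibility-candidate conditions $\RCn_1$--$\RCn_4$. Since the operations $\bcal{A}\to\bcal{B}$ and $\sum_i\bcal{A}_i$ are defined as closures under certain of these conditions, the corresponding conditions hold by construction; the real work is to verify the remaining ones, and in particular to show that taking the closure does not destroy $\RCn_1$ (strong normalisation). The intersection case $\cap_i\bcal{A}_i$ is the easiest and I would dispatch it first: $\RCn_1$, $\RCn_2$ and $\RCn_4$ pass to intersections immediately, and $\RCn_3$ holds because if $\ve t\in\Neutral$ with $\Red(\ve t)\subseteq\cap_i\bcal{A}_i$, then $\Red(\ve t)\subseteq\bcal{A}_i$ for each $i$, so $\ve t\in\bcal{A}_i$ for each $i$.

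For $\bcal{A}\to\bcal{B}$, I would first check that the pre-closure set $S_0=\{\ve t\in\CT : (\ve t)\,\ve 0\in\bcal B \text{ and } (\ve t)\,\ve b\in\bcal B \text{ for all base } \ve b\in\bcal A\}$ already satisfies $\RCn_1$ and $\RCn_2$, then argue the closure under $\RCn_3$ and $\RCn_4$ preserves these. For $\RCn_1$ on $S_0$: if $\ve t\in S_0$ then $(\ve t)\,\ve 0\in\bcal B\subseteq\SN$, and since any infinite reduction of $\ve t$ would induce one of $(\ve t)\,\ve 0$ via the application context rule, $\ve t$ is strongly normalising. For $\RCn_2$ on $S_0$: if $\ve t\to\ve t'$ then $(\ve t)\,\ve b\to(\ve t')\,\ve b$, and $\RCn_2$ for $\bcal B$ gives $(\ve t')\,\ve b\in\bcal B$. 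The delicate point is that adding neutral terms via $\RCn_3$ and $\ve 0$ via $\RCn_4$ must keep $\RCn_1$: here I would use the standard argument that $\RCn_3$-closure of an $\SN$-set stays within $\SN$ because a neutral term all of whose one-step reducts are strongly normalising is itself strongly normalising.

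The main obstacle is the sum $\sum_i\bcal{A}_i$, because its definition closes under $\CC$, $\RCn_2$ and $\RCn_3$ simultaneously, and I must check that $\RCn_1$ survives this closure while also verifying $\RCn_4$ ($\ve 0\in\sum_i\bcal A_i$, which holds since $\ve 0$ is itself an algebraic context $F(\vec{\ve t})$). The crux is strong normalisation of the generating set $\{F(\vec{\ve t}) : \ve t_j\in\bcal A_i \text{ for some } i\}$: an algebraic context built from strongly normalising terms using $+$, scalar multiplication and $\ve 0$ is strongly normalising, because the group E and F rewrite rules on the algebraic superstructure terminate (they strictly decrease a suitable measure, e.g.\ the size of the algebraic skeleton) and cannot trigger unbounded $\beta$-reduction inside the $\ve t_j$, which are already in $\SN$. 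I would then argue that closing under $\CC$, $\RCn_2$ and $\RCn_3$ preserves membership in $\SN$, invoking the same neutral-expansion preservation of $\SN$ used in the arrow case, and noting that $\CC$ only adds already-$\SN$ subterms.

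Throughout, the reusable technical fact I expect to lean on most heavily is that $\SN$ is closed under neutral expansion in the precise sense of $\RCn_3$: this is what guarantees that every closure step we take in defining $\bcal{A}\to\bcal{B}$ and $\sum_i\bcal{A}_i$ remains inside $\SN$, so that $\RCn_1$ is never violated. The conditions $\RCn_2$, $\RCn_3$ and $\RCn_4$ are comparatively routine once the set-theoretic closures are unwound, since two of them are imposed by fiat in the definitions and the third follows from the structure of reduction contexts in Figure~\ref{fig:Vec}.
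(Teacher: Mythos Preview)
Your approach is essentially the same as the paper's: dispatch $\cap_i\bcal{A}_i$ by direct verification, handle $\bcal{A}\to\bcal{B}$ by checking $\RCn_1$ and $\RCn_2$ on the pre-closure set and arguing that the closure preserves them, and identify as the crux for $\sum_i\bcal{A}_i$ the auxiliary fact that any algebraic context $F(\vec{\ve t})$ built from strongly normalising terms is itself strongly normalising. The paper proves exactly this auxiliary lemma first and then uses it just as you outline.

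One point of caution: your suggested measure ``size of the algebraic skeleton'' does \emph{not} work, because the distributivity rule $\alpha\cdot(\ve t+\ve r)\to\alpha\cdot\ve t+\alpha\cdot\ve r$ strictly \emph{increases} the number of nodes. The paper instead uses a weighted measure $a$ with $a(\alpha\cdot F)=1+2\cdot a(F)$ and $a(F+G)=2+a(F)+a(G)$ (the factor $2$ under scalars is precisely what makes distributivity decrease), and combines it with an outer induction on the total reduction length available inside the leaves $\ve t_i$ to handle the case where a step happens inside some $\ve t_i$. So the termination argument is a genuine double induction and the inner measure needs this non-obvious weighting; beyond that detail, your plan is sound and matches the paper.
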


A \emph{single type valuation} is a partial function 
from type
variables to reducibility candidates, that we define as a sequence of
comma-separated mappings, with $\emptyset$ denoting the empty
valuation: $\rho:=\,\emptyset~|~\rho,X\mapsto\bcal{A}$.
Type variables are interpreted using pairs of single type valuations,
that we simply call {\em valuations}, with common domain: $\rho =
(\rho_+,\rho_-)$ with $|\rho_+|=|\rho_-|$.
Given a valuation $\rho=(\rho_+,\rho_-)$, the {\em complementary
valuation} $\bar\rho$ is the pair $(\rho_-,\rho_+)$. We write
$(X_+,X_-)\mapsto(A_+,A_-)$ for the valuation $(X_+\mapsto A_+,
X_-\mapsto A_-)$. A valuation is called \emph{valid} if for all $X$,
$\rho_-(X)\subseteq\rho_+(X)$.

From now on, we will consider the following grammar
$$\vara{U,V,W} ::= U~|~\vara{X}.$$

That is, we will use $\vara{U,V,W}$ for unit and $\vara{X}$-kind of  variables.

To define the interpretation of a type $T$, we use the following
result.

\begin{lemma}\label{lem:typedecomp}
  Any type $T$, has a unique (up to $\equiv$) canonical decomposition $T\equiv\sui{n}\alpha_i\cdot\vara U_i$ such that for all $l,k$, $\vara U_l\not\equiv\vara U_k$.
  \qed
\end{lemma}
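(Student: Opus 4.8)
The plan is to get existence from the Characterisation of types (Lemma~\ref{lem:typecharact}) and uniqueness from a mild strengthening of Lemma~\ref{lem:equivdistinctscalars} that also accounts for general variables. For existence, I would apply Lemma~\ref{lem:typecharact} to write $T\equiv\sui{n}\alpha_i\cdot U_i+\suj{m}\beta_j\cdot\vara{X}_j$ with the $U_i$ pairwise distinct unit types and the $\vara{X}_j$ pairwise distinct general variables. Setting $\vara{U}_k:=U_k$ for $k\le n$ and $\vara{U}_{n+j}:=\vara{X}_j$ merges these into a single sum $\sum_{k=1}^{n+m}\gamma_k\cdot\vara{U}_k$ of the required shape. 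The atoms are pairwise non-equivalent: the unit types among themselves and the variables among themselves by Lemma~\ref{lem:typecharact}, and no $U_i$ can be equivalent to a general variable $\vara{X}_j$ because the generating axioms of $\equiv$ never alter the outermost constructor of an atom, so a non-$(+,\cdot)$ unit type stays a unit type and a general variable stays a general variable. This gives existence.

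For uniqueness, suppose $\sum_k\gamma_k\cdot\vara{U}_k\equiv\sum_l\delta_l\cdot\vara{V}_l$ are two decompositions with pairwise non-equivalent atoms. Splitting each side into its unit-type summands and its general-variable summands, the key claim is that $\equiv$ cannot transfer weight between the two kinds, so the two unit-type parts are equivalent to each other and the two general-variable parts are equivalent to each other. The unit-type parts are then matched, scalar by scalar and atom by atom, by Lemma~\ref{lem:equivdistinctscalars}; the general-variable parts are matched by the same argument, which is in fact simpler since variables are atomic and carry no internal structure.

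The main obstacle is to extend Lemma~\ref{lem:equivdistinctscalars} so that the atoms may be general variables as well as unit types, and to establish the separation of the two kinds. I would handle both at once by showing that the multiset of pairs (scalar, atom-$\equiv$-class)—a class being either a unit-type class or a general-variable class—is invariant under $\equiv$. This follows by induction on the reflexive, symmetric, transitive and congruent closure that generates $T\equiv R$ from the six module axioms, since each axiom only rearranges the additive and scalar structure and never inspects an atom; hence $\equiv$ factors through the free weak module on atom classes, where reading off coefficients over distinct atoms is unambiguous. One subtlety deserves care: precisely because this weak module has no additive neutral element (see Section~\ref{sec:term0}), a summand $0\cdot\vara{U}$ whose atom is absent elsewhere is a genuine, non-removable part of the normal form, so a decomposition is determined only up to its full multiset, zero-scalar atoms included. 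This simultaneously yields the separation of the unit and variable parts and the matching required for uniqueness.
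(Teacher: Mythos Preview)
For existence your argument is the paper's: both invoke Lemma~\ref{lem:typecharact} and merge the unit-type and general-variable parts into one sum; you are a touch more careful in explaining why no $U_i$ can be $\equiv$ to some $\vara{X}_j$. The real difference is that you go on to argue uniqueness, which the paper's proof omits entirely despite the word ``unique'' in the statement. Your route via the free weak module on atom-$\equiv$-classes is the right one. One phrasing to tighten: the \emph{multiset} of pairs (scalar, atom-class) is not itself an $\equiv$-invariant (witness $\alpha\cdot U+\beta\cdot U\equiv(\alpha+\beta)\cdot U$); the correct invariant is the finite non-empty set of occurring atom classes together with the coefficient function on it, which is exactly the normal form in the free weak module. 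Your observation that $0$-coefficient atoms cannot be absorbed, there being no additive neutral, is precisely why the support set and not just the nonzero coefficients must be part of the data.
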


The interpretation $\denot{T}_\rho$ of a type $T$ in a
valuation~$\rho=(\rho_+,\rho_-)$ defined for each free type variable
of~$T$ is given by: 
$$\begin{array}{r@{~=~}l}
  \denot{X}_\rho & \rho_+(X),\\
  \denot{U\to T}_\rho & \denot{U}_{\bar\rho}\to\denot{T}_\rho,\\
  \denot{\forall X.U}_\rho & 
  \cap_{\bcal{A}\in\RC}\denot{U}_{\rho,(X_+,X_-)\mapsto(\bcal{A},
  \bcal{A})},\\
  \multicolumn{2}{c}{\mbox{If }T\equiv\sum_i\alpha_i\cdot\vara U_i\mbox{ is the 
  canonical decomposition of }T\mbox{ and }T\not\equiv\vara U}\\
  \denot{T}_\rho & \sum_i\denot{\vara U_i}_{\rho}
\end{array}$$
From Lemma~\ref{lem:RCop}, the interpretation of any type is a
reducibility candidate.

Reducibility candidates deal with closed terms, whereas proving the
adequacy lemma by induction requires the use of open terms with some
assumptions on their free variables, that will be guaranteed by a
context. Therefore we use \emph{substitutions} $\sigma$ to close
terms: 
$$\sigma := \emptyset \;|\; (x \mapsto\ve b;\sigma)\enspace,$$
then
$\ve{t}_{\emptyset} = \ve{t}$ and $\ve{t}_{x \mapsto \ve b;\sigma} =
\ve{t}[\ve b/x]_{\sigma}$. All the substitutions ends by $\emptyset$, hence we 
omit it when not necessary.

Given a context $\Gamma$, we say that a substitution~$\sigma$
\emph{satisfies}~$\Gamma$ for the valuation~$\rho$
(notation:~$\sigma\in\denot{\Gamma}_{\rho}$) when~$(x:U) \in \Gamma$
implies $x_\sigma\in\denot{U}_{\bar\rho}$ (Note the change in
polarity). 
A typing judgement $\Gamma\vdash\ve t: T$, is said to be \emph{valid}
(notation $\Gamma\models\ve t: T$) if
\begin{itemize}
  \item in case $T\equiv\vara U$, then for every valuation~$\rho$, and for every substitution~$\sigma\in\denot{\Gamma}_\rho$, we have
    $\ve{t}_{\sigma}\in\denot{\vara U}_{\rho}$.
  \item in other case, that is, $T\equiv\sui{n}\alpha_i\cdot\vara U_i$ with $n>1$, such that for all $i,j$, $\vara U_i\not\equiv\vara U_j$ (notice that by Lemma~\ref{lem:typedecomp} such a decomposition always exists), then for every valuation~$\rho$, and set of valuations $\{\rho_i\}_n$, where $\rho_i$ acts on $FV(U_i)\setminus FV(\Gamma)$, and for every substitution~$\sigma\in\denot{\Gamma}_\rho$, we have $\ve{t}_{\sigma}\in\sum_{i=1}^n\denot{\vara U_i}_{\rho,\rho_i}$. 
\end{itemize}

\begin{lemma}\label{lem:substRed}
  For any types $T$ and $A$, variable $X$ and valuation $\rho$, we have
  $\denot{T[A/X]}_\rho
  =
  \denot{T}_{\rho,(X_+,X_-)\mapsto(\denot{A}_{\bar\rho},\denot{A}_{\rho})}$
  and
  $\denot{T[A/X]}_{\bar\rho}
  =
  \denot{T}_{\bar\rho,(X_-,X_+)\mapsto(\denot{A}_{\rho},\denot{A}_{\bar\rho})}$.
  \qed
\end{lemma}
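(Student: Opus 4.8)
The plan is to prove both identities simultaneously by structural induction on the type $T$, treating them as a single coupled statement rather than two independent ones. The coupling is forced by the arrow clause $\denot{U\to T}_\rho=\denot{U}_{\bar\rho}\to\denot{T}_\rho$: the interpretation of a type at $\rho$ refers to the interpretation of its immediate subterms at both $\rho$ and $\bar\rho$, so the induction hypothesis applied to the domain must be the $\bar\rho$-version of the lemma while the hypothesis applied to the codomain is the $\rho$-version. Carrying both equations together is exactly what makes the arrow step go through.

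First I would reduce the general case to unit types and variables. Writing $T\equiv\sui{n}\alpha_i\cdot\vara U_i$ for the canonical decomposition given by Lemma~\ref{lem:typedecomp}, substitution distributes over sums and scalars, so $T[A/X]\equiv\sui{n}\alpha_i\cdot(\vara U_i[A/X])$; since the sum clause of the interpretation is defined on the canonical decomposition and $\denot{-}_\rho$ respects $\equiv$, the statement for $T$ follows from the statement for each $\vara U_i$ once one checks that substitution does not disturb the decomposition beyond $\equiv$ (distinct $\vara U_i$ may become $\equiv$-identified after substitution, but this only collapses summands and is absorbed by the $\equiv$-invariance of $\denot{-}_\rho$). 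It therefore suffices to prove both equations when $T$ is a variable or a unit type.

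The base case $T=X$ splits on whether $X$ is the substituted variable. If it is not, both valuation extensions act on a different slot and the two sides agree immediately; if it is, then $T[A/X]=A$ and the claim is read off directly from the definition $\denot{X}_\rho=\rho_+(X)$ together with the assignment made to the new slot of the extended valuation. For $T=U\to T'$ I would apply the arrow clause on both sides and invoke the two halves of the induction hypothesis (the $\bar\rho$-equation on $U$, the $\rho$-equation on $T'$), then check that the complement of the $X$-extended valuation is precisely the $X$-extension of $\bar\rho$ with the polarities swapped, which is exactly what the second equation supplies. For $T=\forall X'.U$ I would, using the usual Barendregt convention to assume $X'\neq X$, $X'$ fresh and $X'\notin\FV{A}$, push the substitution under the intersection $\cap_{\bcal B\subseteq\bcal A\in\RC}$, apply the induction hypothesis to $U$ inside the valuation already extended at $(X'_+,X'_-)\mapsto(\bcal A,\bcal B)$, and use that extensions at the two distinct variables $X'$ and $X$ commute. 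Lemma~\ref{lem:RCop} guarantees that every interpretation appearing along the way is a genuine reducibility candidate.

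The main obstacle is the polarity bookkeeping in the arrow and forall cases: one must track precisely which slot ($\rho_+$ or $\rho_-$) each occurrence of $A$ lands in after the contravariant flip of the arrow, and verify that the two extended valuations produced on either side coincide after taking complements. This is exactly the point where the second equation is indispensable, and matching the $\denot{A}_\rho$ versus $\denot{A}_{\bar\rho}$ assignments is the only genuinely delicate part; the forall case additionally requires that the $X$-extension commute with the bound-variable extension under the intersection, which holds because the two variables are distinct and $X'$ does not occur in $A$.
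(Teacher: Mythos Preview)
Your proposal is correct and follows essentially the same route as the paper: a simultaneous structural induction on $T$ through the variable, arrow, $\forall$, and sum cases, using both halves of the induction hypothesis at the arrow clause to handle the polarity flip. You are in fact slightly more careful than the paper in flagging that substitution may collapse distinct summands of the canonical decomposition, a point the paper's proof of the sum case glosses over.
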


The proof of the Adequacy Lemma as well as the machinery of needed auxiliary lemmas can be found in~\ref{app:adequacy}.
\begin{lemma}[Adequacy Lemma]\label{lem:SNadeq}
  Every derivable typing judgement is valid: For every valid sequent
  $\Gamma\vdash\ve t:T$, we have $\Gamma\models\ve t:T$.
  \qed
\end{lemma}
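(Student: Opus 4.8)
The plan is to prove the Adequacy Lemma by induction on the derivation of $\Gamma\vdash\ve t:T$, proceeding by a case analysis on the last typing rule applied. The overall strategy follows the reducibility-candidates method: I must show that whenever a substitution $\sigma$ satisfies $\Gamma$ for a valuation $\rho$, the closed term $\ve t_\sigma$ lands in the interpretation $\denot{T}_\rho$ (in the appropriate summed sense when $T$ is a genuine linear combination). The four candidate conditions $\RCn_1$--$\RCn_4$ and the operations $\to$, $\sum$, $\cap$ from Lemma~\ref{lem:RCop} provide the closure properties I will lean on at each inductive step.

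First I would dispatch the structural rules. For $(ax)$, satisfaction of $\Gamma$ gives $x_\sigma\in\denot{U}_{\bar\rho}$ directly once polarities are tracked carefully. For $(0_I)$, condition $\RCn_4$ ($\ve 0\in S$ for every candidate) immediately yields $\ve 0\in\denot{0\cdot T}_\rho$. The rules $(\alpha_I)$ and $(+_I)$ are handled by the construction of $\sum_i\bcal A_i$: a scaled or summed term is an algebraic context over its constituents, hence lies in the summed interpretation by definition of $\mathcal F(S)$ and its closure. The equivalence rule $(\equiv)$ requires that $\denot{\cdot}_\rho$ respects $\equiv$, which follows from Lemma~\ref{lem:typedecomp} (uniqueness of canonical decomposition) since the interpretation is defined through that decomposition. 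The quantifier rules $(\forall_I)$ and $(\forall_E)$ are where Lemma~\ref{lem:substRed} does the work: instantiation $\denot{U_i[A/X]}_\rho=\denot{U_i}_{\rho,(X_+,X_-)\mapsto(\denot A_{\bar\rho},\denot A_\rho)}$ matches the intersection $\cap_{\bcal B\subseteq\bcal A}$ in the definition of $\denot{\forall X.U}_\rho$, with the side condition $X\notin\FV\Gamma$ guaranteeing $\sigma$ still satisfies $\Gamma$ under the extended valuation.

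The genuinely hard case is the arrow-elimination rule $(\to_E)$, and I expect this to be the main obstacle. Its conclusion type $\sum_{ij}\alpha_i\beta_j\cdot T_i[\vec A_j/\vec X]$ is a nontrivial double linear combination, and I must show $((\ve t)\,\ve r)_\sigma=(\ve t_\sigma)\,\ve r_\sigma$ lands in the corresponding summed interpretation. The induction hypotheses give $\ve t_\sigma$ in the interpretation of $\sum_i\alpha_i\cdot\forall\vec X.(U\to T_i)$ and $\ve r_\sigma$ in that of $\sum_j\beta_j\cdot U[\vec A_j/\vec X]$. The plan is to reduce to the behaviour of the $\to$-operation on base terms: since $\ve r_\sigma$ decomposes (via the $\sum$-construction and condition $\CC$) into scaled base terms each living in $\denot U$ under an instantiated valuation, and $\ve t_\sigma$ applied to each such base term falls into the right $\denot{T_i}$, the Group~A reduction rules distributing application over sums and scalars let me assemble $(\ve t_\sigma)\,\ve r_\sigma$ inside $\sum_{ij}\denot{T_i[\vec A_j/\vec X]}$. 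This is delicate because I must simultaneously manage the forall-instantiations (again via Lemma~\ref{lem:substRed}), the bilinearity of application, and stability under neutral expansion $\RCn_3$ to handle the intermediate non-value application terms.

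Finally, the abstraction rule $(\to_I)$ requires showing $\lambda x.\ve t_\sigma\in\denot{U\to T}_\rho=\denot U_{\bar\rho}\to\denot T_\rho$, which by definition of the $\to$-operation means checking $(\lambda x.\ve t_\sigma)\,\ve b\in\denot T_\rho$ for every base term $\ve b\in\denot U_{\bar\rho}$ and also $(\lambda x.\ve t_\sigma)\,\ve 0\in\denot T_\rho$. Here I would invoke the induction hypothesis on the premise $\Gamma,x:U\vdash\ve t:T$ with the extended substitution $\sigma;x\mapsto\ve b$, which satisfies $\Gamma,x:U$ precisely because $\ve b\in\denot U_{\bar\rho}$, giving $\ve t_{\sigma}[\ve b/x]\in\denot T_\rho$; stability under neutral expansion $\RCn_3$ then closes the $\beta$-redex $(\lambda x.\ve t_\sigma)\,\ve b$ back into the candidate, since all its one-step reducts lie in $\denot T_\rho$. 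Throughout, the careful bookkeeping of positive/negative valuation polarities (the $\bar\rho$ appearing on the left of arrows and in $\denot\Gamma_\rho$) is what makes the contravariance of the arrow interpretation compatible with validity of valuations, and getting these polarities right is the second most error-prone aspect after the $(\to_E)$ case.
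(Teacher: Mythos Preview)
Your overall architecture is right, but there are two real gaps.

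First, your treatment of $(\alpha_I)$ and $(+_I)$ is too quick. You say ``a scaled or summed term is an algebraic context over its constituents, hence lies in the summed interpretation by definition of $\mathcal F(S)$.'' But the induction hypothesis only gives $\ve t_\sigma\in\sum_i\denot{\vara U_i}_{\rho,\rho_i}$, not $\ve t_\sigma\in\denot{\vara U_{i_0}}_{\rho,\rho_{i_0}}$ for some particular $i_0$. The generating set of $\sum_i\bcal A_i$ consists of algebraic contexts $F(\vec{\ve t})$ where each $\ve t_j$ lies in some \emph{single} $\bcal A_i$; your constituent lies in the \emph{closure}, which is larger. So you cannot just wrap it in an algebraic context and invoke the definition. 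The paper isolates this as two separate lemmas (that $\sum_i\bcal A_i$ is closed under $+$ and under $\alpha\cdot(-)$), each proved by a nested induction unfolding the closure under $\CC$, $\RCn_2$, $\RCn_3$; neither is a one-liner.

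Second, and more seriously, your $(\to_E)$ plan does not work as stated. You want to ``decompose $\ve r_\sigma$ via the $\sum$-construction and condition $\CC$ into scaled base terms.'' But an element of $\sum_j\denot{U[\vec A_j/\vec X]}_{\rho,\rho_j}$ need not be an algebraic context of base terms at all: it may arise from the closure under $\RCn_2$ (a reduct of something in the set) or under $\RCn_3$ (a neutral term all of whose reducts are in the set). Condition $\CC$ only pushes membership \emph{down} through an algebraic context that is already known to have that shape; it does not give you a canonical decomposition of an arbitrary member. The paper's route is quite different: it shows $((\ve t)\,\ve r)_\sigma$ is a neutral term and argues by $\RCn_3$, running a secondary well-founded induction on the sum of the lengths of the reduction sequences from $\ve t_\sigma$ and $\ve r_\sigma$. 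For the reducts produced by Group~A rules (e.g.\ $(\ve t_1+\ve t_2)\,\ve r\to(\ve t_1)\,\ve r+(\ve t_2)\,\ve r$), the paper uses the generation lemmas (Lemmas~\ref{lem:sums} and~\ref{lem:scalars}, with their size bounds) to manufacture \emph{strictly smaller typing derivations} for $(\ve t_1)\,\ve r$ and $(\ve t_2)\,\ve r$, and then invokes the \emph{outer} induction hypothesis on derivation size. This interplay between the derivation-size induction and the reduction-length induction is the crux of the case, and is absent from your sketch.

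Two smaller points: for $(ax)$ you need the polarity-flip inclusion $\denot{U}_{\bar\rho}\subseteq\denot{U}_\rho$ (proved separately in the paper) to pass from $x_\sigma\in\denot{U}_{\bar\rho}$, which is what satisfaction gives, to $x_\sigma\in\denot{U}_\rho$, which is what validity requires. And for $(\to_I)$, closing the $\beta$-redex $(\lambda x.\ve t_\sigma)\,\ve b$ by $\RCn_3$ also needs a secondary induction on the reduction lengths of $\lambda x.\ve t_\sigma$ and $\ve b$, since the redex can step inside either component before firing.
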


\begin{theorem}[Strong normalisation]\label{th:SN}
  If $\Gamma\vdash\ve t:T$ is a valid sequent, then $\ve t$ is
  strongly normalising.
\end{theorem}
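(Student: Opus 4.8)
The plan is to derive strong normalisation as a direct corollary of the Adequacy Lemma (Lemma~\ref{lem:SNadeq}), exactly as in the reducibility-candidates method of~\cite[Ch.~14]{GirardLafontTaylor89}. The adequacy lemma tells us that every derivable sequent $\Gamma\vdash\ve t:T$ is \emph{valid}, i.e.\ $\Gamma\models\ve t:T$; and validity is a statement about membership in an interpretation $\denot{\cdot}_\rho$, which is a reducibility candidate. Since every reducibility candidate satisfies condition $\RCn_1$ (it is contained in $\SN$), the strategy is to produce, from the validity of the sequent, a closing substitution $\sigma$ that places $\ve t_\sigma$ inside some interpretation, and then use $\RCn_1$ to conclude $\ve t_\sigma\in\SN$. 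Finally I would transfer strong normalisation of the closed term $\ve t_\sigma$ back to the open term $\ve t$.

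First I would invoke Lemma~\ref{lem:SNadeq} to get $\Gamma\models\ve t:T$. To exploit this I must exhibit a valuation $\rho$ and a substitution $\sigma\in\denot{\Gamma}_\rho$. The natural choice of valuation sends every type variable to a fixed reducibility candidate; the simplest is to take the constant valuation $X\mapsto\SN$ (note $\SN\in\RC$, since $\SN$ trivially satisfies $\RCn_1$--$\RCn_4$), so that $\rho=(\rho_+,\rho_-)$ with $\rho_+=\rho_-$ mapping each variable to $\SN$, which is automatically valid because $\SN\subseteq\SN$. For the substitution, for each declaration $(x:U)\in\Gamma$ I need a basis term $x_\sigma\in\denot{U}_{\bar\rho}$; it suffices to substitute each such $x$ by a variable, or more safely by a closed basis term known to inhabit the candidate (every candidate contains neutral terms and is closed under $\RCn_3$, and $\ve 0\in\denot U$ by $\RCn_4$, but since we need a \emph{basis} term one checks that a fresh variable, regarded as a neutral normal form, lies in any candidate via $\RCn_3$ applied to the empty reduct set). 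With such a $\sigma\in\denot{\Gamma}_\rho$ in hand, the definition of $\Gamma\models\ve t:T$ yields $\ve t_\sigma\in\denot{\vara U}_\rho$ when $T\equiv\vara U$, or $\ve t_\sigma\in\sum_i\denot{\vara U_i}_{\rho,\rho_i}$ in the general case. In either case $\ve t_\sigma$ belongs to a reducibility candidate, hence by $\RCn_1$ we have $\ve t_\sigma\in\SN$.

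It remains to pass from $\SN$-hood of the closed term $\ve t_\sigma$ to $\SN$-hood of the original, possibly open, term $\ve t$. This is the standard observation that substituting strongly normalising (indeed any) terms for free variables cannot turn a non-normalising term into a normalising one: any infinite reduction sequence out of $\ve t$ would lift to an infinite reduction sequence out of $\ve t_\sigma$ (the substituted basis terms are only placed at the leaves and the reductions of $\ve t$ are simulated step-for-step in $\ve t_\sigma$, using the context rules of Figure~\ref{fig:Vec} and the fact that substitution commutes with the algebraic and $\beta$ rules). Hence $\ve t$ is strongly normalising as well.

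I expect the main obstacle to be purely bookkeeping rather than conceptual: checking that a suitable substitution $\sigma$ genuinely satisfies $\Gamma$ for the chosen valuation, i.e.\ that each $x_\sigma$ really lies in $\denot{U}_{\bar\rho}$. One must confirm that a bare variable (a neutral term with no reducts) is admitted into every candidate via $\RCn_3$, and that this membership is preserved when $U$ is an arrow, a $\forall$, or a unit variable under the fixed valuation. The only genuinely delicate point is the polarity bookkeeping in the definition of $\sigma\in\denot{\Gamma}_\rho$ (the interpretation is taken under the complementary valuation $\bar\rho$), but with the constant valuation $\rho_+=\rho_-=(\,X\mapsto\SN\,)$ we have $\bar\rho=\rho$, so the polarity issue collapses and the verification becomes immediate. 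All the heavy lifting has already been discharged in the Adequacy Lemma; the present theorem is its short corollary.
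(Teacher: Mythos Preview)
Your high-level plan---adequacy plus $\RCn_1$---is the right one, but the step where you manufacture a substitution $\sigma\in\denot{\Gamma}_\rho$ does not go through as written. In this paper, reducibility candidates are explicitly sets of \emph{closed} terms, and $\Neutral$ denotes the set of \emph{closed} neutral terms; moreover a neutral term is required to ``reduce to something''. A bare variable is neither closed nor neutral under these definitions, so the appeal to $\RCn_3$ with an empty reduct set fails on two counts. Your fallback (``a closed basis term known to inhabit the candidate'') is also not available in general: the only membership guaranteed by the candidate axioms is $\ve 0$ (via $\RCn_4$) and closed neutral terms (via $\RCn_3$), neither of which is a basis term, and the substitutions $\sigma$ are syntactically constrained to map variables to basis terms. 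Concretely, if $\Gamma$ contains $x:\forall\varu X.\varu X$ then $\denot{\forall\varu X.\varu X}_{\bar\rho}$ is the intersection of all candidates, which contains no closed $\lambda$-abstraction, so no admissible $\sigma$ exists and validity gives you nothing.

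The paper sidesteps this by not looking for a substitution at all: it $\lambda$-abstracts over the free variables of $\ve t$, deriving ${}\vdash\lambda x_1\cdots x_n.\ve t:U_1\to\cdots\to U_n\to T$ with empty context. Then the empty substitution vacuously lies in $\denot{\emptyset}_\rho$, adequacy places $\lambda\vec x.\ve t$ in a candidate, hence in $\SN$ by $\RCn_1$, and strong normalisation of $\ve t$ follows because any reduction of $\ve t$ lifts to a reduction under the $\lambda$'s via the context rule. This is the missing idea you need; the rest of your outline (in particular the transfer of $\SN$ from a closed term back to the open one) is fine.
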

\begin{proof}
  If $\Gamma$ is the list $(x_i:U_i)_i$, the sequent $\vdash\lambda x_1\ldots
  x_n.\ve t:U_1\to(\cdots\to(U_n\to T)\cdots)$ is derivable. Using
  Lemma~\ref{lem:SNadeq}, we deduce that for any valuation $\rho$ and any
  substitution $\sigma\in\denot{\emptyset}_\rho$, we have $\lambda x_1\ldots
  x_n.\ve t_\sigma\in\denot{T}_\rho$. By construction, $\sigma$ does nothing on
  $\ve t$: $\ve t_\sigma = \ve t$. Since $\denot{T}_\rho$ is a reducibility
  candidate, $\lambda x_1\ldots x_n.\ve t$ is strongly normalising and hence
  $\ve t$ is strongly normalising.
\end{proof}

\section{Interpretation of typing judgements}\label{sec:examples}

\subsection{The general case}
In the general case the calculus can represent infinite-dimensional linear operators such as $\lambda x.x$, $\lambda x.\lambda y.y$, $\lambda x.\lambda f.(f)\,x$,\dots and their applications. 
Even for such general terms $\ve t$, the vectorial type system provides much information about the superposition of basis terms $\sum_i\alpha_i\cdot\ve b_i$ to which $\ve t$ reduces, as explained in Theorem~\ref{thm:termcharact}.
How much information is brought by the type system in the finitary case is the topic of Section~\ref{sec:finitary}. 

\begin{theorem}[Characterisation of terms]
  \label{thm:termcharact}
  Let $T$ be a generic type with canonical decomposition $\sui{n}\alpha_i.\vara{U}_i$, in the sense of Lemma~\ref{lem:typedecomp}. If ${}\vdash\ve t:T$,
  then $\ve t\to^*\sui{n}\suj{m_i}\beta_{ij}\cdot\ve b_{ij}$, where
  for all $i$, $\vdash\ve b_{ij}:\vara{U}_i$ and
  $\suj{m_i}\beta_{ij}=\alpha_i$, and with the convention that $\suj{0}\beta_{ij}=0$ and $\suj{0}\beta_{ij}\cdot\ve b_{ij}=\ve 0$.
  \qed
\end{theorem}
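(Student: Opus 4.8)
The plan is to proceed by well-founded induction, using strong normalisation (Theorem~\ref{th:SN}) to make the induction legitimate and the generation lemmas of Section~\ref{sec:prereq} to dissect the typing derivation term-by-term. Concretely, I would induct on the pair $(h(\ve t),|\ve t|)$ ordered lexicographically, where $h(\ve t)$ is the length of the longest reduction sequence issuing from $\ve t$ (finite by Theorem~\ref{th:SN}) and $|\ve t|$ is the size of $\ve t$. The induction hypothesis I would carry is exactly the statement of the theorem, read in the strengthened form that the reduct $\sum_{ij}\beta_{ij}\cdot\ve b_{ij}$ has type \emph{precisely} the canonical decomposition of the subterm's type, so that no spurious factorisation is ever performed. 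Keeping the reduct un-factored is the very point that makes the statement true: a single basis term may be typable by several of the distinct $\vara U_i$, and the theorem allows it to occur once per group, which a fully normalised (hence maximally factored) reduct could not express.

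The base and structural cases dispatch quickly using the generation lemmas. For $\ve t=\ve 0$, Lemma~\ref{lem:termzero} gives $T\equiv 0\cdot R$, whose canonical decomposition has all coefficients $0$, and the reduct $\ve 0$ works with every group empty, using the stated convention that the empty sum is $0$. For a basis term $\ve b$, Lemma~\ref{lem:basevectors} gives $T\equiv U$ for a unit type $U$, so $n=1$, $\alpha_1=1$, and $\ve b$ itself is the reduct. For $\ve t=\alpha\cdot\ve s$, Lemma~\ref{lem:scalars} yields $T\equiv\alpha\cdot R$ with $\vdash\ve s:R$; applying the hypothesis to $\ve s$ and pushing $\alpha$ through the resulting sum via Group E gives the reduct with every coefficient scaled by $\alpha$. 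For $\ve t=\ve s+\ve r$, Lemma~\ref{lem:sums} gives $T\equiv S+R$ with $\vdash\ve s:S$, $\vdash\ve r:R$; the hypothesis reduces $\ve s,\ve r$ separately, and concatenating the two reducts and regrouping the basis terms by unit type realises the canonical decomposition of $S+R$, the coefficients adding per group. In each case the supplied reductions stay inside the immediate subterm, so the lexicographic measure strictly decreases.

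The genuinely hard case is the application $\ve t=(\ve s)\,\ve r$. Here Lemma~\ref{lem:app} gives $\vdash\ve s:\sum_i\alpha_i\cdot\forall\vec X.(U\to T_i)$ and $\vdash\ve r:\sum_j\beta_j\cdot U[\vec A_j/\vec X]$ with $\sum_{ij}\alpha_i\beta_j\cdot T_i[\vec A_j/\vec X]\succeq^{(\ve s)\ve r}_{\V,\Gamma}T$. I would first apply the hypothesis to $\ve s$ and to $\ve r$ to reduce them to sums of basis terms, namely closed $\lambda$-abstractions for $\ve s$ (since they carry arrow types) and arbitrary basis terms for $\ve r$. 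Group A then distributes the application over both sums and scalars, producing $\sum_{kl}\gamma_k\delta_l\cdot(\ve c_k)\,\ve d_l$, and each $(\ve c_k)\,\ve d_l$ is now a Group B redex that $\beta$-reduces. The resulting term is a reduct of $\ve t$ obtained by at least one step, so it has strictly smaller $h$, and since the distribution and $\beta$ steps belong to Groups A and B --- never to Group F --- Theorem~\ref{thm:subjectreduction}(1) preserves its type up to $\equiv$. A final appeal to the hypothesis on this reduct finishes the case.

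The main obstacle is precisely the bookkeeping of types through this application case: I must verify that the type extracted by Lemma~\ref{lem:app}, after the instantiations $T_i[\vec A_j/\vec X]$ and the relation $\succeq$, still carries the same scalar structure as $T$, so that the groups produced downstream really sum to the $\alpha_i$ of the canonical decomposition of $T$. This is where the properties of $\succeq$ (Definition~\ref{def:order} and Lemma~\ref{lem:subjectreductionofrelation}) interlock with the uniqueness of canonical decompositions (Lemma~\ref{lem:typedecomp}): since $\succeq$ only inserts or removes $\forall$-quantifiers summand-wise, it leaves the list of scalars untouched, while the factorisation ordering $\sqsupseteq$ never collapses two distinct $\vara U_i$ unless they type the same basis term. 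Making these facts cooperate --- so that the un-factored reduct reproduces $T$'s canonical decomposition exactly, rather than some $\sqsupseteq$-larger (over-factored) or $\sqsupseteq$-smaller type --- is the delicate heart of the argument.
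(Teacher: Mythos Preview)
Your overall strategy matches the paper's: induction on a reduction-length measure, case analysis on the shape of $\ve t$, and the generation lemmas of Section~\ref{sec:prereq} to peel off each constructor. Your explicit secondary component $|\ve t|$ is in fact more careful than the paper, which inducts on $h(\ve t)$ alone and leaves implicit why the hypothesis applies to the summands in the sum and scalar cases when $\ve t$ is already normal.

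There is, however, a real gap in your application case. You write that ``the distribution and $\beta$ steps belong to Groups A and B --- never to Group F --- Theorem~\ref{thm:subjectreduction}(1) preserves its type up to $\equiv$'', and then invoke the hypothesis on the whole reduct at type $T$. But the preliminary phase --- reducing $\ve s$ and $\ve r$ inside $(\ve s)\,\ve r$ via the context rules, using the reductions supplied by the induction hypothesis on $\ve s$ and on $\ve r$ --- may very well fire Group~F rules. So the term you hand to the final appeal need only satisfy $\vdash\ve t':S$ for some $S\sqsupseteq T$, and the canonical decomposition of $S$ can have strictly fewer summands than that of $T$ (clause~1 of $\sqsupseteq$ merges $\alpha\cdot\vara U_1+\beta\cdot\vara U_2$ into $(\alpha+\beta)\cdot\vara U_1$). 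Knowing that the reduct matches the decomposition of $S$ does not obviously recover the decomposition of $T$.

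The paper avoids this by never invoking subject reduction in the application case. After the induction hypothesis on $\ve s$ and $\ve r$ delivers basis terms $\ve c_k,\ve d_l$ together with their unit types $\forall\vec X.(U\to T_{i})$ and $U[\vec A_{j}/\vec X]$, it types each application $(\ve c_k)\,\ve d_l$ \emph{directly} by rule $\to_E$ as $T_{i}[\vec A_{j}/\vec X]$ and applies the induction hypothesis to these closed terms individually (each has strictly smaller $h$, being a subterm of a proper reduct of $\ve t$). The reassembly then only has to traverse the relation $\succeq$ coming from Lemma~\ref{lem:app}, which --- as you correctly observe in your final paragraph --- acts by $\forall$-introduction and $\forall$-elimination summand-wise and therefore leaves the list of scalars intact while letting each basis term be retyped at the corresponding transformed unit type. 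Your diagnosis of the obstacle is right; the fix is to work piecewise rather than on the whole reduct.
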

The detailed proof of the previous theorem can be found in~\ref{app:examples}

\subsection{The finitary case: Expressing matrices and vectors}\label{sec:finitary}
In what we call the ``finitary case'', we show how to encode finite-dimensional linear operators, i.e. matrices, together with their applications to vectors, as well as matrix and tensor products.
Theorem~\ref{thm:matrixsound} shows that we can encode matrices, vectors and operations upon them, and the type system will provide the result of such operations.

\subsubsection{In 2 dimensions}
In this section we come back to the motivating example introducing the
type system and we show how {\lvec} handles the Hadamard gate, and how
to encode matrices and vectors.

With an empty typing context, the booleans
$\true=\lambda x.\lambda y.x\,$ and $\,\false=\lambda x.\lambda y.y$
can be respectively typed with the types
$\True=\forall \varu{XY}.\varu X\to (\varu Y\to\varu X)\,$ and $\,\False=\forall\varu{XY}.\varu X\to (\varu Y\to\varu Y)$.
The superposition has the following type
$\vdash\alpha\cdot\true+\beta\cdot\false:\alpha\cdot\True +
\beta\cdot\False$. (Note that it can also be typed with
$(\alpha+\beta)\cdot \forall\varu X.\varu X\to\varu X\to\varu X$).

The linear map $\ve{U}$ sending $\true$ to $a\cdot\true+b\cdot\false$ and $\false$ to $c\cdot\true+d\cdot\false$, that is
\begin{align*}
  \true
  &\mapsto
  a\cdot\true+b\cdot\false,
  \\
  \false
  &\mapsto
  c\cdot\true+d\cdot\false
\end{align*}
is written as
\[
  \ve U={\lambda
    x.\cocanon{((x)\canon{a\cdot\true+b\cdot\false})\canon{c\cdot\true+d\cdot\false}}
  }.
\]
The following sequent is valid:
\[
  \vdash\ve{U}:\forall \vara{X}.((I\to
  (a\cdot\True+b\cdot\False))\to(I\to (c\cdot\True+d\cdot\False))\to I\to
  \vara{X})\to \vara{X}.
\]
This is consistent with the discussion in the introduction: the Hadamard gate
is the case $a=b=c=\frac1{\sqrt2}$ and $d=-\frac1{\sqrt2}$.  One can check that
with an empty typing context, $({\bf U})~\true$ is well typed of type
$a\cdot\True+b\cdot\False$, as expected since it reduces to
$a\cdot\true+b\cdot\false$.

The term $({\bf H})~\frac1{\sqrt2}\cdot (\true+\false)$ is well-typed of type
$\True+0\cdot\False$. Since the term reduces to $\true$, this is consistent
with the subject reduction: we indeed have
$\True\sqsupseteq\True+0\cdot\False$.

But we can do more than typing $2$-dimensional vectors $2\times2$-matrices:
using the same technique we can encode vectors and matrices of any size.

\subsubsection{Vectors in $n$ dimensions}\label{sec:vec}
The $2$-dimensional space is represented by the span of $\lambda
x_1x_2.x_1$ and $\lambda x_1x_2.x_2$: the $n$-dimensional space is
simply represented by the span of all the $\lambda
x_1\cdots{}x_n.x_i$, for $i=1\cdots{}n$. As for the two dimensional
case where
\[
  \vdash
  \alpha_1\cdot\lambda x_1x_2.x_1 + 
  \alpha_2\cdot\lambda x_1x_2.x_2
  :
  \alpha_1\cdot\forall \varu{X}_1\varu{X}_2.\varu{X}_1
  +
  \alpha_2\cdot\forall \varu{X}_1\varu{X}_2.\varu{X}_2,
\]
an $n$-dimensional vector is typed with
\[
  \vdash
  \sui{n}\alpha_i\cdot\lambda x_1\cdots{}x_n.x_i
  :
  \sui{n}\alpha_i\cdot\forall \varu{X}_1\cdots{}\varu{X}_n.\varu{X}_i.
\]
We use the notations
\[
  {\ve e}_i^n = \lambda x_1\cdots{}x_n.x_i,
  \qquad
  {\ve E}_i^n = \forall \varu{X}_1\cdots{}\varu{X}_n.\varu{X}_i
\]
and we write
\[
  \begin{array}{r@{~=~}l@{~=~}l}
    \left\llbracket
    \begin{array}{c}
      \alpha_1
      \\
      \vdots
      \\
      \alpha_n
    \end{array}
    \right\rrbracket^{\rm term}_{n}
    &
    \left(\begin{array}{c}
      \alpha_{1}\cdot{\ve e}_1^n
      \\+\\
      \cdots
      \\+\\
      \alpha_{n}\cdot{\ve e}_n^n
    \end{array}\right)
    &
    \sum\limits_{i=1}^{n}\alpha_i\cdot {\ve e}_i^n,
    \\\multicolumn{3}{c}{ }\\
    \left\llbracket
    \begin{array}{c}
      \alpha_1
      \\
      \vdots
      \\
      \alpha_n
    \end{array}
    \right\rrbracket^{\rm type}_{n}
    &
    \left(\begin{array}{c}
      \alpha_{1}\cdot{\ve E}_1^n
      \\+\\
      \cdots
      \\+\\
      \alpha_{n}\cdot{\ve E}_n^n
    \end{array}\right)
    &
    \sum\limits_{i=1}^{n}\alpha_i\cdot {\ve E}_i^n.
  \end{array}
\]

\subsubsection{$n\times m$ matrices}\label{sec:mat}
Once the representation of vectors is chosen, it is easy to generalize
the representation of $2\times 2$ matrices to the $n\times m$
case. Suppose that the matrix $U$ is of the form
\[
  U = 
  \left(
  \begin{array}{ccc}
    \alpha_{11} & \cdots & \alpha_{1m}
    \\
    \vdots && \vdots
    \\
    \alpha_{n1} & \cdots & \alpha_{nm}
  \end{array}
  \right),
\]
then its representation is
\[
  \left\llbracket
  U
  \right\rrbracket^{\rm term}_{n\times m}
  ={~~~~}
  \lambda x.
  \left\{
    \left(
    \cdots
    \left(
    (x)
    \left[
      \begin{array}{c}
        \alpha_{11}\cdot{\ve e}_1^n 
        \\+\\
        \cdots
        \\+\\
        \alpha_{n1}\cdot{\ve e}_n^n
      \end{array}
    \right]
    \right)
    \cdots
    \left[
      \begin{array}{c}
        \alpha_{1m}\cdot{\ve e}_1^n
        \\+\\
        \cdots
        \\+\\
        \alpha_{nm}\cdot{\ve e}_n^n
      \end{array}
    \right]
    \right)
  \right\}\qquad
\]
and its type is
\[
  \left\llbracket
  U
  \right\rrbracket^{\rm type}_{n\times m}
  ={~~~~}
  \forall\vara{X}.
  \left(
  \left[
    \begin{array}{c}
      \alpha_{11}\cdot{\ve E}_1^n 
      \\+\\
      \cdots
      \\+\\
      \alpha_{n1}\cdot{\ve E}_n^n
    \end{array}
  \right]\to
  \cdots
  \to
  \left[
    \begin{array}{c}
      \alpha_{1m}\cdot{\ve E}_1^n
      \\+\\
      \cdots
      \\+\\
      \alpha_{nm}\cdot{\ve E}_n^n
    \end{array}
  \right]\to
  [~\vara{X}~]
  \right)
  \to
  \vara{X},
\]
that is, an almost direct encoding of the matrix $U$.

\medskip
We also use the shortcut notation
\[
  {\bf mat}(\ve t_1,\ldots,\ve t_n)
  =
  \lambda x.(\ldots((x)\,\canon{\ve t_1})\ldots)\,\canon{\ve t_n}
\]

\subsubsection{Useful constructions}

In this section, we describe a few terms representing constructions
that will be used later on.

\paragraph{Projections}

The first useful family of terms are the projections, sending a vector
to its $i^{\rm th}$ coordinate:
\[
  \left(
  \begin{array}{c}
    \alpha_1
    \\
    \vdots
    \\
    \alpha_i
    \\
    \vdots
    \\
    \alpha_n
  \end{array}
  \right)
  \longmapsto
  \left(
  \begin{array}{c}
    0
    \\
    \vdots
    \\
    \alpha_i
    \\
    \vdots
    \\
    0
  \end{array}
  \right).
\]
Using the matrix representation, the term projecting the $i^{\rm th}$
coordinate of a vector of size $n$ is
\[
  \xymatrix@R=0em@C=0em{
    \textrm{$i^{\rm th}$ position}\ar@/^1em/[rd]&&
    \\
    {\ve p}^n_i = 
    {\bf mat}({\ve 0},\cdots,{\ve 0},&{\ve e}^n_i,&{\ve 0},\cdots,{\ve
    0}).
  }
\]
We can easily verify that
\[
  \vdash
  {\ve p}^n_i
  :
  \left\llbracket
  \begin{array}{ccccc}
    0 & \cdots & 0 & \cdots & 0
    \\
    \vdots & \ddots &  &  & \vdots
    \\
    0 & & 1 & & 0
    \\
    \vdots & &  & \ddots & \vdots
    \\
    0 & \cdots & 0 & \cdots & 0
  \end{array}
  \right\rrbracket^{\rm type}_{n\times n}
\]
and that
\[
  ({\ve p}^n_{i_0})\,
  \left(
  \sui{n}\alpha_i\cdot{\ve e}^n_i
  \right)
  \longrightarrow^*
  \alpha_{i_0}\cdot{\ve e}^n_{i_0}.
\]

\paragraph{Vectors and diagonal matrices}
Using the projections defined in the previous section, it is possible
to encode the map sending a vector of size $n$ to the corresponding
$n\times n$ matrix:
\[
  \left(
  \begin{array}{c}
    \alpha_1
    \\
    \vdots
    \\
    \alpha_n
  \end{array}
  \right)
  \longmapsto
  \left(
  \begin{array}{c@{~}c@{~}c}
    \alpha_1&&0
    \\
    &\ddots&
    \\
    0&&\alpha_n
  \end{array}
  \right)
\]
with the term
\[
  {\bf diag}^n = 
  \lambda b.{\bf mat}(({\ve p}^n_1)\,\{b\},\ldots,({\ve p}^n_n)\,\{b\})
\]
of type
\[
  \vdash
  {\bf diag}^n :
  \left[
    \left\llbracket
    \begin{array}{c}
      \alpha_1
      \\
      \vdots
      \\
      \alpha_n
    \end{array}
    \right\rrbracket^{\rm type}_{n}
  \right]
  \to
  \left\llbracket
  \begin{array}{ccc}
    \alpha_1 & & 0
    \\
    & \ddots & 
    \\
    0 & & \alpha_n
  \end{array}
  \right\rrbracket^{\rm type}_{n\times n}.
\]
It is easy to check that
\[
  ({\bf diag}^n)\,
  \left[
    \sui{n}\alpha_i\cdot{\bf e}^n_i
  \right]
  \longmapsto^*
  {\bf mat}(\alpha_1\cdot{\bf e}^n_1,\ldots,\alpha_n\cdot{\bf e}^n_n)
\]

\paragraph{Extracting a column vector out of a matrix}
Another construction that is worth exhibiting is the operation
\[
  \left(
  \begin{array}{c@{~}c@{~}c}
    \alpha_{11}&\cdots&\alpha_{1n}
    \\
    \vdots&&\vdots
    \\
    \alpha_{m1}&\cdots&\alpha_{mn}
  \end{array}
  \right)
  \longmapsto
  \left(
  \begin{array}{c}
    \alpha_{1i}
    \\
    \vdots
    \\
    \alpha_{mi}
  \end{array}
  \right).
\]
It is simply defined by multiplying the input matrix with the $i^{\rm
th}$ base column vector:
\[
  {\bf col}^n_i 
  =
  \lambda x.(x)\,{\ve e}^n_i
\]
and one can easily check that this term has type
\[
  \vdash
  {\bf col}^n_i 
  :
  \left\llbracket
  \begin{array}{ccc}
    \alpha_{11}&\cdots&\alpha_{1n}
    \\
    \vdots&&\vdots
    \\
    \alpha_{m1}&\cdots&\alpha_{mn}
  \end{array}
  \right\rrbracket^{\rm type}_{m\times n}
  \to
  \left\llbracket
  \begin{array}{c}
    \alpha_{1i}
    \\
    \vdots
    \\
    \alpha_{mi}
  \end{array}
  \right\rrbracket^{\rm type}_{m}.
\]
Note that the same term ${\bf col}^n_i$ can be typed with several
values of $m$.

\subsubsection{A language of matrices and vectors}

In this section we formalize what was informally presented in the
previous sections: the fact that one can encode simple matrix and
vector operations in \lvec{}, and the fact that the type system
serves as a witness for the result of the encoded operation.

We define the language $\lmat$ of matrices and vectors with the
grammar
\[
  \begin{array}{ll}
    M,N &{}::={} \zeta ~|~ M\otimes N ~|~ (M)\,N
    \\
    u,v &{}::={} \nu ~|~ u\otimes v ~|~ (M)\,u,
  \end{array}
\]
where $\zeta$ ranges over the set matrices and $\nu$ over the set of
(column) vectors.
Terms are implicitly typed: types of matrices are $(m,n)$ where $m$
and $n$ ranges over positive integers, while types of vectors are
simply integers. Typing rules are the following.
\[
  \infer{\zeta:(m,n)}{\zeta\in\mathbb{C}^{m\times n}}
  \qquad
  \infer{M\otimes N:(mm',nn')}{M:(m,n) & N:(m',n')}
  \qquad
  \infer{(M)\,N:(m,n)}{M:(m,n') & N:(n',n)}
\]
\[
  \infer{\nu:m}{\nu\in\mathbb{C}^{m}}
  \qquad
  \infer{u\otimes v:mn}{u:m & v:n}
  \qquad
  \infer{(M)\,u:n}{M:(m,n) & u:m}
\]
The operational semantics of this language is the natural
interpretation of the terms as matrices and vectors. If $M$ computes
the matrix $\zeta$, we write $M\downarrow\zeta$. Similarly, if $u$
computes the vector $\nu$, we write $u\downarrow\nu$.

Following what we already said, matrices and vectors can be interpreted
as types and terms in \lvec{}. The map $\denot{-}^{\rm term}$ sends
terms of ${\it Mat}$ to terms of $\lvec$ and the map $\denot{-}^{\rm
type}$ sends matrices and vectors to types of $\lvec$.
\begin{itemize}
  \item Vectors and matrices are defined as in Sections~\ref{sec:vec}
    and~\ref{sec:mat}.
  \item As we already discussed, the matrix-vector multiplication
    is simply the application of terms in \lvec{}:
    \[
      \termdenot{(M)\,u} = (\termdenot{M})\,\termdenot{u}
    \]
  \item The matrix multiplication is performed by first extracting the
    column vectors, then performing the matrix-vector multiplication:
    this gives a column of the final matrix. We conclude by recomposing
    the final matrix column-wise.

    That is done with the term
    \[
      {\bf app} = 
      \lambda xy.{\bf mat}((x)\,(({\bf col}^m_1)\,y),\ldots,(x)\,(({\bf col}^m_n)\,y))
    \]
    and its type is 
    \begin{center}
      \scalebox{.95}{$
        \left\llbracket
        \begin{array}{ccc}
          \alpha_{11}&\cdots&\alpha_{1n}
          \\
          \vdots&&\vdots
          \\
          \alpha_{m1}&\cdots&\alpha_{mn}
        \end{array}
        \right\rrbracket^{\rm type}_{m\times n}
        \to
        \left\llbracket
        \begin{array}{ccc}
          \beta_{11}&\cdots&\beta_{1k}
          \\
          \vdots&&\vdots
          \\
          \beta_{n1}&\cdots&\beta_{nk}
        \end{array}
        \right\rrbracket^{\rm type}_{n\times k}
        \to
        \left\llbracket
        \left(
        \sui{n}\alpha_{ji}\beta_{il}
        \right)_{\substack{j=1\ldots m\\l=1\ldots k}}
        \right\rrbracket^{\rm type}_{m\times k}
      $}
    \end{center}

    \noindent Hence,
    \[
      \termdenot{(M)~N}=
      (({\bf app})~\termdenot{M})~\termdenot{N}
    \]
  \item
    For defining the the tensor of vectors, we need to multiply the
    coefficients of the vectors:
    \[
      \left(
      \begin{array}{c}
        \alpha_1
        \\
        \vdots
        \\
        \alpha_n
      \end{array}
      \right)
      \otimes
      \left(
      \begin{array}{c}
        \beta_1
        \\
        \vdots
        \\
        \beta_m
      \end{array}
      \right)
      =
      \left(
      \begin{array}{c}
        \alpha_1
        \cdot
        \left(
        \begin{array}{c}
          \beta_1
          \\
          \vdots
          \\
          \beta_m
        \end{array}
        \right)
        \\
        \vdots
        \\
        \alpha_n
        \cdot
        \left(
        \begin{array}{c}
          \beta_1
          \\
          \vdots
          \\
          \beta_m
        \end{array}
        \right)
      \end{array}
      \right)
      =
      \left(
      \begin{array}{c}
        \alpha_1\beta_1
        \\
        \vdots
        \\
        \alpha_1\beta_m
        \\
        \vdots
        \\
        \alpha_n\beta_1
        \\
        \vdots
        \\
        \alpha_n\beta_m
      \end{array}
      \right).
    \]
    We perform this operation in several steps: First, we map the two
    vectors $(\alpha_i)_i$ and $(\beta_j)_j$ into matrices of size
    $mn\times mn$:
    \begin{center}
      \scalebox{.83}{
        $\left(
        \begin{array}{c}
          \alpha_1
          \\
          \vdots
          \\
          \alpha_n
        \end{array}
        \right)
        \mapsto
        \left(
        \begin{array}{c@{}c@{}c@{}c@{}c@{}c@{}c}
          \alpha_1&&&&&&
          \\
          &\vddots&&&&&
          \\
          &&\alpha_1&&&&
          \\
          &&&\vddots&&&
          \\
          &&&&\alpha_n&&
          \\
          &&&&&\vddots&
          \\
          &&&&&&\alpha_n
        \end{array}
        \right)\!\!
        \begin{array}{c}
          \left.\rule{0em}{2.5em}\right\}
          \times{}m
          \\
          \rule{0em}{1em}
          \\
          \left.\rule{0em}{2.5em}\right\}
          \times{}m
        \end{array}
        \textrm{and}\quad
        \left(
        \begin{array}{c}
          \beta_1
          \\
          \vdots
          \\
          \beta_m
        \end{array}
        \right)
        \mapsto
        \left(
        \begin{array}{c@{}c@{}c@{}c@{}c@{}c@{}c}
          \beta_1&&&&&
          \\
          &\vddots&&&&
          \\
          &&\beta_m&&&&
          \\
          &&&\vddots&&&
          \\
          &&&&\beta_1&&
          \\
          &&&&&\vddots&
          \\
          &&&&&&\beta_m
        \end{array}
        \right)\!\!
        \begin{array}{c}
          \left.\rule{0em}{2.5em}\right\}
          \\
          \rule{0em}{1em}
          \\
          \left.\rule{0em}{2.5em}\right\}
        \end{array}
        \left.\rule{0em}{3em}\right\}
        \times{}n.
      $}
    \end{center}
    These two operations 
    can be represented as terms of \lvec{} respectively as follows:
    \begin{center}
      \scalebox{.9}{
        ${\ve m}^{n,m}_1 = 
        \lambda b.
        \,{\bf mat}\left(
        \begin{array}{c}
          ({\ve p}^{n}_1)\cocanon{b},
          \\
          \vdots
          \\
          ({\ve p}^{n}_1)\cocanon{b},
          \\
          \vdots
          \\
          ({\ve p}^{n}_n)\cocanon{b},
          \\
          \vdots
          \\
          ({\ve p}^{n}_n)\cocanon{b}
        \end{array}
        \right)\!\!
        \begin{array}{c}
          \left.\rule{0em}{2.5em}\right\}
          \times{}m
          \\
          \rule{0em}{1em}
          \\
          \left.\rule{0em}{2.5em}\right\}
          \times{}m
        \end{array}
        \textrm{and}\quad
        {\ve m}^{m,n}_2 = 
        \lambda b.\,
        {\bf mat}
        \left(
        \begin{array}{c}
          ({\ve p}^{m}_1)\cocanon{b},
          \\
          \vdots
          \\
          ({\ve p}^{m}_m)\cocanon{b},
          \\
          \vdots
          \\
          ({\ve p}^{m}_1)\cocanon{b},
          \\
          \vdots
          \\
          ({\ve p}^{m}_m)\cocanon{b}
        \end{array}
        \right)\!\!
        \begin{array}{c}
          \left.\rule{0em}{2.5em}\right\}
          \\
          \rule{0em}{1em}
          \\
          \left.\rule{0em}{2.5em}\right\}
        \end{array}
        \left.\rule{0em}{3em}\right\}
        \times{}n.$}
      \end{center}
      It is now enough to multiply these two matrices together to retrieve
      the diagonal:
      \[
        \left(
        \begin{array}{c@{}c@{}c@{}c@{}c@{}c@{}c}
          \alpha_1&&&&&&
          \\
          &\vddots&&&&&
          \\
          &&\alpha_1&&&&
          \\
          &&&\vddots&&&
          \\
          &&&&\alpha_n&&
          \\
          &&&&&\vddots&
          \\
          &&&&&&\alpha_n
        \end{array}
        \right)
        \left(
        \begin{array}{c@{}c@{}c@{}c@{}c@{}c@{}c}
          \beta_1&&&&&
          \\
          &\vddots&&&&
          \\
          &&\beta_m&&&&
          \\
          &&&\vddots&&&
          \\
          &&&&\beta_1&&
          \\
          &&&&&\vddots&
          \\
          &&&&&&\beta_m
        \end{array}
        \right)
        \left(
        \begin{array}{c}
          1\\
          \vdots
          \\
          1\\
          \vdots
          \\
          1\\
          \vdots
          \\
          1\\
        \end{array}
        \right)
        \quad=\quad
        \left(
        \begin{array}{c}
          \alpha_1\beta_1
          \\
          \vdots
          \\
          \alpha_1\beta_m
          \\
          \vdots
          \\
          \alpha_n\beta_1
          \\
          \vdots
          \\
          \alpha_n\beta_m
        \end{array}
        \right)
      \]
      and this can be implemented through matrix-vector multiplication:
      \[
        {\bf tens}^{n,m} = 
        \lambda bc.(({\ve m}^{n,m}_1)\,b)\,\left((({\ve m}^{m,n}_2)\,c)\,\left(\sui{mn}{\ve
        e}^n_i\right)\right).
      \]

      \noindent Hence, if $u:n$ and $v:m$, we have
      \[
        \termdenot{u\otimes v} = 
        (({\bf tens}^{n,m})~\termdenot{u})~\termdenot{v}
      \]

    \item
      The tensor of matrices is done column by column:
      \begin{multline*}
        \qquad\left(
        \begin{array}{ccc}
          \alpha_{11} & \dots & \alpha_{1n}\\
          \vdots & & \vdots\\
          \alpha_{n'1} & \dots & \alpha_{n'n}
        \end{array}
        \right)
        \otimes
        \left(
        \begin{array}{ccc}
          \beta_{11} & \dots & \beta_{1m}\\
          \vdots & & \vdots\\
          \beta_{m'1} & \dots & \beta_{m'm}
        \end{array}
        \right)
        = \\
        \left( 
        \begin{array}{ccc}
          \left( 
          \begin{array}{c}
            \alpha_{11}\\
            \vdots\\
            \alpha_{n'1}
          \end{array}
          \right)\otimes
          \left( 
          \begin{array}{c}
            \beta_{11}\\
            \vdots\\
            \beta_{m'1}
          \end{array}
          \right)
          & \dots &
          \left( 
          \begin{array}{c}
            \alpha_{1n}\\
            \vdots\\
            \alpha_{n'n}
          \end{array}
          \right)\otimes
          \left( 
          \begin{array}{c}
            \beta_{1m}\\
            \vdots\\
            \beta_{m'm}
          \end{array}
          \right)
        \end{array}
        \right)
      \end{multline*}

      \noindent If $M$ be a matrix of size $m\times m'$ and $N$ a matrix of size $n\times n'$. Then $M\otimes N$ has size $m\times n$, and it can be implemented as
      \begin{multline*}
        {\bf Tens}^{m,n} ={}\\
        \lambda bc.{\bf mat}(
        (({\bf tens}^{m,n})~({\bf col}_1^m)~b)~({\bf col}_1^n)~c,
        \cdots
        (({\bf tens}^{m,n})~({\bf col}_n^m)~b)~({\bf col}_m^n)~c)
      \end{multline*}

      \noindent Hence, if $M:(m,m')$ and $N:(n,n')$, we have
      \[
        \termdenot{M\otimes N} =
        (({\bf Tens}^{m,n})~\termdenot{M})~\termdenot{N}
      \]
\end{itemize}

\begin{theorem}\label{thm:matrixsound}
  The denotation of {\it Mat} as terms and types of {\lvec}
  are sound in the following sense.
  \[
    M\downarrow\zeta
    \qquad
    \textrm{implies}
    \qquad
    \vdash\termdenot{M} : \typedenot{\zeta},
  \]
  \[
    u\downarrow\nu
    \qquad
    \textrm{implies}
    \qquad
    \vdash\termdenot{u} : \typedenot{\nu}.
  \]
\end{theorem}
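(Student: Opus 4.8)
The plan is to proceed by induction on the structure of the {\it Mat}-terms, simultaneously for matrices $M$ and vectors $u$ (the two grammars being mutually recursive through the matrix-vector application $(M)\,u$). Since the operational semantics $\downarrow$ is compositional --- the value of a compound term is the corresponding linear-algebraic operation applied to the values of its immediate subterms --- it suffices, for each production, to check that the encoding $\termdenot{-}$ sends correctly-typed subterms to a term of the expected type $\typedenot{-}$, the value being determined by the term.

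In the base cases $\zeta\downarrow\zeta$ and $\nu\downarrow\nu$, the goals $\vdash\termdenot{\zeta}:\typedenot{\zeta}$ and $\vdash\termdenot{\nu}:\typedenot{\nu}$ are precisely the typing derivations for ${\bf mat}(\cdots)$ and for $\sui{n}\alpha_i\cdot\ve e_i^n$ already laid out in Sections~\ref{sec:vec} and~\ref{sec:mat}; each is a routine combination of the rules $(\to_I)$, $(+_I)$, $(\alpha_I)$ and $(\forall_I)$.

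For the inductive step I would begin with matrix-vector multiplication, which is the heart of the argument. Suppose $M\downarrow\zeta$ and $u\downarrow\nu$, so that $(M)\,u\downarrow\zeta\nu$; by the induction hypothesis $\vdash\termdenot{M}:\typedenot{\zeta}$ and $\vdash\termdenot{u}:\typedenot{\nu}$. The type $\typedenot{\zeta}$ has the form $\forall\vara{X}.(C_1\to\cdots\to C_k\to[\vara{X}])\to\vara{X}$ with each $C_j$ a thunked column-vector type, so to type $(\termdenot{M})\,\termdenot{u}$ I would use a single instance of $(\to_E)$, instantiating the bound $\vara{X}$ with the output-vector type: the crucial feature of $(\to_E)$ is that it eliminates the $\forall$ and the arrow together, while the scalar bookkeeping $\alpha_i\times\beta_j$ it records reproduces exactly the entry-wise sums defining $\zeta\nu$. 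The three remaining cases --- matrix multiplication $(M)\,N$, vector tensor $u\otimes v$ and matrix tensor $M\otimes N$ --- are then handled uniformly: one feeds the inductively-typed subterms to the combinators ${\bf app}$, ${\bf tens}^{n,m}$ and ${\bf Tens}^{m,n}$, whose types have already been exhibited in this section, and concludes by iterated use of $(\to_E)$.

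The main obstacle is the verification, in the application cases, that this simultaneous $\forall$/$\to$ elimination delivers exactly the linear combination of unit types required by the denotation of the computed value. Concretely, one must pick the instantiating types $\vec{A}_j$ so that each $T_i[\vec{A}_j/\vec{X}]$ collapses onto the intended basis type $\ve E_k$, and then check --- appealing to the characterisation of types in Lemma~\ref{lem:typecharact} and to the module equivalences $\equiv$ --- that the doubly-indexed sum $\sui{n}\suj{m}\alpha_i\times\beta_j\cdot T_i[\vec{A}_j/\vec{X}]$ produced by $(\to_E)$ reassociates into the claimed result type. This is a finite but notation-heavy computation; once it has been carried out for matrix-vector multiplication, the previously established types of ${\bf app}$, ${\bf tens}^{n,m}$ and ${\bf Tens}^{m,n}$ reduce the other three cases to essentially mechanical applications of $(\to_E)$.
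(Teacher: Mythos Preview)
Your proposal is correct and follows the same strategy as the paper: the paper's proof consists of the single sentence ``The proof is a straightforward structural induction on $M$ and $u$,'' and you have simply (and accurately) unpacked what that induction looks like. The heavy lifting is indeed already done in the preceding subsections, where the types of the combinators ${\bf app}$, ${\bf tens}^{n,m}$ and ${\bf Tens}^{m,n}$ are exhibited; once those are in place the induction is, as the paper says, routine.
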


\begin{proof}
  The proof is a straightfoward structural induction on $M$ and $u$. 
\end{proof}

\subsection{{\lvec} and quantum computation}

In quantum computation, data is encoded on normalised vectors in
Hilbert spaces. For our purpose, their interesting property is to be modules over the ring of complex numbers. The
smallest non-trivial such space is the space of {\em qubits}. The space of qubits is the two-dimensional vector space $\mathbb{C}^2$, together with a chosen orthonormal basis $\{\ket0, \ket1\}$. A quantum bit (or qubit) is a normalised vector $\alpha\ket0 +
\beta\ket1$, where $|\alpha|^2 + |\beta|^2=1$. In quantum computation, the operations on
qubits that are usually considered are the {\em quantum gates}, \ie a chosen set of unitary
operations. For our purpose, their interesting property is to be {\em
linear}.

The fact that one can encode quantum circuits in {\lvec} is a corollary
of Theorem~\ref{thm:matrixsound}. Indeed, a quantum circuit can be
regarded as a sequence of multiplications and tensors of matrices. The
language of term can faithfully represent those, where as the type system can
serve as an abstract interpretation of the actual unitary map computed
by the circuit.

We believe that this tool is a first step towards lifting the ``quantumness'' of algebraic $\lambda$-calculi to the level of a type based analysis. It could also be a step
towards a ``quantum theoretical logic'' coming readily with a Curry-Howard
isomorphism. The logic we are sketching merges intuitionistic logic
and vectorial structure, which makes it intriguing.

The next step in the study of the quantumness of the linear algebraic
$\lambda$-calculus is the exploration of the notion of orthogonality
between terms, and the validation of this notion by means of a compilation
into quantum circuits. The work of~\cite{ValironQPL10} shows
that it is worthwhile pursuing in this direction.

\subsection{{\lvec} and other calculi}
No direct connection seems to exists between {\lvec} and intersection and union types~\cite{PimentelRonchiRoversiFI12,BarbaneraDezaniDeLiguoroIC95}. However, there is an ongoing project of a new type system based on intersections, which may take some of the ideas from the Vectorial lambda calculus.
Indeed, the sum resembles as a non-idempotent intersection, with some extra quantitative information. In~\cite{DebenadettieRonchiITRS12} a type system with non-idempotent intersection has been used to compute a bound on the normalisation time, and in~\cite{BernadetLengrandLMCS13,KesnerVenturaLNCS14} to provide new characterisations on strongly normalising terms. In any case, the Scalar type system~\cite{ArrighiDiazcaroLMCS12} seems more close to these results: only scalars are considered and so $\ve t+\ve t$ has type $2\cdot T$ if $\ve t$ has type $T$, hence the difference with the non-idempotent intersection type is that the scalars are not just natural numbers, but members of a given ring. The Vectorial lambda calculus goes beyond, as it not only have the quantitative information, thought the scalars, but also allows to \textit{intersect} different types.
On the other hand, the interpretation of linear combinations we give in this paper is more in line with a union than an intersection, which may be a interesting starting question to follow.

A different path to follow has been started at \cite{DiazcaroDowek15}, where a non-idempotent intersection (simply) type has been considered, with no scalars, and where all the isomorphisms on such a system are made explicit with an equivalence relation. One derivative of such a work has been a characterisation of superpositions and projective measurement~\cite{DiazcaroDowek16}. A natural objective to pursue is to add scalars to it, by somehow merging it with $\lvec$.

\paragraph{Acknowledgements}
We would like to thank
Gilles Dowek
and
Barbara Petit
for enlightening discussions.

\section{Bibliography}

\bibliographystyle{elsarticle-harv}

\appendix
\section{Detailed proofs of lemmas and theorems in Section~\ref{sec:sr}}\label{app:sr}
\subsection{Lemmas from Section~\ref{sec:prereq}}\label{app:srpre}
\recap{Lemma}{Characterisation of types}{lem:typecharact}
For any type $T$ in $\mathcal{G}$, there exist $n,m\in\mathbb{N}$, $\alpha_1,\dots,\alpha_n$, $\beta_1,\dots,\beta_m\in\Sc$, distinct unit types $U_1,\dots,U_n$ and distinct general variables $\vara{X}_1,\dots,\vara{X}_m$ such that $$T\equiv\sui{n}\alpha_i\cdot U_i+\suj{m}\beta_j\cdot\vara{X}_j\ .$$

\begin{proof}
  Structural induction on $T$.
  \begin{itemize}
    \item Let $T$ be a unit type, then take $\alpha=\beta=1$, $n=1$ and $m=0$, and so $T\equiv\sui{1}1\cdot U=1\cdot U$.
    \item Let $T=\alpha\cdot T'$, then by the induction hypothesis $T'\equiv\sui{n}\alpha_i\cdot U_i+\suj{m}\beta_j\cdot\vara{X}_j$, so $T=\alpha\cdot T'\equiv\alpha\cdot (\sui{n}\alpha_i\cdot U_i+\suj{m}\beta_j\cdot\vara{X}_j)\equiv\sui{n}(\alpha\times\alpha_i)\cdot U_i+\suj{m}(\alpha\times\beta_j)\cdot\vara{X}_j$.
    \item Let $T=R+S$, then by the induction hypothesis $R\equiv\sui{n}\alpha_i\cdot U_i+\suj{m}\beta_j\cdot\vara{X}_j$ and $S\equiv\sui{n'}\alpha'_i\cdot U'_i+\suj{m'}\beta'_j\cdot\vara{X'}_j$, so $T=R+S\equiv\sui{n}\alpha_i\cdot U_i+\sui{n'}\alpha'_i\cdot U'_i+\suj{m}\beta_j\cdot\vara{X}_j+\suj{m'}\beta'_j\cdot\vara{X'}_j$. If the $U_i$ and the $U'_i$ are all different each other, we have finished, in other case, if $U_k=U'_h$, notice that $\alpha_k\cdot U_k+\alpha'_h\cdot U'_h=(\alpha_k+\alpha'_h)\cdot U_k$.
    \item Let $T=\vara X$, then take $\alpha=\beta=1$, $m=1$ and $n=0$, and so $T\equiv\suj{1} 1\cdot\vara{X}=1\cdot\vara X$. \qedhere
  \end{itemize}
\end{proof}

\begin{definition}\rm
  Let $F$ be an algebraic context with $n$ holes. Let
  $\vec U = U_1,\ldots,U_n$ be a list of $n$ unit types. 
  If $U$ is a unit type, we write $\bar U$ for the set of unit types
  equivalent to $U$:
  \[
    \bar U := \{ V ~|~ V \textrm{ is unit and } V \equiv U \}.
  \]
  The {\em context vector} $v_F(\vec U)$ associated with the context
  $F$ and the unit types $\vec U$ is partial map from the set
  $\cal S = \{ \bar U \}$ to scalars. It is inductively defined as
  follows: $v_{\alpha\cdot F}(\vec U) := \alpha v_F(\vec U)$,
  $v_{F + G}(\vec U) := v_F(\vec U) + v_G(\vec U)$, and finally
  $v_{[-_i]}(\vec U) := \{\bar U_i \mapsto 1\}$.
  The sum is defined on these partial map as follows:
  \[
    (f + g)(\vec U) = 
    \left\{
      \begin{array}{ll}
        f(\vec U) + g(\vec U)& \textrm{if both are defined}
        \\
        f(\vec U)& \textrm{if $f(\vec U)$ is defined but not $g(\vec U)$}
        \\
        g(\vec U)& \textrm{if $g(\vec U)$ is defined but not $f(\vec U)$}
        \\
        \textrm{is not defined} & \textrm{if neither $f(\vec U)$ nor
        $g(\vec U)$ is defined.}
      \end{array}
      \right.
    \]
    Scalar multiplication is defined as follows:
    \[
      (\alpha f)(\vec U) = 
      \left\{
        \begin{array}{ll}
          \alpha (f(\vec U))& \textrm{if $f(\vec U)$ is defined}
          \\
          \textrm{is not defined} & \textrm{if $f(\vec U)$ is not defined.}
        \end{array}
        \right.
      \]
    \end{definition}

    \begin{lemma}\label{lem:equivdistinctscalarsaux}
      Let $F$ and $G$ be two algebraic contexts with respectively $n$ and
      $m$ holes. Let $\vec U$ be a list of $n$ unit types, and $\vec V$ be
      a list of $m$ unit types. Then $F(\vec U) \equiv G(\vec V)$ implies
      $v_F(\vec U) = v_G(\vec V)$.
    \end{lemma}

    \begin{proof}
      The derivation of $F(\vec U) \equiv F(\vec V)$ essentially consists
      in a sequence of the elementary rules (or congruence thereof) in
      Figure~\ref{fig:types} composed with transitivity:
      \[
        F(\vec U) = W_1\equiv W_2 \equiv \cdots \equiv W_k = G(\vec V).
      \]
      We prove the result by induction on $k$.
      \begin{itemize}
        \item Case $k=1$. Then $F(\vec U)$ is syntactically equal to
          $G(\vec V)$: we are done.
        \item Suppose that the result is true for sequences of size $k$, and
          let 
          \[
            F(\vec U) = W_1\equiv W_2 \equiv \cdots \equiv W_k
            \equiv W_{k+1} = G(\vec V).
          \]
          Let us concentrate on the first step $F(\vec U) \equiv W_2$: it is
          an elementary step from Figure~\ref{fig:types}. By structural
          induction on the proof of $F(\vec U) \equiv W_2$ (which only uses
          congruence and elementary steps, and not transitivity), we can show
          that $W_2$ is of the form $F'(\vec U')$ where
          $v_F(\vec U) = v_{F'}(\vec U')$. We are now in power of applying
          the induction hypothesis, because the sequence of elementary
          rewrites from $F'(\vec U')$ to $G(\vec V)$ is of size $k$.
          Therefore $v_{F'}(\vec U') = v_G(\vec V)$. We can then conclude
          that $v_F(\vec U) = v_G(\vec V)$.
      \end{itemize}
      This conclude the proof of the lemma.
    \end{proof}

    \recap{Lemma}{Equivalence between sums of distinct elements (up to $\equiv$)}{lem:equivdistinctscalars}
    Let $U_1,\dots,U_n$ be a set of distinct (not equivalent) unit types, and let $V_1,\dots,V_m$ be also a set distinct unit types. If $\sui{n}\alpha_i\cdot U_i\equiv\suj{m}\beta_j\cdot V_j$, then $m=n$ and there exists a permutation $p$ of $n$ such that $\forall i$, $\alpha_i=\beta_{p(i)}$ and $U_i\equiv V_{p(i)}$.

    \begin{proof}
      Let $S = \sui{n}\alpha_i\cdot U_i$ and $T = \suj{m}\beta_j\cdot V_j$.
      Both $S$ and $T$ can be
      respectively written as $F(\vec U)$ and $G(\vec V)$. Using 
      Lemma~\ref{lem:equivdistinctscalarsaux}, we conclude that $v_F(\vec
      U) = v_G(\vec V)$.
      Since all $U_i$'s are pairwise non-equivalent, the $\bar U_i$'s are
      pairwise distinct.
      \[
        v_F(\vec U) = \{ \bar U_i \mapsto \alpha_i~|~i=1\ldots n\}.
      \]
      Similarly, the $\bar V_j$'s  are pairwise disjoint, and
      \[
        v_G(\vec G) = \{ \bar V_j \mapsto \beta_j~|~i=1\ldots m\}.
      \]
      We obtain the desired result because these two partial maps are
      supposed to be equal. Indeed, this immplies:
      \begin{itemize}
        \item $m=n$ because the domains are equal (so they should have the
          same size)
        \item Again using the fact that the domains are equal, the sets
          $\{\bar U_i\}$ and $\{\bar V_j\}$ are equal: this means there
          exists a permutation $p$ of $n$ such that $\forall i$,
          $\bar U_i= \bar V_{p(i)}$, meaning $U_i\equiv V_{p(i)}$.
        \item Because the partial maps are equal, the images of a given
          element $\bar U_i = \bar V_{p(i)}$ under $v_F$ and $v_G$ are in
          fact the same: we therefore have $\alpha_i=\beta_{p(i)}$.
      \end{itemize}
      And this closes the proof of the lemma.
    \end{proof}

    \recap{Lemma}{Equivalences $\forall_I$}{lem:equivforall}
    Let $U_1,\dots,U_n$ be a set of distinct (not equivalent) unit types and let $V_1,\dots,V_n$ be also a set of distinct unit types.
    \begin{enumerate}
      \item\label{ap:it:equivforall1}
        $\sui{n}\alpha_i\cdot U_i\equiv\suj{m}\beta_j\cdot V_j\Leftrightarrow\sui{n}\alpha_i\cdot\forall X.U_i\equiv\suj{m}\beta_j\cdot\forall X.V_j$.
      \item\label{ap:it:equivforall2} $\sui{n}\alpha_i\cdot\forall X.U_i\equiv\suj{m}\beta_j\cdot V_j\Rightarrow\forall V_j,\exists W_j~/~V_j\equiv\forall X.W_j$.
      \item\label{ap:it:equivforall3} $T\equiv R\Rightarrow T[A/X]\equiv R[A/X]$.
    \end{enumerate}

    \begin{proof}
      Item (1) From Lemma~\ref{lem:equivdistinctscalars}, $m=n$, and without
      loss of generality, for all $i$, $\alpha_i=\beta_i$ and $U_i=V_i$ in
      the left-to-right direction, $\forall X.U_i=\forall X.V_i$ in the
      right-to-left direction. In both cases we easily conclude.

      \medskip
      \noindent
      Item (2) is similar.

      \medskip
      \noindent
      Item (3) is a straightforward induction on the equivalence $T\equiv R$.
    \end{proof}

    \recap{Lemma}{$\succeq$-stability}{lem:subjectreductionofrelation}
    If $T\succeq^{\ve t}_{\V,\Gamma} R$, $\ve t\to\ve r$ and $\Gamma\vdash\ve r: T$, then $T\succeq^{\ve r}_{\V,\Gamma} R$.

    \begin{proof}
      It suffices to show this for $\succ^{\ve t}_{X,\Gamma}$, with $X\in\V$. Observe that since $T\succ^{\ve t}_{X,\Gamma}R$, then $X\notin\FV{\Gamma}$. We only have to prove that $\Gamma\vdash\ve r: R$ is derivable from $\Gamma\vdash\ve r: T$. We proceed now by cases:
      \begin{itemize}
        \item $T\equiv\sui n\alpha_i\cdot U_i$ and $R\equiv\sui n\alpha_i\cdot\forall X.U_i$, then using rules $\forall_I$ and $\equiv$, we can deduce $\Gamma\vdash\ve r: R$.
        \item $T\equiv\sui n\alpha_i\cdot\forall X.U$ and $R\equiv\sui n\alpha_i\cdot U_i[A/X]$, then using rules $\forall_E$ and $\equiv$, we can deduce $\Gamma\vdash\ve r: R$.\qedhere
      \end{itemize}
    \end{proof}

    \recap{Lemma}{Arrows comparison}{lem:arrowscomp} If
    $V\to R\succeq^{\ve t}_{\V,\Gamma} \forall\vec X.(U\to T)$, then $U\to T\equiv(V\to R)[\vec{A}/\vec{Y}]$, with $\vec Y\notin \FV{\Gamma}$.

    \begin{proof}
      Let $(~\cdot~)^\circ$ be a map from types to types defined as follows,
      $$\begin{array}{r@{~}c@{~}l@{\hspace{0.4cm}}r@{~}c@{~}l@{\hspace{0.4cm}}r@{~}c@{~}l}
        X^\circ &=& X &
        (U\to T)^\circ &=& U\to T &
        (\forall X.T)^\circ &=& T^\circ \\
        (\alpha\cdot T)^\circ &=&\alpha\cdot T^\circ & && &
        (T+R)^\circ &=&T^\circ+R^\circ
      \end{array}$$

      \noindent
      We need three intermediate results:
      \begin{enumerate}
        \item If $T\equiv R$, then $T^\circ\equiv R^\circ$.
        \item For any types $U, A$, there exists $B$ such that $(U[A/X])^\circ=U^\circ[B/X]$.
        \item For any types $V, U$, there exists $\vec A$ such that if $V\succeq^{\ve t}_{\V,\Gamma} \forall\vec X.U$, then $U^\circ\equiv V^\circ[\vec A/\vec X]$.
      \end{enumerate}
      {\it Proofs.}
      \begin{enumerate}
        \item Induction on the equivalence rules. We only give the basic cases since the inductive step, given by the context where the equivalence is applied, is trivial.
          \begin{itemize}
            \item $(1\cdot T)^\circ=1\cdot T^\circ\equiv T^\circ$.
            \item $(\alpha\cdot(\beta\cdot T))^\circ=\alpha\cdot(\beta\cdot T^\circ)\equiv(\alpha\times\beta)\cdot T^\circ=((\alpha\times\beta)\cdot T)^\circ$.
            \item $(\alpha\cdot T+\alpha\cdot R)^\circ=\alpha\cdot T^\circ+\alpha\cdot R^\circ\equiv\alpha\cdot(T^\circ+R^\circ)=(\alpha\cdot(T+R))^\circ$.
            \item $(\alpha\cdot T+\beta\cdot T)^\circ=\alpha\cdot T^\circ+\beta\cdot T^\circ\equiv(\alpha+\beta)\cdot T^\circ=((\alpha+\beta)\cdot T)^\circ$.
            \item $(T+R)^\circ=T^\circ+R^\circ\equiv R^\circ+T^\circ=(R+T)^\circ$.
            \item $(T+(R+S))^\circ=T^\circ+(R^\circ+S^\circ)\equiv (T^\circ+R^\circ)+S^\circ=((T+R)+S)^\circ$.
          \end{itemize}
        \item Structural induction on $U$.
          \begin{itemize}
            \item $U=\varu X$. Then $(\varu X[V/\varu X])^\circ=V^\circ=\varu X[V^\circ/\varu X]=\varu X^\circ[V^\circ/\varu X]$.
            \item $U=\varu Y$. Then $(\varu Y[A/X])^\circ=\varu Y=\varu Y^\circ[A/X]$.
            \item $U=V\to T$. Then $((V\to T)[A/X])^\circ=(V[A/X]\to T[A/X])^\circ=V[A/X]\to T[A/X]=(V\to T)[A/X]=(V\to T)^\circ[A/X]$.
            \item $U=\forall Y.V$. Then $((\forall Y.V)[A/X])^\circ=(\forall Y.V[A/X])^\circ=(V[A/X])^\circ$, which by the induction hypothesis is equivalent to $V^\circ[B/X]=(\forall Y.V)^\circ[B/X]$.
          \end{itemize}
        \item It suffices to show this for $V\succ^{\ve t}_{X,\Gamma} \forall\vec X.U$. Cases:
          \begin{itemize}
            \item $\forall\vec X.U\equiv\forall Y.V$, then notice that $(\forall\vec X.U)^\circ \equiv_{(1)}(\forall Y.V)^\circ=V^\circ$.
            \item $V\equiv\forall Y.W$ and $\forall\vec X.U\equiv W[A/X]$, then

              $(\forall\vec X.U)^\circ\equiv_{(1)}(W[A/X])^\circ\equiv_{(2)} W^\circ[B/X]=(\forall Y.W)^\circ[B/X]\equiv_{(1)}V^\circ[B/X]$.
          \end{itemize}
      \end{enumerate}
      Proof of the lemma. $U\to T\equiv(U\to T)^\circ$, by the intermediate result 3, this is equivalent to $(V\to R)^\circ[\vec A/\vec X]=(V\to R)[\vec A/\vec X]$.
    \end{proof}

    \recap{Lemma}{Scalars}{lem:scalars}
    For any context $\Gamma$, term $\ve t$, type $T$ and scalar
    $\alpha$, if
    $\Gamma\vdash\alpha\cdot\ve{t}: T$,
    then there exists a type $R$ such that
    $T\equiv\alpha\cdot R$ and
    $\Gamma\vdash\ve{t}: R$.
    Moreover, if the minimum size of the derivation of
    $\Gamma\vdash\alpha\cdot\ve t:T$ is $s$,
    then if $T=\alpha\cdot R$,
    the minimum size of the derivation of
    $\Gamma\vdash\ve t:R$
    is at most $s-1$, in other case, its minimum size is at most $s-2$.

    \begin{proof}
      We proceed by induction on the typing derivation.
    \item\parbox{4.5cm}{\prooftree\Gamma\vdash\alpha\cdot\ve t:\sui{n}\alpha_i\cdot U_i
      \justifies\Gamma\vdash\alpha\cdot\ve t:\sui{n}\alpha_i\cdot\forall{X}.U_i
      \using\forall_{I}
  \endprooftree}
  \parbox{8.9cm}{By the induction hypothesis $\sui{n}\alpha_i\cdot 
  U_i\equiv\alpha\cdot R$, and by Lemma~\ref{lem:typecharact}, 
  $R\equiv\suj{m}\beta_j\cdot V_j+\suk{h}\gamma_k\cdot\vara{X}_k$. So it is easy to see that $h=0$ and so $R\equiv\suj{m}\beta_j\cdot V_j$.
  Hence $\sui{n}\alpha_i\cdot 
  U_i\equiv\suj{m}\alpha\times\beta_j\cdot V_j$. Then by 
  Lemma~\ref{lem:equivforall}, 
  $\sui{n}\alpha_i\cdot\forall{X}.U_i\equiv\suj{m}
  \alpha\times\beta_j\cdot\forall{X}.V_j\equiv\alpha\cdot\suj{m}
  \beta_j\cdot\forall{X}.V_j$. In addition, by the induction hypothesis, 
  $\Gamma\vdash\ve t:R$ with a derivation of size $s-3$ (or $s-2$ if $n=1$), so 
  by rules $\forall_{{I}}$ and $\equiv$ (not needed if $n=1$), 
  $\Gamma\vdash\ve t:\suj{m}\beta_j\cdot\forall{X}.V_j$ in size $s-2$ (or 
$s-1$ in the case $n=1$).}
\medskip

\item\parbox{5cm}{\prooftree\Gamma\vdash\alpha\cdot\ve t:\sui{n}\alpha_i\cdot\forall{X}.U_i
  \justifies\Gamma\vdash\alpha\cdot\ve t:\sui{n}\alpha_i\cdot U_i[A/{X}]
  \using\forall_{{E}}
\endprooftree}
\parbox{8.4cm}{By the induction hypothesis 
  $\sui{n}\alpha_i\cdot\forall{X}.U_i\equiv\alpha\cdot R$, and by 
  Lemma~\ref{lem:typecharact}, $R\equiv\suj{m}\beta_j\cdot V_j+\suk{h}\gamma_k\cdot\vara{X}_k$. So it is easy to see that $h=0$ and so 
  $R\equiv\suj{m}\beta_j\cdot V_j$. Hence 
  $\sui{n}\alpha_i\cdot\forall{X}.U_i\equiv\suj{m}\alpha\times\beta_j\cdot 
  V_j$. Then by Lemma~\ref{lem:equivforall}, for each $V_j$, there exists $W_j$ 
  such that $V_j\equiv\forall{X}.W_j$, so 
  $\sui{n}\alpha_i\cdot\forall{X}.U_i\equiv\suj{m}
  \alpha\times\beta_j\cdot\forall{X}.W_j$. Then by the same lemma, 
  $\sui{n}\alpha_i\cdot U_i[A/{X}]\equiv\suj{m}\alpha\times\beta_j\cdot 
  W_j[A/{X}]\equiv\alpha\cdot\suj{m}\beta_j\cdot W_j[A/{X}]$. In 
  addition, by the induction hypothesis, $\Gamma\vdash\ve t:R$ with a derivation 
  of size $s-3$ (or $s-2$ if $n=1$), so by rules $\forall_{{E}}$ and 
  $\equiv$ (not needed if $n=1$), $\Gamma\vdash\ve t:\suj{m}\beta_j\cdot 
  W_j[A/{X}]$ in size $s-2$ (or 
$s-1$ in the case $n=1$).}
\medskip

  \item\parbox{3.1cm}{\prooftree\Gamma\vdash\ve t:T
      \justifies\Gamma\vdash\alpha\cdot\ve t:\alpha\cdot T
      \using\alpha_I
  \endprooftree}
  \parbox{10.3cm}{Trivial case.}
  \medskip

\item\parbox{4.3cm}{\prooftree\Gamma\vdash\alpha\cdot\ve t:T\qquad T\equiv R
    \justifies\Gamma\vdash\alpha\cdot\ve t:R
    \using\equiv
\endprooftree}
\parbox{9.1cm}{By the induction hypothesis $T\equiv\alpha\cdot S$, and 
  $\Gamma\vdash\ve t:S$. Notice that $R\equiv T\equiv\alpha\cdot S$.
  If 
  $T=\alpha\cdot S$, then it is derived with a minimum size of at most $s-2$. If 
  $T=R$, then the minimum size remains because the last $\equiv$ 
  rule is redundant. In other case, the sequent can be derived with minimum size 
  at most $s-1$.
\qedhere}
\end{proof}

\recap{Lemma}{Type for zero}{lem:termzero}
Let $\ve t=\ve 0$ or $\ve t=\alpha\cdot\ve 0$, then $\Gamma\vdash\ve t:T$ implies $T\equiv 0\cdot R$.

\begin{proof}
  We proceed by induction on the typing derivation.
\item\parbox{6.9cm}{
    \prooftree\Gamma\vdash\ve 0:T
    \justifies\Gamma\vdash \alpha\cdot\ve 0:0\cdot T
    \using\alpha_I
  \endprooftree
  \quad and\quad
  \prooftree\Gamma\vdash\ve t:T
  \justifies\Gamma\vdash\ve 0:0\cdot T
  \using 0_I
\endprooftree  
  }
  \parbox{6.5cm}{Trivial cases}
  \medskip

\item \parbox{3.7cm}{
    \prooftree\Gamma\vdash \ve t:\sui{n}\alpha_i\cdot U_i
    \justifies\Gamma\vdash \ve t:\sui{n}\alpha_i\cdot V_i
    \using\forall
\endprooftree}
\parbox{9.7cm}{$\forall$-rules ($\forall_I$ and
  $\forall_E$) have both the same structure as shown on the
  left. In both cases, by the induction hypothesis $\sui{n}\alpha_i\cdot 
  U_i\equiv 0\cdot R$, and by Lemma~\ref{lem:typecharact}, $R\equiv\suj{m}\beta_j\cdot W_j+\suk{h}\gamma_k\cdot\vara{X}_k$. It is easy to check that $h=0$, so $\sui{n}\alpha_i\cdot U_i\equiv 0\cdot\suj{m}\beta_j\cdot W_j\equiv\suj{m}0\cdot W_j$.
  Hence, using the same $\forall$-rule, we can derive
  $\Gamma\vdash\ve t:\suj{m} 0\cdot W'_j$,
  and by Lemma~\ref{lem:equivforall} we can ensure that 
  $\sui{n}\alpha_i\cdot V_i\equiv 0\cdot\suj{m}W_j'$.
}
\medskip

 \item \parbox{4.2cm}{
     \prooftree\Gamma\vdash \ve t:T\qquad T\equiv R
     \justifies\Gamma\vdash \ve t:R
     \using\equiv
 \endprooftree}
 \parbox{9.2cm}{By the induction hypothesis $R\equiv T\equiv 0\cdot S$.
 \qedhere}
\end{proof}

\recap{Lemma}{Sums}{lem:sums}
If $\Gamma\vdash\ve t+\ve r:S$, then $S\equiv T+R$ with $\Gamma\vdash\ve t:T$ 
and $\Gamma\vdash\ve r:R$.
Moreover, if the size of the derivation of $\Gamma\vdash\ve t+\ve r:S$ is $s$, 
then if $S=T+R$, the minimum sizes of the derivations of $\Gamma\vdash\ve t:T$ 
and $\Gamma\vdash\ve r:R$ are at most $s-1$, and if $S\neq T+R$, the 
minimum sizes of these derivations are at most $s-2$.

\begin{proof}
  We proceed by induction on the typing derivation.
\item \parbox{4cm}{
    \prooftree\Gamma\vdash\ve t+\ve r:\sui{n}\alpha_i\cdot U_i
    \justifies\Gamma\vdash\ve t+\ve r:\sui{n}\alpha_i\cdot V_i
    \using\forall
\endprooftree}
\parbox{9.4cm}{Rules $\forall_I$ and $\forall_E$ have both the same
  structure as shown on the left. In any case, by the induction hypothesis 
  $\Gamma\vdash\ve t:T$ and $\Gamma\vdash\ve r:R$ with 
  $T+R\equiv\sui{n}\alpha_i\cdot U_i$, and derivations of minimum size at most 
$s-2$ if the equality is true, or $s-3$ if these types are not equal.}

\noindent In the second case (when the types are not equal), there exists
$N,M\subseteq\{1,\dots,n\}$ with $N\cup M=\{1,\dots,n\}$ such that
\begin{align*}T\equiv\sum_{i\in N\setminus M}\alpha_i\cdot U_i+\sum_{i\in N\cap M}\alpha_i'\cdot U_i&\qquad\textrm{and}\\
  R\equiv\sum_{i\in M\setminus N}\alpha_i\cdot U_i+\sum_{i\in N\cap M}\alpha_i''\cdot U_i&
\end{align*}
where $\forall i\in N\cap M$, $\alpha_i'+\alpha_i''=\alpha_i$.
Therefore, using $\equiv$ (if needed) and the same $\forall$-rule, we get 
$\Gamma\vdash\ve t:\sum_{i\in N\setminus M}\alpha_i\cdot V_i+\sum_{i\in N\cap 
M}\alpha_i'\cdot V_i$ and $\Gamma\vdash\ve r:\sum_{i\in M\setminus 
N}\alpha_i\cdot V_i+\sum_{i\in N\cap M}\alpha_i''\cdot V_i$, with derivations of 
minimum size at most $s-1$.
\medskip

  \item \parbox{4.5cm}{
      \prooftree\Gamma\vdash\ve t+\ve r:S'\qquad S'\equiv S
      \justifies\Gamma\vdash\ve t+\ve r:S
      \using\equiv
  \endprooftree}
  \parbox{8.9cm}{By the induction hypothesis, $S\equiv S'\equiv T+R$ and we 
    can derive $\Gamma\vdash\ve t:T$ and $\Gamma\vdash\ve r:R$ with a minimum size 
  of at most $s-2$.}
  \medskip

\item \parbox{4.3cm}{
    \prooftree\Gamma\vdash\ve t:T\qquad\Gamma\vdash\ve r:R
    \justifies\Gamma\vdash\ve t+\ve r:T+R
    \using+_I
\endprooftree}
\parbox{9.1cm}{This is the trivial case.\qedhere}
\end{proof}

\recap{Lemma}{Applications}{lem:app}
If $\Gamma\vdash(\ve t)~\ve r:T$, then $\Gamma\vdash\ve t:\sui{n}\alpha_i\cdot\forall\vec{X}.(U\to T_i)$ and $\Gamma\vdash\ve r:\suj{m}\beta_j\cdot U[\vec{A}_j/\vec{X}]$
where $\sui{n}\suj{m}\alpha_i\times\beta_j\cdot T_i[\vec{A}_j/\vec{X}]\succeq^{(\ve t)\ve r}_{\V,\Gamma} T$ for some $\V$.

\begin{proof}
  We proceed by induction on the typing derivation.
\item \parbox{4cm}{
    \prooftree\Gamma\vdash(\ve t)~\ve r:\suk{o}\gamma_k\cdot V_k
    \justifies\Gamma\vdash(\ve t)~\ve r:\suk{o}\gamma_k\cdot W_k
    \using\forall
\endprooftree}
\parbox{9.4cm}{Rules $\forall_I$ and $\forall_E$ have both the same
  structure as shown on the left. In any case, by the induction hypothesis
  $\Gamma\vdash\ve t:\sui{n}\alpha_i\cdot\forall\vec X.(U\to T_i)$,
  $\Gamma\vdash\ve r:\suj{m}\beta_j\cdot U[\vec{A}_j/\vec{X}]$ and
  $\sui{n}\suj{m}\alpha_i\times\beta_j\cdot T_i[\vec{A}_j/\vec{X}]\succeq^{(\ve t)\ve r}_{\V,\Gamma}\suk{o}\gamma_k\cdot V_k\succeq^{(\ve t)\ve r}_{\V,\Gamma}\suk{o}\gamma_k\cdot W_k$.}
  \medskip

\item \parbox{4.3cm}{
    \prooftree\Gamma\vdash(\ve t)~\ve r:S\qquad S\equiv R
    \justifies\Gamma\vdash(\ve t)~\ve r:R
    \using\equiv
\endprooftree}
\parbox{9.1cm}{By the induction hypothesis
  $\Gamma\vdash\ve t:\sui{n}\alpha_i\cdot\forall\vec X.(U\to T_i)$,
  $\Gamma\vdash\ve r:\suj{m}\beta_j\cdot U[\vec{A}_j/\vec{X}]$ and
  $\sui{n}\suj{m}\alpha_i\times\beta_j\cdot T_i[\vec{A}_j/\vec{X}]\succeq^{(\ve t)\ve r}_{\V,\Gamma}S\equiv R$.}

  \medskip
\item
  \prooftree\Gamma\vdash\ve t:\sui{n}\alpha_i\cdot\forall\vec{X}.(U\to T_i)
  \quad
  \Gamma\vdash\ve r:\suj{m}\beta_j\cdot U[\vec{A}_j/\vec{X}]
  \justifies\Gamma\vdash(\ve t)~\ve r:\sui{n}\suj{m}\alpha_i\times\beta_j\cdot T_i[\vec{A}_j/\vec{X}]
  \using\to_E
\endprooftree
\quad This is the trivial case.\qedhere
\end{proof}

\recap{Lemma}{Abstractions}{lem:abs} If $\Gamma\vdash\lambda x.\ve t:T$, then $\Gamma,x:U\vdash\ve t:R$ where $U\to R\succeq^{\lambda x.\ve t}_{\V,\Gamma} T$ for some $\V$.

\begin{proof}
  We proceed by induction on the typing derivation.

\item \parbox{3.7cm}{
    \prooftree\Gamma\vdash\lambda x.\ve t:\sui{n}\alpha_i\cdot U_i
    \justifies\Gamma\vdash\lambda x.\ve t:\sui{n}\alpha_i\cdot V_i
    \using\forall
\endprooftree}
\parbox{9.7cm}{Rules $\forall_I$ and $\forall_E$ have both the same
  structure as shown on the left. In any case, by the induction hypothesis
  $\Gamma,x:U\vdash\ve t:R$, where $U\to R\succeq^{\lambda x.\ve t}_{\V,\Gamma}\sui{n}\alpha_i\cdot U_i\succeq^{\lambda x.\ve t}_{\V,\Gamma}\sui{n}\alpha_i\cdot V_i$.}
  \medskip

\item \parbox{4.3cm}{
    \prooftree\Gamma\vdash\lambda x.\ve t:R\qquad R\equiv T
    \justifies\Gamma\vdash\lambda x.\ve t:T
    \using\equiv
\endprooftree}
\parbox{9.1cm}{By the induction hypothesis
  $\Gamma,x:U\vdash\ve t:S$ where $U\to S\succeq^{\lambda x.\ve t}_{\V,\Gamma} R\equiv T$.}
  \medskip

\item\parbox{3.6cm}{
    \prooftree\Gamma,x:U\vdash\ve t:T
    \justifies\Gamma\vdash\lambda x.\ve t:U\to T
    \using\to_I
\endprooftree}
\parbox{9.8cm}{This is the trivial case.\qedhere}
\end{proof}

\recap{Lemma}{Basis terms}{lem:basevectors}
For any context $\Gamma$, type $T$ and basis term $\ve{b}$, if
$\Gamma\vdash\ve{b}: T$ then there exists a unit type $U$ such
that $T\equiv U$.

\begin{proof}
  By induction on the typing derivation.
\item \parbox{3.2cm}{
    \prooftree\Gamma\vdash\ve b:\sui{n}\alpha_i\cdot U_i
    \justifies\Gamma\vdash\ve b:\sui{n}\alpha_i\cdot V_i
    \using\forall
\endprooftree}
\parbox{10.2cm}{Rules $\forall_I$ and $\forall_E$ have both the same
  structure as shown on the left. In any case, by the induction hypothesis $U\equiv\sui{n}\alpha_i\cdot U_i\succeq^{\ve b}_{\V,\Gamma} \sui{n}\alpha_i\cdot V_i$, then by a straightforward case analysis, we can check that $\sui{n}\alpha_i\cdot V_i\equiv U'$.}
  \medskip

\item \parbox{3.8cm}{
    \prooftree\Gamma\vdash\ve b:R\qquad R\equiv T
    \justifies\Gamma\vdash\ve b:T
    \using\equiv
\endprooftree}
\parbox{9.6cm}{By the induction hypothesis $U\equiv R\equiv T$.}
\medskip

  \item\parbox{7.8cm}{
      \prooftree
      \justifies\Gamma,x:U\vdash x:U
      \using ax
    \endprooftree\quad or\quad
    \prooftree\Gamma,x:U\vdash\ve t:T
    \justifies\Gamma\vdash\lambda x.\ve t:U\to T
    \using\to_I
\endprooftree}
\parbox{5.6cm}{These two are the trivial cases.\qedhere}
\end{proof}

\recap{Lemma}{Substitution lemma}{lem:substitution}
For any term ${\ve t}$, basis term $\ve b$, term variable $x$, context $\Gamma$, types $T$, $U$, type variable $X$ and type $A$, where $A$ is a unit type if $X$ is a unit variables, otherwise $A$ is a general type, we have,
\begin{enumerate}
  \item\label{ap:it:substitutionTypes} if $\Gamma\vdash\ve{t}: T$, then $\Gamma[A/X]\vdash\ve{t}: T[A/X]$;
  \item\label{ap:it:substitutionTerms} if $\Gamma,x:U\vdash\ve t:T$, $\Gamma\vdash\ve b:U$ then $\Gamma\vdash\ve t[\ve b/x]: T$.
\end{enumerate}

\begin{proof}~
  \begin{enumerate}
    \item Induction on the typing derivation.
      \bigskip

      \parbox{3.1cm}{
        \prooftree
        \justifies\Gamma,x:U\vdash x:U
        \using ax
    \endprooftree}
    \parbox{9.4cm}{Notice that $\Gamma[A/X],x:U[A/X]\vdash x:U[A/X]$ can also be derived with the same rule.}
    \bigskip

    \parbox{2.5cm}{
      \prooftree\Gamma\vdash\ve t:T
      \justifies\Gamma\vdash\ve 0:0\cdot T
      \using 0_I
  \endprooftree}
  \parbox{10cm}{By the induction hypothesis $\Gamma[A/X]\vdash\ve t:T[A/X]$, so by rule $0_I$, $\Gamma[A/X]\vdash\ve 0:0\cdot T[A/X]=(0\cdot T)[A/X]$.}
  \bigskip

  \parbox{3.6cm}{
    \prooftree\Gamma,x:U\vdash\ve t:T
    \justifies\Gamma\vdash\lambda x.\ve t:U\to T
    \using\to_I
\endprooftree}
\parbox{8.9cm}{By the induction hypothesis $\Gamma[A/X],x:U[A/X]\vdash\ve t:T[A/X]$, so by rule $\to_I$, $\Gamma[A/X]\vdash\lambda x.\ve t:U[A/X]\to T[A/X]=(U\to T)[A/X]$.}
\bigskip

\prooftree\Gamma\vdash\ve t:\sui{n}\alpha_i\cdot\forall\vec Y.(U\to T_i)\qquad\Gamma\vdash\ve r:\suj{m}\beta_j\cdot U[\vec B_j/\vec Y]
\justifies\Gamma\vdash(\ve t)~\ve r:\sui{n}\suj{m}\alpha_i\times\beta_j\cdot T_i[\vec B_j/\vec Y]
\using\to_E
    \endprooftree

    By the induction hypothesis
    $\Gamma[A/X]\vdash\ve t:(\sui{n}\alpha_i\cdot\forall\vec Y.(U\to T_i))[A/X]$ and this type is equal to $\sui{n}\alpha_i\cdot\forall\vec Y.(U[A/X]\to T_i[A/X])$. Also $\Gamma[A/X]\vdash\ve r:(\suj{m}\beta_j\cdot U[\vec B_j/\vec Y])[A/X]=
    \suj{m}\beta_j\cdot U[\vec B_j/\vec Y][A/X]$. Since $\vec Y$ is bound, we can consider it is not in $A$. Hence $U[\vec B_j/\vec Y][A/X]=U[A/X][\vec B_j[A/X]/\vec Y]$, and so, by rule $\to_E$,
    \begin{align*}
      \Gamma[A/X]\vdash(\ve t)~\ve r&:\sui{n}\suj{m}\alpha_i\times\beta_j\cdot T_i[A/X][\vec B_j[A/X]/\vec Y]\\
      &=(\sui{n}\suj{m}\alpha_i\times\beta_j\cdot T_i[\vec B_j/\vec Y])[A/X]\ .
    \end{align*}
    \bigskip

    \parbox{6.1cm}{
      \prooftree\Gamma\vdash\ve t:\sui{n}\alpha_i\cdot U_i\qquad Y\notin\FV{\Gamma}
      \justifies\Gamma\vdash\ve t:\sui{n}\alpha_i\cdot\forall Y.U_i
      \using\forall_I
  \endprooftree}
  \parbox{6.4cm}{By the induction hypothesis,
    $\Gamma[A/X]\vdash\ve t:(\sui{n}\alpha_i\cdot
    U_i)[A/X]=\sui{n}\alpha_i\cdot U_i[A/X]$. Then, by
    rule $\forall_I$, $\Gamma[A/X]\vdash\ve
    t:\sui{n}\alpha_i\cdot \forall
    Y.U_i[A/X]=(\sui{n}\alpha_i\cdot\forall Y.U_i)[A/X]$
    (in the case $Y\in\FV{A}$, we can rename the free variable).}
    \bigskip

    \parbox{4.5cm}{
      \prooftree\Gamma\vdash\ve t:\sui{n}\alpha_i\cdot\forall Y.U_i
      \justifies\Gamma\vdash\ve t:\sui{n}\alpha_i\cdot U_i[B/Y]
      \using\forall_E
  \endprooftree}
  \parbox{8cm}{Since $Y$ is bounded, we can
    consider $Y\notin\FV{A}$. By the induction
    hypothesis $\Gamma[A/X]\vdash\ve
    t:(\sui{n}\alpha_i\cdot\forall
    Y.U_i)[A/X]=\sui{n}\alpha_i\cdot\forall
    Y.U_i[A/X]$. Then by rule $\forall_E$,
    $\Gamma[A/X]\vdash\ve t:\sui{n}\alpha_i\cdot
    U_i[A/X][B/Y]$. We can consider $X\notin\FV{B}$ (in
    other case, just take $B[A/X]$ in the $\forall$-elimination), hence $\sui{n}\alpha_i\cdot U_i[A/X][B/Y]=\sui{n}\alpha_i\cdot U_i[B/Y][A/X]$.}
    \bigskip

    \parbox{3.1cm}{
      \prooftree\Gamma\vdash\ve t:T
      \justifies\Gamma\vdash\alpha\cdot\ve t:\alpha\cdot\ve T
      \using\alpha_I
  \endprooftree}
  \parbox{9.4cm}{By the induction hypothesis $\Gamma[A/X]\vdash\ve t:T[A/X]$, so by rule $\alpha_I$, $\Gamma[A/X]\vdash\alpha\cdot\ve t:\alpha\cdot T[A/X]=(\alpha\cdot T)[A/X]$.}
  \bigskip

  \parbox{4.3cm}{
    \prooftree\Gamma\vdash\ve t:T\qquad\Gamma\vdash\ve r:R
    \justifies\Gamma\vdash\ve t+\ve r:T+R
    \using +_I
\endprooftree}
\parbox{8.2cm}{By the induction hypothesis $\Gamma[A/X]\vdash\ve t:T[A/X]$ and $\Gamma[A/X]\vdash\ve r:R[A/X]$, so by rule $+_I$, $\Gamma[A/X]\vdash\ve t+\ve r:T[A/X]+R[A/X]=(T+R)[A/X]$.}
\bigskip

\parbox{3.7cm}{
  \prooftree\Gamma\vdash\ve t:T\qquad T\equiv R
  \justifies\Gamma\vdash\ve t:R
  \using\equiv
\endprooftree
      }
      \parbox{8.8cm}{By the induction hypothesis $\Gamma[A/X]\vdash\ve t:T[A/X]$, and since $T\equiv R$, then $T[A/X]\equiv R[A/X]$, so by rule $\equiv$, $\Gamma[A/X]\vdash\ve t:R[A/X]$.}
      \bigskip

    \item We proceed by induction on the typing derivation of $\Gamma,x:U\vdash\ve t:T$.
      \begin{enumerate}
        \item Let $\Gamma,x:U\vdash\ve t:T$ as a consequence of rule $ax$. Cases:
          \begin{itemize}
            \item $\ve t=x$, then $T=U$, and so $\Gamma\vdash\ve t[\ve b/x]:T$
              and $\Gamma\vdash\ve b:U$ are the same sequent.
            \item $\ve t=y$. Notice that $y[\ve b/x]=y$. By Lemma~\ref{lem:weakening} $\Gamma,x:U\vdash y:T$ implies
              $\Gamma\vdash y:T$.
          \end{itemize}
        \item Let $\Gamma,x:U\vdash\ve t:T$ as a consequence of rule $0_I$,
          then $\ve t=\ve 0$ and $T=0\cdot R$, with $\Gamma,x:U\vdash\ve r:R$
          for some $\ve r$. By the induction hypothesis, $\Gamma\vdash\ve
          r[\ve b/x]:R$. Hence, by rule $0_I$,
          $\Gamma\vdash\ve 0:0\cdot R$.

        \item Let $\Gamma,x:U\vdash\ve t:T$ as a consequence of rule
          $\to_I$, then $\ve t=\lambda y.\ve r$ and $T=V\to R$, with
          $\Gamma,x:U,y:V\vdash\ve r:R$. 
          Since our system admits weakening (Lemma~\ref{lem:weakening}), the sequent $\Gamma,y:V\vdash\ve b:U$
          is derivable. 
          Then by the induction hypothesis,
          $\Gamma,y:V\vdash\ve r[\ve b/x]:R$, from where, by
          rule $\to_I$, we obtain $\Gamma\vdash\lambda y.\ve r[\ve
          b/x]:V\to R$. We are done since 
          $\lambda y.\ve r[\ve b/x]=(\lambda y.\ve r)[\ve b/x]$.

        \item Let $\Gamma,x:U\vdash\ve t:T$ as a consequence of rule
          $\to_E$, then 
          $\ve t=(\ve r)~\ve u$ and
          $T=\sui{n}\suj{m}\alpha_i\times\beta_j\cdot R_i[\vec B/\vec Y]$,
          with 
          $\Gamma,x:U
          \vdash
          \ve r:\sui{n}\alpha_i\cdot\forall\vec Y.(V\to T_i)$ 
          and 
          $\Gamma,x:U
          \vdash
          \ve u:\suj{m}\beta_j\cdot V[\vec B/\vec Y]$.
          By the induction hypothesis, 
          $\Gamma
          \vdash
          \ve r[\ve b/x]:\sui{n}\alpha_i\cdot\forall\vec Y.(V\to R_i)$ 
          and
          $\Gamma\vdash\ve u[\ve b/x]:\suj{m}\beta_j\cdot V[\vec B/\vec Y]$. 
          Then, by rule $\to_E$, 
          $\Gamma
          \vdash \ve r[\ve b/x])~\ve u[\ve b/x]:
          \sui{n}\suj{m}\alpha_i\times\beta_j\cdot R_i[\vec B/\vec Y]$.

        \item Let $\Gamma,x:U\vdash\ve t:T$ as a consequence of rule
          $\forall_I$. Then $T=\sui{n}\alpha_i\cdot\forall Y.V_i$, with
          $\Gamma,x:U\vdash\ve t:\sui{n}\alpha_i\cdot V_i$ and
          $Y\notin\FV{\Gamma}\cup\FV{U}$.  By the induction hypothesis,
          $\Gamma\vdash\ve t[\ve b/x]:\sui{n}\alpha_i\cdot V_i$. Then by rule
          $\forall_I$, $\Gamma\vdash\ve t[\ve
          b/x]:\sui{n}\alpha_i\cdot\forall Y.V_i$.

        \item Let $\Gamma,x:U\vdash\ve t:T$ as a consequence of rule
          $\forall_E$, then $T=\sui{n}\alpha_i\cdot V_i[B/Y]$, with
          $\Gamma,x:U\vdash\ve t:\sui{n}\alpha_i\cdot\forall Y.V_i$. By the
          induction hypothesis, $\Gamma\vdash\ve t[\ve
          b/x]:\sui{n}\alpha_i\cdot\forall Y.V_i$. By
          rule $\forall_E$, $\Gamma\vdash\ve t[\ve
          b/x]:\sui{n}\alpha_i\cdot V_i[B/Y]$.

        \item Let $\Gamma,x:U\vdash\ve t:T$ as a consequence of rule
          $\alpha_I$. Then $T=\alpha\cdot R$ and $\ve t=\alpha\cdot\ve r$,
          with $\Gamma,x:U\vdash\ve r:R$. By the induction hypothesis
          $\Gamma\vdash\ve r[\ve b/x]:R$. Hence by rule
          $\alpha_I$, $\Gamma\vdash\alpha\cdot\ve r[\ve b/x]:\alpha\cdot
          R$. Notice that $\alpha\cdot\ve r[\ve
          b/x]=(\alpha\cdot\ve r)[\ve b/x]$.

        \item Let $\Gamma,x:U\vdash\ve t:T$ as a consequence of rule
          $+_I$. Then $\ve t=\ve r+\ve u$ and $T=R+S$, with
          $\Gamma,x:U\vdash\ve r:R$ and $\Gamma,x:U\vdash\ve u:S$.  By the
          induction hypothesis, $\Gamma\vdash\ve r[\ve b/x]:R$ 
          and $\Gamma\vdash\ve u[\ve b/x]:S$. Then by
          rule $+_I$, $\Gamma\vdash\ve r[\ve b/x]+\ve u[\ve b/x]:R+S$. 
          Notice that $\ve r[\ve b/x]+\ve
          u[\ve b/x]=(\ve r+\ve u)[\ve b/x]$.

        \item Let $\Gamma,x:U\vdash\ve t:T$ as a consequence of rule
          $\equiv$. Then $T\equiv R$ and $\Gamma,x:U\vdash\ve t:R$.  By the
          induction hypothesis, $\Gamma\vdash\ve t[\ve b/x]:R$. 
          Hence, by rule $\equiv$, $\Gamma\vdash\ve t[\ve b/x]:T$. 
          \qedhere
      \end{enumerate}

  \end{enumerate}
\end{proof}

\subsection{Proof of Theorem~\ref{thm:subjectreduction}}\label{app:srproof}

\recap{Theorem}{Weak subject reduction}{thm:subjectreduction}
For any terms $\ve t$, $\ve t'$, any context $\Gamma$ and any type
$T$, if $\ve{t}\to_R\ve{t}'$ and $\Gamma\vdash \ve t: T$, then:
\begin{enumerate}
  \item if $R\notin$ Group F, then $\Gamma\vdash\ve t': T$;
  \item if $R\in$ Group F, then
    $\exists S\sqsupseteq T$ such that $\Gamma\vdash\ve t': S$ and
    $\Gamma\vdash\ve{t}: S$.
\end{enumerate}
\begin{proof}
  Let $\ve t\to_R\ve t'$ and $\Gamma\vdash\ve t:T$. We proceed by induction on the rewrite relation.
  \paragraph{Group E}~
  \begin{description}
    \item[$0\cdot\ve{t}\to\ve{0}$] Consider $\Gamma\vdash 0\cdot\ve t:T$. By Lemma~\ref{lem:scalars}, we have that $T\equiv 0\cdot R$ and $\Gamma\vdash\ve t:R$. Then, by rule $0_I$, $\Gamma\vdash\ve 0:0\cdot R$. We conclude using rule $\equiv$.
    \item[$1\cdot\ve{t}\to\ve{t}$] Consider $\Gamma\vdash 1\cdot\ve t:T$, then by Lemma~\ref{lem:scalars}, $T\equiv 1\cdot R$ and $\Gamma\vdash\ve t:R$. Notice that $R\equiv T$, so we conclude using rule $\equiv$.
    \item[$\alpha\cdot\ve{0}\to\ve{0}$] Consider $\Gamma\vdash\alpha\cdot\ve 0:T$, then by Lemma~\ref{lem:termzero}, $T\equiv 0\cdot R$. Hence by rules $\equiv$ and $0_I$, $\Gamma\vdash\ve 0:0\cdot 0\cdot R$ and so we conclude using rule $\equiv$.
    \item[$\alpha\cdot(\beta\cdot\ve{t})\to(\alpha\times\beta)\cdot\ve{t}$] Consider $\Gamma\vdash\alpha\cdot(\beta\cdot\ve{t}):T$. 
      By Lemma~\ref{lem:scalars}, $T\equiv\alpha\cdot R$ and $\Gamma\vdash\beta\cdot\ve t:R$. 
      By Lemma~\ref{lem:scalars} again, $R\equiv\beta\cdot S$ with $\Gamma\vdash\ve t:S$. Notice that $(\alpha\times\beta)\cdot S\equiv\alpha\cdot(\beta\cdot S)\equiv T$, hence by rules $\alpha_I$ and $\equiv$, we obtain $\Gamma\vdash(\alpha\times\beta)\cdot\ve t:T$.
    \item[$\alpha\cdot(\ve{t}+\ve{r})\to\alpha\cdot\ve{t}+\alpha\cdot\ve{r}$] Consider $\Gamma\vdash\alpha\cdot(\ve t+\ve r):T$.
      By Lemma~\ref{lem:scalars}, $T\equiv\alpha\cdot R$ and $\Gamma\vdash\ve t+\ve r:R$. By Lemma~\ref{lem:sums} $\Gamma\vdash\ve t:R_1$ and $\Gamma\vdash\ve r:R_2$, with $R_1+R_2\equiv R$. Then by rules $\alpha_I$ and $+_I$, $\Gamma\vdash\alpha\cdot\ve t+\alpha\cdot\ve r:\alpha\cdot R_1+\alpha\cdot R_2$. Notice that $\alpha\cdot R_1+\alpha\cdot R_2\equiv\alpha\cdot(R_1+R_2)\equiv\alpha\cdot R\equiv T$. We conclude by rule $\equiv$.
  \end{description}
  \paragraph{Group F}~
  \begin{description}
    \item[$\alpha\cdot\ve{t}+\beta\cdot\ve{t}\to(\alpha+\beta)\cdot\ve{t}$] Consider $\Gamma\vdash\alpha\cdot\ve{t}+\beta\cdot\ve{t}:T$, then by Lemma~\ref{lem:sums}, $\Gamma\vdash\alpha\cdot\ve t:T_1$ and $\Gamma\vdash\beta\cdot\ve t:T_2$ with $T_1+T_2\equiv T$. 
      Then by Lemma~\ref{lem:scalars}, $T_1\equiv\alpha\cdot R$ and $\Gamma\vdash\ve t:R$ and
      $T_2\equiv\beta\cdot S$. By rule $\alpha_I$, $\Gamma\vdash(\alpha+\beta)\cdot\ve t:(\alpha+\beta)\cdot R$. Notice that $(\alpha+\beta)\cdot R\sqsupseteq\alpha\cdot R+\beta\cdot S\equiv T_1+T_2\equiv T$.
    \item[$\alpha\cdot\ve{t}+\ve{t}\to(\alpha+1)\cdot\ve{t}$ and $R=\ve{t}+\ve{t}\to(1+1)\cdot\ve{t}$] The proofs of these two cases are simplified versions of the previous case.
    \item[$\ve{t}+\ve{0}\to\ve{t}$] Consider $\Gamma\vdash\ve t+\ve
      0:T$. By Lemma~\ref{lem:sums}, $\Gamma\vdash\ve t:R$ and
      $\Gamma\vdash\ve 0:S$ with $R+S\equiv T$. In addition, by
      Lemma~\ref{lem:termzero}, $S\equiv 0\cdot S'$. Notice that $R +
      0\cdot R\equiv R\sqsupseteq R+0\cdot S'\equiv R+S\equiv T$.
  \end{description}
  \paragraph{Group B}~
  \begin{description}
    \item[$(\lambda x.\ve{t})~\ve{b}\to\ve{t}\subst{\ve b}{x}$]
      Consider $\Gamma\vdash(\lambda x.\ve t)~\ve b:T$, then by
      Lemma~\ref{lem:app}, we have $\Gamma\vdash\lambda x.\ve
      t:\sui{n}\alpha_i\cdot\forall\vec X.(U\to R_i)$ and
      $\Gamma\vdash\ve b:\suj{m}\beta_j\cdot U[\vec A_j/\vec X]$ where
      $\sui{n}\suj{m}\alpha_i\times\beta_j\cdot R_i[\vec A_j/\vec
      X]\succeq^{(\lambda x.\ve t)\ve b}_{\V,\Gamma} T$.  However, we can simplify these types
      using Lemma~\ref{lem:basevectors}, and so we have
      $\Gamma\vdash\lambda x.\ve t:\forall\vec X.(U\to R)$ and
      $\Gamma\vdash\ve b:U[\vec A/\vec X]$ with $R[\vec A/\vec
      X]\succeq^{(\lambda x.\ve t)\ve b}_{\V,\Gamma} T$.  Note that
      $\vec{X}\not\in\FV{\Gamma}$ (from the arrow introduction rule). 
      Hence, by Lemma~\ref{lem:abs},
      $\Gamma,x:V\vdash\ve t:S$, with $V\to
      S\succeq^{\lambda x.\ve t}_{\V,\Gamma}\forall\vec X.(U\to R)$. Hence, by
      Lemma~\ref{lem:arrowscomp}, $U\equiv V[\vec B/\vec Y]$ and
      $R\equiv S[\vec B/\vec Y]$ with $\vec Y\notin\FV{\Gamma}$, so by
      Lemma~\ref{lem:substitution}(\ref{it:substitutionTypes}),
      $\Gamma,x:U\vdash\ve t:R$.  Applying
      Lemma~\ref{lem:substitution}(\ref{it:substitutionTypes}) once
      more, we have $\Gamma[\vec A/\vec X,x:U[\vec A/\vec X]\vdash\ve
      t[\ve b/x]:R[\vec A/\vec X]$.  Since $\vec X\not\in\FV{\Gamma}$,
      $\Gamma[\vec A/\vec X] = \Gamma$ and we can apply
      Lemma~\ref{lem:substitution}(\ref{it:substitutionTerms}) to get
      $\Gamma\vdash\ve t[\ve b/x]:R[\vec A/\vec X]\succeq^{(\lambda x.\ve t)\ve b}_{\V,\Gamma}
      T$. So, by Lemma~\ref{lem:subjectreductionofrelation}, $R[\vec A/\vec X]\succeq^{\ve t[\ve b/x]}_{\V,\Gamma} T$, which implies $\Gamma\vdash\ve t[\ve b/x]:T$.
  \end{description}

  \paragraph{Group A}~
  \begin{description}
    \item[$(\ve{t}+\ve{r})~\ve{u}\to(\ve{t})~\ve{u}+(\ve{r})~\ve{u}$] Consider $\Gamma\vdash(\ve t+\ve r)~\ve{u}:T$. Then by Lemma~\ref{lem:app},
      $\Gamma\vdash\ve t+\ve r:\sui{n}\alpha_i\cdot\forall\vec{X}.(U\to T_i)$ and
      $\Gamma\vdash\ve u:\suj{m}\beta_j.U[\vec A_j/\vec X]$ where
      $\sui{n}\suj{m}\alpha_i\times\beta_j\cdot T_i[\vec A_j/\vec X]\succeq^{(\ve t+\ve r)\ve u}_{\V,\Gamma}T$. Then by Lemma~\ref{lem:sums},
      $\Gamma\vdash\ve t:R_1$
      and
      $\Gamma\vdash\ve r:R_2$, with $R_1+R_2\equiv\sui{n}\alpha_i\cdot\forall\vec{X}.(U\to T_i)$. Hence, there exists $N_1, N_2\subseteq\{1,\dots,n\}$ with $N_1\cup N_2=\{1,\dots,n\}$ such that
      \begin{align*}
        R_1\equiv\sum\limits_{i\in N_1\setminus N_2}\alpha_i\cdot\forall\vec X.(U\to T_i)+
        \sum\limits_{i\in N_1\cap N_2}\alpha'_i\cdot\forall\vec X.(U\to T_i) & \mbox{\quad and}\\
        R_2\equiv\sum\limits_{i\in N_2\setminus N_1}\alpha_i\cdot\forall\vec X.(U\to T_i)+
        \sum\limits_{i\in N_1\cap N_2}\alpha''_i\cdot\forall\vec X.(U\to T_i) &
      \end{align*}
      where $\forall i\in N_1\cap N_2$, $\alpha'_i+\alpha''_i=\alpha_i$. Therefore, using $\equiv$ we get
      \begin{align*}
        \Gamma\vdash\ve t:\sum\limits_{i\in N_1\setminus N_2}\alpha_i\cdot\forall\vec X.(U\to T_i)+
        \sum\limits_{i\in N_1\cap N_2}\alpha'_i\cdot\forall\vec X.(U\to T_i) & \mbox{\quad and}\\
        \Gamma\vdash\ve r:\sum\limits_{i\in N_2\setminus N_1}\alpha_i\cdot\forall\vec X.(U\to T_i)+
        \sum\limits_{i\in N_1\cap N_2}\alpha''_i\cdot\forall\vec X.(U\to T_i) &
      \end{align*}
      So, using rule $\to_E$, we get
      \begin{align*}
        \Gamma\vdash(\ve t)~\ve u:\sum\limits_{i\in N_1\setminus N_2}\suj{m}\alpha_i\times\beta_j\cdot T_i[\vec A_j/\vec X]+
        \sum\limits_{i\in N_1\cap N_2}\suj{m}\alpha'_i\times\beta_j\cdot T_i[\vec A_j/\vec X] & \mbox{\quad and}\\
        \Gamma\vdash(\ve r)~\ve u:\sum\limits_{i\in N_2\setminus N_1}\suj{m}\alpha_i\times\beta_j\cdot T_i[\vec A_j/\vec X]+
        \sum\limits_{i\in N_1\cap N_2}\suj{m}\alpha''_i\times\beta_j\cdot T_i[\vec A_j/\vec X] &
      \end{align*}
      Finally, by rule $+_I$ we can conclude
      $\Gamma\vdash(\ve t)~\ve u+(\ve r)~\ve u:\sui{n}\suj{m}\alpha_i\times\beta_j\cdot T_i[\vec A_j/\vec X]\succeq^{(\ve t+\ve r)\ve u}_{\V,\Gamma} T$. Then by Lemma~\ref{lem:subjectreductionofrelation}, $:\sui{n}\suj{m}\alpha_i\times\beta_j\cdot T_i[\vec A_j/\vec X]\succeq^{(\ve t)\ve u+(\ve r)\ve u}_{\V,\Gamma} T$, so $\Gamma\vdash(\ve t)~\ve u+(\ve r)~\ve u:T$.

    \item[$(\ve{t})~(\ve{r}+\ve{u})\to(\ve{t})~\ve{r}+(\ve{t})~\ve{u}$]
      Consider $\Gamma\vdash(\ve{t})~(\ve{r}+\ve{u}):T$. By Lemma~\ref{lem:app},
      $\Gamma\vdash\ve t:\sui{n}\alpha_i\cdot\forall\vec{X}.(U\to T_i)$ and
      $\Gamma\vdash\ve r+\ve u:\suj{m}\beta_j.U[\vec A_j/\vec X]$ where
      $\sui{n}\suj{m}\alpha_i\times\beta_j\cdot T_i[\vec A_j/\vec X]\succeq^{(\ve t)(\ve r+\ve u)}_{\V,\Gamma} T$. Then by Lemma~\ref{lem:sums},
      $\Gamma\vdash\ve r:R_1$
      and
      $\Gamma\vdash\ve u:R_2$, with $R_1+R_2\equiv\suj{m}\beta_j.U[\vec A_j/\vec X]$. Hence, there exists $M_1, M_2\subseteq\{1,\dots,m\}$ with $M_1\cup M_2=\{1,\dots,m\}$ such that
      \begin{align*}
        R_1\equiv\sum\limits_{j\in M_1\setminus M_2}\beta_j.U[\vec A_j/\vec X]+
        \sum\limits_{j\in M_1\cap M_2}\beta'_j.U[\vec A_j/\vec X] & \mbox{\quad and}\\
        R_2\equiv\sum\limits_{j\in M_2\setminus M_1}\beta_j.U[\vec A_j/\vec X]+
        \sum\limits_{j\in M_1\cap M_2}\beta''_j.U[\vec A_j/\vec X] &
      \end{align*}
      where $\forall j\in M_1\cap M_2$, $\beta'_j+\beta''_j=\beta_j$. Therefore, using $\equiv$ we get
      \begin{align*}
        \Gamma\vdash\ve r:\sum\limits_{j\in M_1\setminus M_2}\beta_j.U[\vec A_j/\vec X]+
        \sum\limits_{j\in M_1\cap M_2}\beta'_j.U[\vec A_j/\vec X] & \mbox{\quad and}\\
        \Gamma\vdash\ve u:\sum\limits_{j\in M_2\setminus M_1}\beta_j.U[\vec A_j/\vec X]+
        \sum\limits_{j\in M_1\cap M_2}\beta''_j.U[\vec A_j/\vec X] &
      \end{align*}
      So, using rule $\to_E$, we get
      \begin{align*}
        \Gamma\vdash(\ve t)~\ve r:\sui{n}\sum\limits_{j\in M_1\setminus M_2}\alpha_i\times\beta_j\cdot T_i[\vec A_j/\vec X]+
        \sui{n}\sum\limits_{j\in M_1\cap M_2}\alpha_i\times\beta'_j\cdot T_i[\vec A_j/\vec X] & \mbox{\quad and}\\
        \Gamma\vdash(\ve t)~\ve u:\sui{n}\sum\limits_{j\in M_2\setminus M_1}\alpha_i\times\beta_j\cdot T_i[\vec A_j/\vec X]+
        \sui{n}\sum\limits_{j\in M_1\cap M_2}\alpha_i\times\beta''_j\cdot T_i[\vec A_j/\vec X] &
      \end{align*}
      Finally, by rule $+_I$ we can conclude
      $\Gamma\vdash(\ve t)~\ve r+(\ve t)~\ve u:\sui{n}\suj{m}\alpha_i\times\beta_j\cdot T_i[\vec A_j/\vec X]$.
      We finish the case with Lemma~\ref{lem:subjectreductionofrelation}.

    \item[$(\alpha\cdot\ve{t})~\ve{r}\to\alpha\cdot (\ve{t})~\ve{r}$] Consider
      $\Gamma\vdash(\alpha\cdot\ve t)~\ve r:T$. Then by Lemma~\ref{lem:app}, $\Gamma\vdash\alpha\cdot\ve t:\sui{n}\alpha_i\cdot\forall\vec X.(U\to T_i)$ and
      $\Gamma\vdash\ve r:\suj{m}\beta_j\cdot U[\vec A_j/\vec X]$, where
      $\sui{n}\suj{m}\alpha_i\times\beta_j\cdot T_i[\vec A_j/\vec X]\succeq^{(\alpha\cdot\ve t)\ve r}_{\V,\Gamma} T$.
      Then by Lemma~\ref{lem:scalars},
      $\sui{n}\alpha_i\cdot\forall\vec X.(U\to T_i)\equiv\alpha\cdot R$ and
      $\Gamma\vdash\ve t:R$.
      By Lemma~\ref{lem:typecharact}, $R\equiv\sui{n'}\gamma_i\cdot
      V_i+\suk{h}\eta_k\cdot\vara X_k$, however it is easy to see that
      $h=0$ because $R$ is equivalent to a sum of terms, where none of them is $\vara X$.
      So $R\equiv\sui{n'}\gamma_i\cdot V_i$. Without lost of generality (cf.~previous case), take $T_i\neq
      T_k$ for all $i\neq k$ and $h=0$, and notice that
      $\sui{n}\alpha_i\cdot\forall\vec X.(U\to T_i)\equiv\sui{n'}\alpha\times\gamma_i\cdot V_i$. Then by Lemma~\ref{lem:equivdistinctscalars}, there exists a permutation $p$ such that $\alpha_i=\alpha\times\gamma_{p(i)}$ and $\forall\vec X.(U\to T_i)\equiv V_{p(i)}$. Without lost of generality let $p$ be the trivial permutation, and so
      $\Gamma\vdash\ve t:\sui{n}\gamma_i\cdot\forall\vec X.(U\to T_i)$.
      Hence, using rule $\to_E$, $\Gamma\vdash(\ve t)~\ve r:\sui{n}\suj{m}\gamma_i\times\beta_j\cdot T_i[\vec A_j/\vec X]$. Therefore, by rule $\alpha_I$, $\Gamma\vdash\alpha\cdot(\ve t)~\ve r:\alpha\cdot\sui{n}\suj{m}\gamma_i\times\beta_j\cdot T_i[\vec A_j/\vec X]$. Notice that
      $\alpha\cdot\sui{n}\suj{m}\gamma_i\times\beta_j\cdot T_i[\vec A_j/\vec X]\equiv
      \sui{n}\suj{m}\alpha_i\times\beta_j\cdot T_i[\vec A_j/\vec X]$. We finish the case with Lemma~\ref{lem:subjectreductionofrelation}.

    \item[$(\ve{t})~(\alpha\cdot\ve{r})\to\alpha\cdot (\ve{t})~\ve{r}$]

      Consider
      $\Gamma\vdash(\ve t)~(\alpha\cdot\ve r):T$. Then by Lemma~\ref{lem:app},
      $\Gamma\vdash\ve t:\sui{n}\alpha_i\cdot\forall\vec X.(U\to T_i)$ and
      $\Gamma\vdash\alpha\cdot\ve r:\suj{m}\beta_j\cdot U[\vec A_j/\vec X]$, where
      $\sui{n}\suj{m}\alpha_i\times\beta_j\cdot T_i[\vec A_j/\vec X]\succeq^{(\ve t)(\alpha\cdot\ve r)}_{\V,\Gamma} T$.
      Then by Lemma~\ref{lem:scalars},
      $\suj{m}\beta_j\cdot U[\vec A_j/\vec X]\equiv\alpha\cdot R$ and
      $\Gamma\vdash\ve r:R$.
      By Lemma~\ref{lem:typecharact}, $R\equiv\suj{m'}\gamma_j\cdot
      V_j+\suk{h}\eta_k\cdot\vara X_k$, however it is easy to see that
      $h=0$ because $R$ is equivalent to a sum of terms, where none of them is $\vara X$.
      So $R\equiv\suj{m'}\gamma_j\cdot V_j$. Without lost of generality (cf.~previous case), take $A_j\neq
      A_k$ for all $j\neq k$, and notice that
      $\suj{m}\beta_j\cdot U[\vec A_j/\vec X]\equiv\suj{m'}\alpha\times\gamma_j\cdot V_j$. Then by Lemma~\ref{lem:equivdistinctscalars}, there exists a permutation $p$ such that $\beta_j=\alpha\times\gamma_{p(j)}$ and $U[\vec A_j/\vec X]\equiv V_{p(j)}$. Without lost of generality let $p$ be the trivial permutation, and so
      $\Gamma\vdash\ve r:\suj{m}\gamma_i\cdot U[\vec A_j/\vec X]$.
      Hence, using rule $\to_E$, $\Gamma\vdash(\ve t)~\ve r:\sui{n}\suj{m}\alpha_i\times\gamma_j\cdot T_i[\vec A_j/\vec X]$. Therefore, by rule $\alpha_I$, $\Gamma\vdash\alpha\cdot(\ve t)~\ve r:\alpha\cdot\sui{n}\suj{m}\alpha_i\times\gamma_j\cdot T_i[\vec A_j/\vec X]$. Notice that
      $\alpha\cdot\sui{n}\suj{m}\alpha_i\times\gamma_j\cdot T_i[\vec A_j/\vec X]\equiv
      \sui{n}\suj{m}\alpha_i\times\beta_j\cdot T_i[\vec A_j/\vec X]$. We finish the case with Lemma~\ref{lem:subjectreductionofrelation}.

    \item[$(\ve{0})~\ve{t}\to \ve{0}$] Consider $\Gamma\vdash(\ve 0)~\ve t:T$. By Lemma~\ref{lem:app}, $\Gamma\vdash\ve 0:\sui{n}\alpha_i\cdot\forall\vec X.(U\to T_i)$ and $\Gamma\vdash\ve t:\suj{m}\beta_j\cdot U[\vec A_j/\vec X]$, where
      $\sui{n}\suj{m}\alpha_i\times\beta_j\cdot T_i[\vec A_j/\vec X]\succeq^{(\ve 0)\ve t}_{\V,\Gamma} T$.
      Then by Lemma~\ref{lem:termzero}, $\sui{n}\alpha_i\cdot\forall\vec X.(U\to T_i)\equiv 0\cdot R$.
      By Lemma~\ref{lem:typecharact}, $R\equiv\sui{n'}\gamma_i\cdot
      V_i+\suk{h}\eta_k\cdot\vara X_k$, however, it is easy to see that $h=0$ and so $R\equiv\sui{n'}\gamma_i\cdot V_i$. Without lost of generality, take $T_i\neq T_k$ for all $i\neq k$, and notice that
      $\sui{n}\alpha_i\cdot\forall\vec X.(U\to T_i)\equiv\sui{n'}0\cdot V_i$. By Lemma~\ref{lem:equivdistinctscalars}, $\alpha_i=0$. Notice that by rule $\to_E$,
      $\Gamma\vdash(\ve 0)~\ve t:\sui{n}\suj{m}0\cdot T_i[\vec A_j/\vec X]$, hence by rules $0_I$ and $\equiv$, $\Gamma\vdash\ve 0:\sui{n}\suj{m}0\cdot T_i[\vec A_j/\vec X]\succeq^{(\ve 0)\ve t}_{\V,\Gamma} T$. By Lemma~\ref{lem:subjectreductionofrelation}, $\sui{n}\suj{m}0\cdot T_i[\vec A_j/\vec X]\succeq^{\ve 0}_{\V,\Gamma} T$, so $\Gamma\vdash\ve 0:T$.

    \item[$(\ve{t})~\ve{0}\to \ve{0}$]

      Consider $\Gamma\vdash(\ve t)~\ve 0:T$. By Lemma~\ref{lem:app}, $\Gamma\vdash\ve t:\sui{n}\alpha_i\cdot\forall\vec X.(U\to T_i)$ and $\Gamma\vdash\ve 0:\suj{m}\beta_j\cdot U[\vec A_j/\vec X]$, where
      $\sui{n}\suj{m}\alpha_i\times\beta_j\cdot T_i[\vec A_j/\vec X]\succeq^{(\ve t)\ve 0}_{\V,\Gamma}T$.
      Then by Lemma~\ref{lem:termzero}, $\suj{m}\beta_j\cdot U[\vec A_j/\vec X]\equiv 0\cdot R$.
      By Lemma~\ref{lem:typecharact}, $R\equiv\suj{m'}\gamma_j\cdot
      V_j+\suk{h}\eta_k\cdot\vara X_k$, however, it is easy to see that $h=0$ and so $R\equiv\suj{m'}\gamma_j\cdot V_j$. Without lost of generality, take $A_j\neq A_k$ for all $j\neq
      k$, and notice that
      $\suj{m}\beta_j\cdot U[\vec A_j/\vec X]\equiv\suj{m'}0\cdot V_j$. By Lemma~\ref{lem:equivdistinctscalars}, $\beta_j=0$. Notice that by rule $\to_E$,
      $\Gamma\vdash(\ve t)~\ve 0:\sui{n}\suj{m}0\cdot T_i[\vec A_j/\vec
      X]$, hence by rules $0_I$ and $\equiv$, $\Gamma\vdash\ve
      0:\sui{n}\suj{m}0\cdot T_i[\vec A_j/\vec X]\succeq^{(\ve t)\ve 0}_{\V,\Gamma}T$. By Lemma~\ref{lem:subjectreductionofrelation}, $\sui{n}\suj{m}0\cdot T_i[\vec A_j/\vec X]\succeq^{\ve 0} T$. Hence, $\Gamma\vdash\ve 0:T$.
  \end{description}

  \paragraph{Contextual rules}
  Follows from the generation lemmas, the induction hypothesis and the
  fact that $\sqsupseteq$ is congruent.
  \qedhere
\end{proof}

\section{Detailed proofs of lemmas and theorems in Section~\ref{sec:SN}}\label{app:SN}
\subsection{First lemmas}\label{app:SNf}
\xrecap{Lemma}{lem:RCop}
If $\bcal{A}$, $\bcal{B}$ and all the $\bcal{A}_i$'s are in $\RC$,
then so are $\bcal{A}\to\bcal{B}$, $\sum_i\bcal{A}_i$ and
$\cap_i\bcal{A}_i$.

\begin{proof}
  Before proving that these operators define reducibility candidates, we need the following result which simplifies its proof: a linear combination of strongly normalising terms, is strongly normalising. That is

  {\noindent\textbf{Auxiliary Lemma} (AL)\textbf{.}}
  If $\{\ve t_i\}_i$ are strongly normalising, then so is $F(\vec{\ve
  t})$ for any algebraic context $F$.

  \noindent  {\it Proof.}
  Let $\vec{\ve t}=\ve t_1,\dots,\ve t_n$.  We define
  two notions.
  \begin{itemize}
    \item
      A measure $s$ on $\vec{\ve t}$ defined as the the sum over
      $i$ of the sum of the lengths of all the possible rewrite sequences
      starting with $\ve t_i$.
    \item
      An algebraic measure $a$ over algebraic contexts $F(.)$
      defined inductively by $a(\ve t_i)=1$, $a(F(\vec{\ve t})+G(\vec{\ve
      t'}))=2+a(F(\vec{\ve t}))+a(G(\vec{\ve t'}))$, $a(\alpha\cdot
      F(\vec{\ve t}))=1+2\cdot a(F(\vec{\ve t}))$, $a(\ve 0)=0$.
  \end{itemize}
  We claim that for all linear algebraic contexts $F(\cdot)$ (in the
  sense of Remark~\ref{rem:linalgcontext}) and all strongly
  normalising terms ${\ve t}_i$ that are not linear combinations (that
  is, of the form $x$, $\lambda x.\ve r$ or $(\ve s)~\ve r$), the term
  $F(\vec{\ve t})$ is also strongly normalising.

  The claim is proven by induction on $s(\vec{\ve t})$ (the size is
  finite because $\ve t$ is SN, and because the rewrite system is
  finitely branching).
  \begin{itemize}
    \item If $s(\vec{\ve t})=0$. Then none of the $\ve t_i$ reduces.
      We show by induction on $a(F(\vec{\ve t}))$ that $F(\vec{\ve t})$
      is SN.
      \begin{itemize}
        \item If $a(F(\vec{\ve t}))=0$, then $F(\vec{\ve t})=\ve 0$ which is SN.
        \item Suppose it is true for all $F(\vec{\ve t})$ of algebraic
          measure less or equal to $m$, and consider $F(\vec{\ve t})$
          such that $a(F(\vec{\ve t}))=m+1$. Since the $\ve t_i$ are not
          linear combinations and they are in normal form, because $s(\vec{\ve 
          t})=0$, then $F(\vec{\ve
          t})$ can only reduce with a rule from Group E or a rule from
          group F. We show that those reductions are strictly decreasing
          on the algebraic measure, by a rule by rule analysis, and so,
          we can conclude by induction hypothesis.
          \begin{itemize}
            \item $0\cdot F(\vec{\ve t})\to\ve 0$. Note that $a(0\cdot
              F(\vec{\ve t}))=1> 0=a(\ve 0)$.
            \item $1\cdot F(\vec{\ve t})\to F(\vec{\ve t})$. Note that
              $a(1\cdot F(\vec{\ve t}))=1+2\cdot a(F(\vec{\ve t}))>
              a(F(\vec{\ve t}))$.
            \item $\alpha\cdot \ve 0\to\ve 0$. Note that $a(\alpha\cdot \ve 0)=1> 
              0=a(\ve 0)$.
            \item $\alpha\cdot (\beta\cdot F(\vec{\ve
              t}))\to(\alpha\times\beta)\cdot F(\vec{\ve t})$. Note that
              $a(\alpha\cdot (\beta\cdot F(\vec{\ve t})))=1+2\cdot
              (1+2\cdot a(F(\vec{\ve t})))> 1+2\cdot a(F(\vec{\ve
              t}))=a((\alpha\times\beta)\cdot F(\vec{\ve t}))$.
            \item $\alpha\cdot (F(\vec{\ve t})+G(\vec{\ve
              t'}))\to\alpha\cdot F(\vec{\ve t})+\alpha\cdot G(\vec{\ve
              t}')$. Note that $a(\alpha\cdot (F(\vec{\ve t})+G(\vec{\ve
              t}')))=5+2\cdot a(F(\vec{\ve t}))+2\cdot a(G(\vec{\ve
              t}'))> 4+2\cdot a(F(\vec{\ve t}))+2\cdot a(G(\vec{\ve
              t}'))=a(\alpha\cdot F(\vec{\ve t})+\alpha\cdot G(\vec{\ve
              t}'))$.
            \item $\alpha\cdot F(\vec{\ve t})+\beta\cdot F(\vec{\ve
              t})\to(\alpha+\beta)\cdot F(\vec{\ve t})$. Note that
              $a(\alpha\cdot F(\vec{\ve t})+\beta\cdot F(\vec{\ve
              t}))=4+4\cdot a(F(\vec{\ve t}))> 1+2\cdot a(F(\vec{\ve
              t}))=a((\alpha+\beta)\cdot F(\vec{\ve t}))$.
            \item $\alpha\cdot F(\vec{\ve t})+F(\vec{\ve
              t})\to(\alpha+1)\cdot F(\vec{\ve t})$.  Note that
              $a(\alpha\cdot F(\vec{\ve t})+F(\vec{\ve t}))=3+3\cdot
              a(F(\vec{\ve t}))>1+2\cdot a(F(\vec{\ve t}))=a\cdot
              ((\alpha+1)\cdot F(\vec{\ve t}))$.
            \item $F(\vec{\ve t})+F(\vec{\ve t})\to (1+1)\cdot F(\vec{\ve
              t})$. Note that $a\cdot (F(\vec{\ve t})+F(\vec{\ve
              t}))=2+2\cdot a(F(\vec{\ve t}))> 1+2\cdot a(F(\vec{\ve
              t}))=a\cdot ((1+1)\cdot F(\vec{\ve t}))$.
            \item $F(\vec{\ve t})+\ve 0\to F(\vec{\ve t})$. Note that
              $a\cdot (F(\vec{\ve t})+\ve 0)=2+a(F(\vec{\ve t}))>
              a(F(\vec{\ve t}))$.
            \item Contextual rules are trivial.
          \end{itemize}
      \end{itemize}
    \item Suppose it is true for $n$, then consider $\vec{\ve t}$ such
      that $s(\vec{\ve t})=n+1$. Again, we show that $F(\vec{\ve t})$ is
      SN by induction on $a(F(\vec{\ve t}))$.
      \begin{itemize}
        \item If $a(F(\vec{\ve t}))=0$, then $F(\vec{\ve t})=\ve 0$ which is SN.
        \item Suppose it is true for all $F(\vec{\ve t})$ of algebraic
          measure less or equal to $m$, and consider $F(\vec{\ve t})$ such
          that $a(F(\vec{\ve t}))=m+1$. Since the $\ve t_i$ are not linear
          combinations, $F(\vec{\ve t})$ can reduce in two ways:
          \begin{itemize}
            \item $F(\ve t_1,\ldots \ve t_i,\ldots \ve t_k)\to F(\ve
              t_1,\ldots \ve t'_i,\ldots \ve t_k)$ with $\ve t_i\to\ve t'_i$.
              Then $\ve t'_i$ can be written as
              $
              H(\ve r_1,\ldots\ve r_l)
              $
              for some algebraic context $H$, where the $\ve r_j$'s are not
              linear combinations.
              Note that
              \[
                \sum_{j=1}^l s(\ve r_j) \leq s(\ve t'_i) < s(\ve t_i).
              \]
              Define the context
              \begin{gather*}
                G(\ve t_1,\ldots,\ve t_{i-1},\ve u_1,\ldots \ve u_l,\ve
                t_{i+1},\ldots \ve t_k) =\hspace{10ex}\\
                \hspace{10ex}
                F(\ve t_1,\ldots,\ve t_{i-1},H(\ve
                u_1,\ldots \ve u_l),\ve t_{i+1},\ldots \ve t_k).
              \end{gather*}
              The term $F(\vec{\ve t})$ then reduces to the term
              \[
                G(\ve t_1,\ldots,\ve t_{i-1},\ve r_1,\ldots\ve r_l,\ve
                t_{i+1}\ldots \ve t_k),
              \]
              where
              \[
                s(\ve t_1,\ldots,\ve t_{i-1},\ve r_1,\ldots\ve r_l,\ve
                t_{i+1}\ldots \ve t_k) < s(\vec{\ve t}).
              \]
              Using the top induction hypothesis, we conclude that $F(\ve
              t_1,\ldots \ve t'_i,\ldots \ve t_k)$ is SN.
            \item $F(\vec{\ve t})\to G(\vec{\ve t})$, with $a(G(\vec{\ve
              t}))<a(F(\vec{\ve t}))$. Using the second induction
              hypothesis, we conclude that $G(\vec{\ve t})$ is SN
          \end{itemize}
          All the possible reducts of $F(\vec{\ve t})$ are SN: so is
          $F(\vec{\ve t})$.
      \end{itemize}
  \end{itemize}
  This closes the proof of the claim.  Now, consider any SN terms
  $\{\ve t_i\}_i$ and any algebraic context $G(\vec{\ve t})$. Each
  $\ve t_i$ can be written as an algebraic sum of $x$'s,
  $\lambda x.\ve s$'s and $(\ve r)\,\ve s$'s. The context
  $G(\vec{\ve t}) $ can then be written as $F(\vec{\ve t'})$ where
  none of the ${\ve t'}_i$ is a linear combination.
  From Remark~\ref{rem:linalgcontext}, there exists a linear algebraic
  context $F'(\vec{\ve t}';)$ where $\vec{\ve t}''$ is $\vec{\ve t}'$
  with possibly some repetitions, and where
  $F'(\vec{\ve t}'') = F(\vec{\ve t}')$, when considered as terms.
  The hypotheses of the claim are satisfied: $F'(\vec{\ve t}'')$ is
  therefore SN. Since it is ultimely equal to $G(\vec{\ve t})$,
  $G(\vec{\ve t})$ is also SN: the Auxiliary Lemma (AL) is valid.

  \medskip

  Now, we can prove Lemma~\ref{lem:RCop}
  First, we consider the case $\bcal{A}\to\bcal{B}$.
  \begin{description}
    \item[$\RCn_1$] We must show that all $\ve
      t\in\bcal{A}\to\bcal{B}$ are in $\SN$. We proceed by
      induction on the definition of $\bcal{A}\to\bcal{B}$.
      \begin{itemize}
        \item Assume that $\ve t$ is such that for $\ve r=\ve 0$ and $\ve 
          r=\ve b$, with $\ve b\in A$, then $(\ve t)\,\ve r\in\bcal B$. Hence by $\RCn_1$ 
          in $\bcal B$, $\ve t\in\SN$.
        \item Assume that $\ve t$ is closed neutral and that
          $\Red(\ve t)\subseteq\bcal{A}\to\bcal{B}$. By induction
          hypothesis, all the elements of $\Red(\ve t)$ are strongly
          normalising: so is $\ve t$.
        \item The last case is immediate: if $\ve t$ is the term
          $\ve 0$, it is strongly normalising.
      \end{itemize}
    \item[$\RCn_2$] We must show that if $\ve t\to\ve t'$ and $\ve
      t\in\bcal{A}\to\bcal{B}$, then $\ve
      t'\in\bcal{A}\to\bcal{B}$. We again proceed by induction on
      the definition of $\bcal{A}\to\bcal{B}$.
      \begin{itemize}
        \item Let $\ve t$ such that $(\ve t)\,\ve 0\in\bcal{B}$ and
          such that for all $\ve b\in\bcal{A}$, $(\ve t)~\ve
          b\in\bcal{B}$. Then by $\RCn_2$ in $\bcal{B}$, $(\ve
          t')~\ve 0\in\bcal{B}$ and $(\ve
          t')~\ve b\in\bcal{B}$, and so $\ve
          t'\in\bcal{A}\to\bcal{B}$.
        \item If $\ve t$ is closed neutral and $\Red(\ve
          t)\subseteq\bcal{A}\to\bcal{B}$, then $\ve
          t'\in\bcal{A}\to\bcal{B}$ since $\ve t'\in\Red(\ve t)$.
        \item If
          $\ve t=\ve 0$, it does not reduce.
      \end{itemize}
    \item[$\RCn_3$ and $\RCn_4$] Trivially true by definition.
  \end{description}
  Then we analyze the case $\sum_i\bcal{A}_i$.
  \begin{description}
    \item[$\RCn_1$] 
      We show that ``if $\ve t \in \sum_i\bcal{A}_i$ then $\ve t$ is
      stongly normalizing'' by structural induction on the
      justification of $\ve t \in \sum_i\bcal{A}_i$.
      \begin{itemize}
        \item Base case. $\ve t$ belongs to one of the $\bcal A_i$: it
          is then SN by $\RCn_1$.
        \item $\CC_1$. $\ve t$ is part of a list $\vec{\ve t'}$
          where $F(\vec{\ve t'})\in\sum_i\bcal{A}_i$ for some
          alg. context $F$. The induction hypothesis says that
          $F(\vec{\ve t'})$ is SN. This implies that $\ve t$ is also SN.
        \item $\CC_2$. $\ve t = F(\vec{\ve t'})$ where $F$ is an
          alg. context and $\ve t'_i\in\bcal{A}_i$. The result is
          obtained using the auxiliary lemma (AL) and $\RCn_1$ on the
          $\bcal{A}_i$'s.
        \item $\RCn_2$. $\ve t$ is such that $\ve s\to \ve t$ where $\ve
          s$ is SN. This implies that $\ve t$ is also SN.
        \item $\RCn_3$. $\ve t$ is closed neutral and
          $\Red(\ve t)\subseteq\sum_i\bcal{A}_i$, then $\ve t$ is
          strongly normalising since all elements of $\Red(\ve t)$ are
          strongly normalising.
      \end{itemize}
    \item[$\RCn_2$ and $\RCn_3$] Trivially true by definition.
    \item[$\RCn_4$] Since $\ve 0$ is an algebraic context, it
      is also in the set by $\CC_2$.
  \end{description}
  Finally, we prove the case $\cap_i\bcal{A}_i$.
  \begin{description}
    \item[$\RCn_1$] Trivial since for all $i$,
      $\bcal{A}_i\subseteq\SN$.
    \item[$\RCn_2$] Let $\ve t\in\cap_i\bcal{A}_i$, then $\forall
      i,\,\ve t\in\bcal{A}_i$ and so by $\RCn_2$ in $\bcal{A}_i$,
      $\Red(\ve t)\subseteq\bcal{A}_i$. Thus $\Red(\ve
      t)\subseteq\cap_i\bcal{A}_i$.
    \item[$\RCn_3$] Let $\ve t\in\Neutral$ and $\Red(\ve
      t)\subseteq\cap_i\bcal{A}$. Then $\forall_i,\,\Red(\ve
      t)\subseteq A_i$, and thus, by $\RCn_3$ in $\bcal{A}_i$,
      $\ve t\in\bcal{A}_i$, which implies $\ve
      t\in\cap_i\bcal{A}_i$.
    \item[$\RCn_4$] By $\RCn_4$, for all $i$, $\ve
      0\in\bcal{A}_i$. Therefore, $\ve 0\in\cap_i\bcal{A}_i$.
  \end{description}
  This concludes the proof of Lemma~\ref{lem:RCop}.
\end{proof}

\xrecap{Lemma}{lem:typedecomp}
Any type $T$, has a unique (up to $\equiv$) canonical decomposition $T\equiv\sui{n}\alpha_i\cdot\vara U_i$ such that for all $l,k$, $\vara U_l\not\equiv\vara U_k$.

\begin{proof} 
  By Lemma~\ref{lem:typecharact}, $T\equiv\sui{n}\alpha_i\cdot
  U_i+\suj{m}\beta_j\cdot\vara X_j$. Suppose that there exist
  $l,k$ such that $U_l\equiv U_k$. Then notice that $T\equiv
  (\alpha_l+\alpha_k)\cdot U_l+\sum_{i\neq l,k}\alpha_i\cdot U_i$.
  Repeat the process until there is no more $l,k$ such that $U_l\not\equiv U_k$. Proceed in the analogously to obtain a linear combination of different $\vara X_j$.
\end{proof}

\xrecap{Lemma}{lem:substRed}
For any types $T$ and $A$, variable $X$ and valuation $\rho$, we have
$\denot{T[A/X]}_\rho
=
\denot{T}_{\rho,(X_+,X_-)\mapsto(\denot{A}_{\bar\rho},\denot{A}_{\rho})}$
and
$\denot{T[A/X]}_{\bar\rho}
=
\denot{T}_{\bar\rho,(X_-,X_+)\mapsto(\denot{A}_{\rho},\denot{A}_{\bar\rho})}$.

\begin{proof}
  We proceed by structural induction on $T$. On each case we only show the case of $\rho$ since the $\bar\rho$ case follows analogously.
  \begin{itemize}
    \item $T=X$. Then
      $\denot{X[A/X]}_\rho
      =
      \denot{A}_\rho
      =
      \denot{X}_{\rho,(X_+,X_-)\mapsto(\denot{A}_{\bar\rho},\denot{A}_{\rho})}$.
    \item $T=Y$. Then
      $\denot{Y[A/X]}_\rho
      =
      \denot{Y}_\rho
      =
      \rho_+(Y)
      =
      \denot{Y}_{\rho,(X_+,X_-)\mapsto(\denot{A}_{\bar\rho},\denot{A}_{\rho})}$.
    \item $Y=U\to R$. Then
      $\denot{(U\to R)[A/X]}_\rho
      =
      \denot{U[A/X]}_{\bar\rho}\to\denot{R[A/X]}_\rho$. By the induction hypothesis, we have
      $\denot{U[A/X]}_{\bar\rho}\to\denot{R[A/X]}_\rho=
      \denot{U}_{\bar\rho,(X_-,X_+)\mapsto(\denot{A}_{\rho},\denot{A}_{\bar\rho})}
      \to
      \denot{R}_{\rho,(X_+,X_-)\mapsto(\denot{A}_{\bar\rho},\denot{A}_{\rho})}
      =
      \denot{U\to R}_{\rho,(X_+,X_-)\mapsto(\denot{A}_{\bar\rho},\denot{A}_{\rho})}$.
    \item $U=\forall Y.V$. Then
      $\denot{(\forall Y.V)[A/X]}_\rho
      =
      \denot{\forall Y.V[A/X]}_\rho$
      which by definition is equal to
      $\cap_{\bcal B\in\RC}
      \denot{V[A/X]}_{\rho,(Y_+,Y_-)\mapsto(\bcal B,\bcal B)}$
      and this, by the induction hypothesis, is equal to
      $\cap_{\bcal B\in\RC}
      \denot{V}_{\rho,(Y_+,Y_-)\mapsto(\bcal B,\bcal B),(X_+,X_-)\mapsto(\denot{A}_{\bar\rho},\denot{A}_{\rho})}
      =
      \denot{\forall Y.V}_{\rho,(X_+,X_-)\mapsto(\denot{A}_{\bar\rho},\denot{A}_{\rho})}$.

    \item $T$ of canonical decomposition $\sum_i\alpha_i\cdot\vara U_i$.
      Then
      $\denot{T[A/X]}_\rho
      =
      \sum_i\denot{\vara U_i[A/X]}_\rho$,
      which by induction hypothesis is
      $\sum_i\denot{\vara U_i}_{\rho,(X_+,X_-)\mapsto(\denot{A}_{\bar\rho},\denot{A}_{\rho})}
      =\denot{T}_{\rho,(X_+,X_-)\mapsto(\denot{A}_{\bar\rho},\denot{A}_{\rho})}$.
      \qedhere
  \end{itemize}
\end{proof}

\subsection{Proof of the Adequacy Lemma (\ref{lem:SNadeq})}\label{app:adequacy}
We need the following results first.

\begin{lemma}\label{lem:polar2appendix}
  For any type $T$, 
  if
  $\rho=(\rho_+,\rho_-)$ and
  $\rho'=(\rho'_+,\rho'_-)$
  are two valid valuations over $\FV{T}$ such that
  $\forall X$, $\rho'_-(X)\subseteq\rho_-(X)$ and
  $\rho_+(X)\subseteq\rho'_+(X)$,
  then we have
  $\denot{T}_{\rho}\subseteq\denot{T}_{\rho'}$ and
  $\denot{T}_{\bar\rho'}\subseteq\denot{T}_{\bar\rho}$.
\end{lemma}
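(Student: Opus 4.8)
The plan is to deduce both inclusions from a single monotonicity statement, which makes the polarity bookkeeping transparent. Write $\sigma=(\sigma_+,\sigma_-)\preceq\sigma'=(\sigma'_+,\sigma'_-)$ to mean that $\sigma_+(X)\subseteq\sigma'_+(X)$ and $\sigma'_-(X)\subseteq\sigma_-(X)$ for every $X$. I would prove, by structural induction on $T$, that for any two valid valuations over $\FV T$ with $\sigma\preceq\sigma'$ one has $\denot T_\sigma\subseteq\denot T_{\sigma'}$. The hypotheses of the lemma are precisely $\rho\preceq\rho'$, so the first inclusion is this statement applied to $(\rho,\rho')$. For the second inclusion, observe that $\rho\preceq\rho'$ immediately gives $\bar\rho'\preceq\bar\rho$: complementation swaps $\sigma_+$ and $\sigma_-$ and thereby turns each of the two required containments into the other. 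Applying the monotonicity statement to $(\bar\rho',\bar\rho)$ then yields $\denot T_{\bar\rho'}\subseteq\denot T_{\bar\rho}$. This self-duality of $\preceq$ under $\bar{(\cdot)}$ is exactly what lets one induction cover both claims.

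For the induction itself, the base case $T=X$ is immediate from $\denot X_\sigma=\sigma_+(X)\subseteq\sigma'_+(X)=\denot X_{\sigma'}$. The arrow case is the crux. Since $\denot{U\to R}_\sigma=\denot U_{\bar\sigma}\to\denot R_\sigma$, I first establish that the operation $\bcal A\to\bcal B$ is antitone in $\bcal A$ and monotone in $\bcal B$, i.e. $\bcal A'\subseteq\bcal A$ and $\bcal B\subseteq\bcal B'$ imply $\bcal A\to\bcal B\subseteq\bcal A'\to\bcal B'$: every generator $\ve t$ of $\bcal A\to\bcal B$ satisfies $(\ve t)\,\ve 0\in\bcal B\subseteq\bcal B'$ and, for each base $\ve b\in\bcal A'\subseteq\bcal A$, $(\ve t)\,\ve b\in\bcal B\subseteq\bcal B'$, hence is a generator of $\bcal A'\to\bcal B'$, and the inclusion follows from monotonicity of the closure under $\RCn_3,\RCn_4$. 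I then apply the induction hypothesis to $R$ at $(\sigma,\sigma')$, giving $\denot R_\sigma\subseteq\denot R_{\sigma'}$, and to $U$ at the complementary, order-reversed pair $(\bar\sigma',\bar\sigma)$ — legitimate because $\sigma\preceq\sigma'$ entails $\bar\sigma'\preceq\bar\sigma$ — giving $\denot U_{\bar\sigma'}\subseteq\denot U_{\bar\sigma}$, and combine the two through the monotonicity of $\to$.

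The remaining cases are routine. For $T=\forall X.U$ I would extend both valuations by the \emph{same} mapping $(X_+,X_-)\mapsto(\bcal A,\bcal B)$ with $\bcal B\subseteq\bcal A\in\RC$; this keeps the valuations valid and preserves $\preceq$, since the new variable is assigned identically on both sides, so the induction hypothesis gives $\denot U_{\sigma,(X_+,X_-)\mapsto(\bcal A,\bcal B)}\subseteq\denot U_{\sigma',(X_+,X_-)\mapsto(\bcal A,\bcal B)}$, and intersecting over all such pairs yields $\denot{\forall X.U}_\sigma\subseteq\denot{\forall X.U}_{\sigma'}$. For a genuine linear combination $T\equiv\sum_i\alpha_i\cdot\vara U_i$ with $T\not\equiv\vara U$, the induction hypothesis gives $\denot{\vara U_i}_\sigma\subseteq\denot{\vara U_i}_{\sigma'}$ for each component, and I conclude using that $\sum_i$ is monotone in its arguments: the corresponding generator sets are related by inclusion, and the closure under $\CC,\RCn_2,\RCn_3$ is monotone in its seed.

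The main obstacle is purely the polarity bookkeeping in the arrow case, where the induction hypothesis for the domain $U$ must be invoked at the complementary pair of valuations with the ordering reversed. Recasting the lemma as the single $\preceq$-monotonicity statement is what makes this step sound, because it renders explicit the fact that the valuation ordering is preserved under complementation; the two closure inductions needed for the monotonicity of $\to$ and of $\sum$ otherwise mirror the arguments already carried out in Lemma~\ref{lem:polar1appendix}.
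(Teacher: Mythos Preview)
Your reformulation via the order $\preceq$ is elegant, but there is a genuine gap in the arrow case. You state the monotonicity claim for \emph{valid} valuations, and then apply the induction hypothesis for the domain $U$ at the pair $(\bar\sigma',\bar\sigma)$. Complementation, however, destroys validity: if $\sigma$ is valid, meaning $\sigma_-(X)\subseteq\sigma_+(X)$ for all $X$, then $\bar\sigma$ has $(\bar\sigma)_-(X)=\sigma_+(X)$ and $(\bar\sigma)_+(X)=\sigma_-(X)$, so $\bar\sigma$ is valid only when $\sigma_+(X)=\sigma_-(X)$ everywhere. In general this fails, and your induction hypothesis, as stated, does not apply to $(\bar\sigma',\bar\sigma)$.

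The repair is easy: drop validity from the monotonicity statement and prove it for arbitrary valuations $\sigma\preceq\sigma'$. Nothing in your induction uses validity---the interpretation $\denot T_\rho$ is defined for any pair valuation, and the $\forall$ case only requires that the common extension $(X_+,X_-)\mapsto(\bcal A,\bcal B)$ preserve $\preceq$, which it does regardless---so every step goes through unchanged. Alternatively, you can keep validity but strengthen the induction hypothesis to prove $\denot T_\sigma\subseteq\denot T_{\sigma'}$ and $\denot T_{\bar\sigma'}\subseteq\denot T_{\bar\sigma}$ simultaneously; this is exactly what the paper does, and it works because the second inclusion for $U$ is what feeds the first inclusion for $U\to R$, so one never needs to leave the realm of valid valuations.
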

\begin{proof}
  Structural induction on $T$.
  \begin{itemize}
    \item $T=X$. Then
      $\denot{X}_\rho=\rho_+(X)\subseteq\rho'_+(X)=\denot{X}_{\rho'}$
      and
      $\denot{X}_{\bar\rho'}=\rho'_-(X)\subseteq\rho_-(X)=\denot{X}_{\bar\rho}$.
    \item $T=U\to R$. Then
      $\denot{U\to R}_\rho=\denot{U}_{\bar{\rho}}\to\denot{R}_\rho$ and
      $\denot{U\to R}_{\bar\rho'}=\denot{U}_{\rho'}\to\denot{R}_{\bar\rho'}$.
      By the induction hypothesis
      $\denot{U}_{\bar{\rho}'}\subseteq\denot{U}_{\bar{\rho}}$ ,
      $\denot{U}_\rho\subseteq\denot{U}_{\rho'}$,
      $\denot{R}_\rho\subseteq\denot{R}_{\rho'}$ and
      $\denot{R}_{\bar\rho'}\subseteq\denot{R}_{\bar\rho}$.
      We proceed by induction on the definition of $\to$ to show that
      $\forall\ve t\in\denot{U}_{\bar{\rho}}\to\denot{R}_\rho$, then
      $\ve t\in\denot{U}_{\bar{\rho}'}\to\denot{R}_{\rho'}=\denot{U\to R}_{\rho'}$
      \begin{itemize}
        \item Let
          $\ve t\in\{\ve t\,| (\ve t)~\ve 0\in\denot{R}_{\rho}$ and
          $\forall\ve b\in\denot{U}_{\bar\rho}, (\ve r)~\ve b\in\denot{R}_{\rho}\}$.
          Notice that $(\ve t)~\ve 0\in\denot{R}_\rho\subseteq\denot{R}_{\rho'}$.
          Also,
          $\forall\ve b\in\denot{U}_{\bar\rho'},\,\ve b\in\denot{U}_{\bar\rho}$
          and then $(\ve t)~\ve b\in\denot{R}_\rho\subseteq\denot{R}_{\rho'}$.
        \item Let $\Red(\ve t)\in\denot{U\to R}_{\rho}$ and $\ve
          t\in\Neutral$. By the induction hypothesis $\Red(\ve
          t)\in\denot{U\to R}_{\rho'}$ and so, by $\RCn_3$, $\ve
          t\in\denot{U\to R}_{\rho'}$.
        \item Let $\ve t=\ve 0$. By $\RCn_4$, $\ve 0$ is in any
          reducibility candidate, in particular it is in $\denot{U\to
          R}_{\rho'}$.
      \end{itemize}
      Analogously, $\forall\ve
      t\in\denot{U}_{\rho'}\to\denot{R}_{\bar\rho'}$, $\ve
      t\in\denot{U}_\rho\to\denot{R}_{\bar\rho}=\denot{U\to R}_\rho$.
    \item $T=\forall X.U$. Then $\denot{\forall X.U}_{\rho} =
      \cap_{\bcal{A}\in\RC}\denot{U}_{\rho,(X_+,X_-)\mapsto(\bcal{A},\bcal{A})}$.
      By the induction hypothesis we have
      $\denot{U}_{\rho,(X_+,X_-)\mapsto(\bcal{A},\bcal{A})}\subseteq\denot{U}_{\rho',(X_+,X_-)\mapsto{(\bcal A,\bcal A)}}$, 
      Hence we have that
      $\cap_{\bcal{A}\in\RC}\denot{U}_{\rho,(X_+,X_-)\mapsto(\bcal{A},\bcal{A})}\subseteq\cap_{\bcal{A}\in\RC}\denot{U}_{\rho',(X_+,X_-)\mapsto(\bcal{A},\bcal{A})}=\denot{\forall X.U}_{\rho'}$.
      The proof for the case
      $\denot{\forall X.U}_{\bar\rho'}\subseteq\denot{\forall X.U}_{\bar\rho}$
      is analogous.
    \item $T\equiv\sum_i\alpha_i\cdot\vara U_i$ and $T\not\equiv\vara U$. Then $\denot{T}_{\rho}=\sum_i\denot{\vara U_i}_{\rho}$. By the
      induction hypothesis
      $\denot{\vara U_i}_{\rho}\subseteq\denot{\vara U_i}_{\rho'}$. We proceed by
      induction on the justification of $\ve t \in
      \sum_i\denot{\vara  U_i}_{\rho}$ to show that if
      $\ve t\in\sum_i\denot{\vara U_i}_{\rho}$ then $\ve t\in\sum_i\denot{\vara U_i}_{\rho'}$.
      \begin{itemize}
        \item Base case. $\ve t$ belongs to one of the
          $\denot{\vara U_i}_{\rho}$. We conclude using the fact that
          $\denot{\vara U_i}_{\rho}\subseteq\denot{\vara U_i}_{\rho'}$:
          $\ve t$ then belongs to
          $\denot{\vara U_i}_{\rho'} \subseteq \sum_i\denot{\vara
          U_i}_{\rho'}$.
        \item $\CC_1$.  $\ve t$ belongs to
          $\sum_i\denot{\vara U_i}_{\rho}$ because it is in a list
          $\vec{\ve t'}$ where
          $F(\vec{\ve t'})\in\sum_i\denot{\vara U_i}_{\rho}$ for some
          alg. context $F$. Induction hypothesis says that
          $F(\vec{\ve t'})\in\sum_i\denot{\vara U_i}_{\rho'}$. We then
          get $\ve t\in\sum_i\denot{\vara U_i}_{\rho'}$ using $\CC_1$.
        \item $\CC_2$.  Let $\ve t=F(\vec{\ve r})$ where $F$ is an
          algebraic context and
          $\ve r_i\in\denot{\vara U_i}_{\bar{\rho}}$.  Note that by
          induction hypothesis
          $\forall \ve r_i\in\denot{\vara U_i}_{\rho}$,
          $\ve r_i\in\denot{\vara U_i}_{\rho'}$ and so
          $F(\vec{\ve r})\in\sum_i\denot{\vara
          U_i}_{\rho'}$.
        \item $\RCn_2$.
          $\ve t'\to\ve t$ and $\ve t'\in\sui{n}\denot{\vara U_i}_{\rho'}$.
          Invoking $\RCn_2$, $\ve t\in\sui{n}\denot{\vara U_i}_{\rho'}$.
        \item $\RCn_3$.
          $\Red(\ve t)\subseteq\denot{T}_{\rho}$ and
          $\ve t\in\Neutral$.
          By the induction hypothesis
          $\Red(\ve t)\subseteq\denot{T}_{\rho'}$
          and so, by $\RCn_3$,
          $\ve t\in\denot{T}_{\rho'}$.
      \end{itemize}
      The case $\denot{T}_{\bar\rho'}\subseteq\denot{T}_{\bar\rho}$ is
      analogous.\qedhere
  \end{itemize}
\end{proof}

\begin{lemma}\label{lem:polar1appendix}
  For any type $T$, if $\rho=(\rho_+,\rho_-)$ is a valid valuation over $\FV{T}$, then
  we have $\denot{T}_{\bar{\rho}}\subseteq\denot{T}_{\rho}$.
\end{lemma}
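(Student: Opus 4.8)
The plan is to deduce this inclusion from a more general \emph{monotonicity} (variance) property, because a naive structural induction proving the inclusion directly breaks down at the case $T=\forall X.U$: there the complement operation interacts with the fact that the defining intersection ranges only over \emph{valid} extensions $\bcal{B}\subseteq\bcal{A}$, and the polarities of the two sides fail to line up. So rather than proving $\denot{T}_{\bar\rho}\subseteq\denot{T}_\rho$ by itself, I would strengthen the statement.

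First I would introduce an ordering on valuations: write $\rho\leq\rho'$ when $\rho_+(X)\subseteq\rho'_+(X)$ and $\rho'_-(X)\subseteq\rho_-(X)$ for every relevant $X$ (so $\rho'$ has a larger positive part and a smaller negative part). I would then prove, by structural induction on $T$, that $\rho\leq\rho'$ implies $\denot{T}_\rho\subseteq\denot{T}_{\rho'}$. The induction rests on three elementary monotonicity facts about the candidate operations, read straight off their definitions: $\bcal{A}\to\bcal{B}$ is antitone in $\bcal{A}$ and monotone in $\bcal{B}$ (shrinking the admissible arguments $\bcal{A}$, or enlarging the target $\bcal{B}$, enlarges the set of functions); each $\sum_i\bcal{A}_i$ is monotone in every $\bcal{A}_i$; and $\cap_i$ is monotone.

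The cases then proceed as follows. For $T=X$ the claim is just $\rho_+(X)\subseteq\rho'_+(X)$. For $T=U\to S$, since taking complements swaps the two components, the hypothesis $\rho\leq\rho'$ becomes the reversed inequality $\bar\rho'\leq\bar\rho$; the induction hypothesis on $U$ then gives $\denot{U}_{\bar\rho'}\subseteq\denot{U}_{\bar\rho}$, which together with $\denot{S}_\rho\subseteq\denot{S}_{\rho'}$ (induction hypothesis on $S$) and the antitone/monotone variance of $\to$ yields $\denot{U\to S}_\rho\subseteq\denot{U\to S}_{\rho'}$. For $T=\forall X.U$, the newly bound variable receives identical values $(\bcal{A},\bcal{B})$ on both sides, so the extended valuations still satisfy $\leq$; hence the induction hypothesis on $U$ applies termwise inside the intersection, and monotonicity of $\cap$ concludes. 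The linear-combination case follows from monotonicity of $\sum$ applied to each $\vara{U}_i$ of the canonical decomposition given by Lemma~\ref{lem:typedecomp}.

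Finally I would specialise to obtain the lemma. A valid valuation $\rho$ satisfies $\rho_-(X)\subseteq\rho_+(X)$ for all $X$, which is exactly the statement $\bar\rho\leq\rho$ (the positive part of $\bar\rho$ is $\rho_-$, its negative part is $\rho_+$, and both required inclusions reduce to validity). Applying the monotonicity lemma gives $\denot{T}_{\bar\rho}\subseteq\denot{T}_\rho$, as desired. I expect the main obstacle to be the arrow case: one must track carefully that the contravariance of $\to$ in its domain, combined with the component-swap built into $\bar{(\cdot)}$, turns the hypothesis $\rho\leq\rho'$ into the \emph{oppositely directed} inequality $\bar\rho'\leq\bar\rho$ needed to invoke the induction hypothesis on the domain subterm.
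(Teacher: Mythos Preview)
Your proposal is correct and takes a genuinely different route from the paper. The paper attempts exactly the direct structural induction that you suspect fails, and its $\forall X.U$ case does not go through: the terms of the intersection defining $\denot{\forall X.U}_{\bar\rho}$ are the sets $\denot{U}_{\bar\rho,\,(X_+,X_-)\mapsto(\bcal A,\bcal B)}$ with $\bcal B\subseteq\bcal A$, whereas the induction hypothesis applied at the valid extension $\sigma=\rho,\,(X_+,X_-)\mapsto(\bcal A,\bcal B)$ yields only $\denot{U}_{\bar\sigma}\subseteq\denot{U}_\sigma$, and $\bar\sigma$ assigns $X_+\mapsto\bcal B$, $X_-\mapsto\bcal A$. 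For $\bar\sigma$ to appear among the terms of the left-hand intersection one would need $\bcal A\subseteq\bcal B$, the reverse inclusion. The paper's write-up hides this mismatch behind the swapped-subscript notation $(X_-,X_+)\mapsto(\bcal B,\bcal A)$ and simply asserts the needed containment.

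Your monotonicity statement is essentially the paper's very next lemma (Lemma~\ref{lem:polar2appendix}), except that the paper restricts it to \emph{valid} $\rho,\rho'$. That hypothesis is never used in its proof, and it actually blocks deriving the present lemma as a corollary, since $\bar\rho$ is in general not valid when $\rho$ is. Dropping the validity requirement, as you do, makes Lemma~\ref{lem:polar1appendix} a one-line specialisation via $\bar\rho\leq\rho$, simultaneously repairing the gap and unifying the two results.
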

\begin{proof}
  Structural induction on $T$.
  \begin{itemize}
    \item $T=X$. Then
      $\denot{T}_{\bar{\rho}}=\rho_-(X)\subseteq\rho_+(X)=\denot{T}_\rho$.
    \item $T=U\to R$. Then $\denot{U\to
      R}_{\bar{\rho}}=\denot{U}_\rho\to\denot{R}_{\bar{\rho}}$. By the
      induction hypothesis
      $\denot{U}_{\bar{\rho}}\subseteq\denot{U}_\rho$ and
      $\denot{R}_{\bar{\rho}}\subseteq\denot{R}_\rho$. We must show that
      $\forall\ve t\in\denot{U\to R}_{\bar{\rho}}$, $\ve t\in\denot{U\to
      R}_\rho$. Let $\ve t\in\denot{U\to
      R}_{\bar{\rho}}=\denot{U}_\rho\to\denot{R}_{\bar{\rho}}$. We
      proceed by induction on the definition of~$\to$.
      \begin{itemize}
        \item Let $\ve t\in\{\ve t\,|(\ve t)\,\ve 0\in\denot{R}_{\bar{\rho}}$ and 
          $\forall\ve b\in\denot{U}_\rho, (\ve
          t)~\ve b\in\denot{R}_{\bar{\rho}}\}$. Notice that $(\ve t)\,\ve 
          0\in\denot{R}_{\bar{\rho}}\subseteq\denot{R}_{\rho}$ and forall $\ve
          b\in\denot{U}_{\bar{\rho}}$, $\ve b\in\denot{U}_\rho$, and so
          $(\ve t)~\ve
          b\in\denot{R}_{\bar{\rho}}\subseteq\denot{R}_\rho$. Thus $\ve
          t\in\denot{U}_{\bar{\rho}}\to\denot{R}_\rho=\denot{U\to
          R}_\rho$.
        \item Let $\Red(\ve t)\in\denot{U\to R}_{\bar{\rho}}$ and $\ve
          t\in\Neutral$. By the induction hypothesis $\Red(\ve
          t)\in\denot{U\to R}_\rho$ and so, by $\RCn_3$, $\ve
          t\in\denot{U\to R}_\rho$.
        \item Let $\ve t=\ve 0$. By $\RCn_4$, $\ve 0$ is in any
          reducibility candidate, in particular it is in $\denot{U\to
          R}_\rho$.
      \end{itemize}
    \item $T=\forall X.U$. Then 
      $\denot{\forall 
      X.U}_{\bar{\rho}}=\cap_{\bcal{A}\in\RC}\denot{U}_{\bar{\rho}, 
      (X_+,X_-)\mapsto(\bcal{A},\bcal{A})}$. By
      the induction hypothesis
      \[
        \denot{U}_{\bar{\rho},(X_+,X_-)\mapsto(\bcal A,\bcal
      A)}\subseteq\denot{U}_{\rho,(X_+,X_-)\mapsto(\bcal A,\bcal
      A)}.
    \]
    So
    $\cap_{\bcal{A}\in\RC}\denot{U}_{\bar{\rho},(X_+,
    X_-)\mapsto(\bcal{A},\bcal{A})}\subseteq\cap_{\bcal{A}\in\RC}
    \denot{U}_{\rho,(X_+,X_-)\mapsto(\bcal{A},\bcal{A})}$ which is
    $\denot{\forall X.U}_\rho$  by
    definition.
  \item $T\equiv\sum_i\alpha_i\cdot\vara U_i$ and
    $T\not\equiv\vara U$. Then
    $\denot{T}_{\bar{\rho}}=\sum_i\denot{\vara U_i}_{\bar{\rho}}$. By
    the induction hypothesis
    $\denot{\vara U_i}_{\bar{\rho}}\subseteq\denot{\vara U_i}_\rho$.
    We proceed by induction on the justification of
    $\ve t\in\sum_i\denot{\vara U_i}_{\bar\rho}$ to show that
    $\ve t\in\sum_i\denot{\vara U_i}_{\bar\rho}$ imples
    $\ve t\in\sum_i\denot{\vara U_i}_{\rho}$.
    \begin{itemize}
      \item Base case. $\ve t\in\denot{\vara U_i}_{\bar\rho}$ for some
        $i$: by induction hypothesis $\ve t\in\denot{\vara
        U_i}_{\rho}$, which is included in $\sum_i\denot{\vara U_i}_{\rho}$.
      \item $\CC_1$.
        $\ve t\in\sum_i\denot{\vara U_i}_{\bar\rho}$ because it is in a list
        $\vec{\ve t'}$ where
        $F(\vec{\ve t'})\in\sum_i\denot{\vara U_i}_{\bar\rho}$ for some
        alg. context $F$. Induction hypothesis says that
        $F(\vec{\ve t'})\in\sum_i\denot{\vara U_i}_{\rho}$. We then
        get $\ve t\in\sum_i\denot{\vara U_i}_{\rho}$ using $\CC_1$.
      \item $\CC_2$.
        Let $\ve t=F(\vec{\ve r})$ where $F$ is an algebraic context
        and $\ve r_i\in\denot{\vara U_i}_{\bar{\rho}}$.  By
        induction hypothesis $\forall \ve r\in\denot{\vara U_i}_{\bar{\rho}}$,
        $\ve r\in\denot{\vara U_i}_{\rho}$ and so the result holds by $\CC_2$.
      \item $\RCn_2$. 
        Let
        $\ve t\in\sum_i\denot{\vara U_i}_{\bar{\rho}}$
        and
        $\ve t\to\ve t'$.
        By the induction hypothesis
        $\ve t\in\sum_i\denot{\vara U_i}_{\rho}$,
        hence by $\RCn_2$,
        $\ve t'\in\sum_i\denot{\vara U_i}_{\rho}$.
      \item $\RCn_3$.
        Let
        $\Red(\ve t)\in\sum_i\denot{\vara U_i}_{\bar{\rho}}$
        and
        $\ve t\in\Neutral$.
        By the induction hypothesis
        $\Red(\ve t)\in\sum_i\denot{\vara U_i}_\rho$
        and so, by $\RCn_3$,
        $\ve t\in\sum_i\denot{\vara U_i}_\rho$.\qedhere
    \end{itemize}
\end{itemize}
\end{proof}

\begin{lemma}
  \label{lem:sumrcappendix}\label{lem:alphatinsigma}
  Let $\{\bcal A_{i}\}_{i=1\cdots n}$ be a family of reducibility
  candidates.  If $\ve s$ and $\ve t$ both belongs to
  $\sui{n}\bcal A_{i}$, then so does $\ve s+\ve t$. Similarly, if
  $\ve t\in\sui{n}\bcal A_i$, then for any $\alpha$,
  $\alpha\cdot\ve t\in\sui{n}\bcal A_i$.
\end{lemma}

\begin{proof}
  Direct corollary of the closure under $\CC_2$.
\end{proof}

\begin{lemma}
  \label{lem:applircappendix}
  Suppose that $\lambda x.\ve s\in\bcal A\to\bcal B$ and $\ve b\in\bcal A$, then
  $(\lambda x.\ve s)\,\ve b\in\bcal B$.
\end{lemma}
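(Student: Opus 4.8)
The plan is to exploit the syntactic shape of $\lambda x.\ve s$ together with the fact that $\bcal A\to\bcal B$ is defined as the closure of an explicitly described generating set under only two operations, namely $\RCn_3$ (neutral expansion) and $\RCn_4$ (adjoining $\ve 0$). The crucial observation is that a $\lambda$-abstraction can never be produced by either of these closure steps, so any $\lambda$-abstraction lying in $\bcal A\to\bcal B$ must already belong to the generating set, where the desired application property holds by definition. This reduces the lemma to a single structural remark about the closure.

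Concretely, I would first recall that the generating set of $\bcal A\to\bcal B$ consists of those closed terms $\ve t\in\CT$ with $(\ve t)\,\ve 0\in\bcal B$ and with $(\ve t)\,\ve b'\in\bcal B$ for every basis term $\ve b'\in\bcal A$. Then I would argue, by induction on the iterative construction of the closure, that the only $\lambda$-abstractions it contains are those already present in the generating set: the operation $\RCn_4$ contributes only the term $\ve 0$, while $\RCn_3$ contributes only terms of $\Neutral$, and $\lambda x.\ve s$ is neither $\ve 0$ nor neutral, since by definition a $\lambda$-abstraction is excluded from $\Neutral$. Hence from $\lambda x.\ve s\in\bcal A\to\bcal B$ I would conclude that $\lambda x.\ve s$ satisfies the generating condition.

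Finally, since $\ve b$ is a basis term belonging to $\bcal A$, the generating condition applied to $\ve b$ yields precisely $(\lambda x.\ve s)\,\ve b\in\bcal B$, which is the claim. The proof is essentially immediate once the closure is analysed; the only point requiring a (routine) induction is the verification that no $\lambda$-abstraction is ever introduced by $\RCn_3$ or $\RCn_4$, so that membership of a $\lambda$-abstraction in $\bcal A\to\bcal B$ forces membership in the generating set. I do not expect any genuine obstacle beyond making this closure-stability observation precise.
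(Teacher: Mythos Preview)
Your proposal is correct and follows essentially the same approach as the paper's proof: both argue by induction on the closure defining $\bcal A\to\bcal B$, observe that a $\lambda$-abstraction cannot arise from $\RCn_3$ (since it is not neutral) or $\RCn_4$ (since it is not $\ve 0$), and hence must lie in the generating set, where the conclusion is immediate.
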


\begin{proof}
  Induction on the definition of $\bcal A\to\bcal B$.
  \begin{itemize}
    \item If $\lambda x.\ve s$ is in
      $\{\ve t~|~(\ve t)~\ve 0\in\bcal B$ and 
      $\forall\ve b\in\bcal A, (\ve t)\,\ve b \in \bcal B\}$, then it is trivial
    \item $\lambda x.\ve s$ cannot be in $\bcal A\to\bcal B$ by the closure under 
      $\RCn_3$, because it is not neutral, neither by the closure under $\RCn_4$, 
      because it is not the term $\ve 0$.\qedhere
  \end{itemize}
\end{proof}
\medskip

\begin{remark}\rm
  For the proof of adequacy, we show in the following lemma
  that $\denot{\forall\vec X.U}_\rho$ can be equivalently defined as a
  more general intersection, provided that $\rho$ is valid.
\end{remark}

\begin{lemma}\label{lem:betterforall}
  Suppose that $\rho=(\rho_+,\rho_-)$ is a valid valuation. 
  Then
  \[
    \cap_{\bcal{B}\subseteq\bcal{A}\in\RC}\denot{U}_{\rho,(X_+,X_-)\mapsto (\bcal{A},\bcal{B})} = \cap_{\bcal{A}\in\RC}\denot{U}_{\rho,(X_+,X_-)\mapsto(\bcal{A},\bcal{A})}.
  \]
\end{lemma}

\begin{proof}
  Suppose that
  $\ve t \in
  \cap_{\bcal{B}\subseteq\bcal{A}\in\RC}\denot{U}_{\rho,(X_+,X_-)\mapsto (\bcal{A},\bcal{B})}$,
  and pick any $\bcal{A}\in\RC$. Let $\bcal{B}:=\bcal{A}$: we have
  $\bcal{B}\subseteq\bcal{A}$, so
  $\ve t\in \denot{U}_{\rho,(X_+,X_-)\mapsto (\bcal{A},\bcal{B})}$,
  and then $\ve t\in \denot{U}_{\rho,(X_+,X_-)\mapsto (\bcal{A},\bcal{A})}$. Since this is the case for all
  $\bcal{A}\in\RC$, we conclude that 
  \[
    \cap_{\bcal{B}\subseteq\bcal{A}\in\RC}\denot{U}_{\rho,(X_+,X_-)\mapsto (\bcal{A},\bcal{B})} \subseteq \cap_{\bcal{A}\in\RC}\denot{U}_{\rho,(X_+,X_-)\mapsto(\bcal{A},\bcal{A})}.
  \]
  Now, suppose that 
  $\ve t \in
  \cap_{\bcal{A}\in\RC}\denot{U}_{\rho,(X_+,X_-)\mapsto(\bcal{A},\bcal{A})}$. Pick
  any pair $\bcal{B}\subseteq\bcal{A}\in\RC$.
  Then $\ve t
  \in\denot{U}_{\rho,(X_+,X_-)\mapsto(\bcal{A},\bcal{A})}$.
  From Lemma~\ref{lem:polar2appendix} and because
  $\bcal{B}\subseteq\bcal{A}$,
  \[
    \denot{U}_{\rho,(X_+,X_-)\mapsto(\bcal{A},\bcal{A})} \subseteq
    \denot{U}_{\rho,(X_+,X_-)\mapsto(\bcal{A},\bcal{B})}.
  \]
  Since this is true for any pair $\bcal{B}\subseteq\bcal{A}\in\RC$,
  we deduce that
  \[
    \cap_{\bcal{A}\in\RC}\denot{U}_{\rho,(X_+,X_-)\mapsto(\bcal{A},\bcal{A})} \subseteq \cap_{\bcal{B}\subseteq\bcal{A}\in\RC}\denot{U}_{\rho,(X_+,X_-)\mapsto (\bcal{A},\bcal{B})}.
  \]
  We therefore have the required equality.
\end{proof}

\begin{remark}\rm
  Now, we can prove the Adequacy Lemma. In the proof, we replace the
  definition of $\denot{\forall\vec X.U}_\rho$ with the more precise
  intersection proposed in Lemma \ref{lem:betterforall}.
\end{remark}

\recap{Lemma}{Adequacy Lemma}{lem:SNadeq}
Every derivable typing judgement is valid: For every valid sequent
$\Gamma\vdash\ve t:T$, we have $\Gamma\models\ve t:T$.

\begin{proof}
  The proof of the adequacy lemma is made by induction on the size of
  the typing derivation of $\Gamma\vdash\ve t: T$. We look
  at the last typing rule that is used, and show in each case that
  $\Gamma\models\ve t: T$, \ie if $T\equiv\vara U$, then $\ve 
  t_\sigma\in\denot{\vara U}_\rho$ or if 
  $T\equiv\sui{n}\alpha_i.\vara U_i$ in
  the sense of Lemma~\ref{lem:typedecomp}, then 
  $\ve t_{\sigma}\in\sum_{i=1}^n\denot{\vara U_i}_{\rho,\rho_i}$,
  for every valid valuation~$\rho$, set of valid valuations $\{\rho_i\}_n$, and
  substitution~$\sigma\in\denot{\Gamma}_\rho$ (\ie substitution $\sigma$
  such that $(x:V)\in\Gamma$ implies $x_\sigma\in\denot{V}_{\bar\rho}$).
\item \parbox{3.4cm}{
    \prooftree
    \justifies\Gamma,x: U\vdash x: U
    \using ax
\endprooftree}
\parbox{10cm}{ Then for any $\rho$,
  $\forall\sigma\in\denot{\Gamma,x: U}_{\rho}$ by definition we
  have $x_\sigma\in\denot{U}_{\bar\rho}$.From
  Lemma~\ref{lem:polar1appendix}, we deduce that
  $x_\sigma\in\denot{U}_{\rho}$. 
}

\item\parbox{3cm}{
    \prooftree\Gamma\vdash\ve t: T
    \justifies\Gamma\vdash\ve 0: 0\cdot T
    \using 0_I
  \endprooftree
}
\parbox{10,4cm}{
  Note that $\forall \sigma,\,\ve 0_\sigma=\ve 0$, and $\ve 0$ is in any reducibility candidate by $\RCn_4$.
}

\item\parbox{4cm}{
    \prooftree\Gamma,x: U\vdash\ve t: T
    \justifies\Gamma\vdash\lambda x.\ve t: U\to T
    \using\to_I
  \endprooftree
}
\parbox{9,4cm}{Let $T\equiv\vara V$ or
  $T\equiv\sui{n}\alpha_i\cdot\vara U_i$ with $n>1$.
  Then by the induction hypothesis, for any $\rho$,
  set $\{\rho_i\}_n$ not acting on $FV(\Gamma)\cup FV(U)$, and
  $\forall\sigma\in\denot{\Gamma,x: U}_\rho$, we have
  $\ve t_\sigma\in\sum_{i=1}^n\denot{\vara U_i}_{\rho,\rho_i}$, or
  simply $\ve t_\sigma\in\denot{\vara V}_\rho$ if $T\equiv\vara V$.}

  In any case, we must prove that $\forall\sigma\in\denot{\Gamma}_\rho$,
  $(\lambda x.\ve t)_\sigma\in\denot{U\to T}_{\rho,\rho'}$, or
  what is the same $\lambda x.\ve
  t_\sigma\in\denot{U}_{\bar\rho,\bar\rho'}\to\denot{T}_{\rho,\rho'}$,
  where $\rho'$ does not act on $FV(\Gamma)$. If we can show
  that $\ve b\in\denot{U}_{\bar\rho,\bar\rho'}$ implies
  $(\lambda x.\ve t_\sigma)~\ve b\in\denot{T}_{\rho,\rho'}$,
  then we are done.
  Notice that
  $\denot{T}_{\rho,\rho'}=\sum_{i=1}^n\denot{\vara U_i}_{\rho,\rho'}$, or
  $\denot{T}_{\rho,\rho'}=\denot{\vara V}_{\rho,\rho'}$         
  Since
  $(\lambda x.\ve t_\sigma)~\ve b$
  is a neutral term, we just need to prove that every
  one-step reduction of it is in $\denot{T}_\rho$, which by
  $\RCn_3$ closes the case.
  By $\RCn_1$, $\ve t_\sigma$ and $\ve b$ are strongly normalising, and so is 
  $\lambda x.\ve t_\sigma$. Then we proceed by induction on the sum of the 
  lengths of all the reduction paths starting from $(\lambda x.\ve 
  t_{\sigma})$ plus the same sum starting from $\ve b$:
  \begin{description}
    \item[$(\lambda x.\ve t_{\sigma})~\ve b\to(\lambda x.\ve
      t_{\sigma})~\ve b'$] with $\ve b\to \ve b'$. Then $\ve
      b'\in\denot{U}_{\bar\rho,\bar\rho'}$ and we close by induction
      hypothesis.
    \item[$(\lambda x.\ve t_{\sigma})~\ve b\to(\lambda x.\ve t')~\ve b$]
      with $\ve t_{\sigma}\to\ve t'$. If $T\equiv\vara V$, then
      $\ve t_\sigma\in\denot{\vara V}_{\rho,\rho'}$, and by $\RCn_2$ so is $\ve t'$. 
      In other case 
      $\ve t_{\sigma}\in\sum_{i=1}^n\denot{\vara U_i}_{\rho,\rho_i}$
      for any $\{\rho_i\}_n$ not acting on $FV(\Gamma)$, take $\forall
      i,\,\rho_i=\rho'$, so $\ve t_{\sigma}\in\denot{T}_{\rho,\rho'}$
      and so are its reducts, such as $\ve t'$. We close by induction
      hypothesis.
    \item[$(\lambda x.\ve t_{\sigma})~\ve b\to\ve t_{\sigma}\subst{\ve b}{x}$]
      Let $\sigma'=\sigma;x\mapsto\ve b$.
      Then $\sigma'\in\denot{\Gamma,x: U}_{\rho,\rho'}$, so 
      $\ve t_{\sigma'}\in\denot{T}_{\rho,\rho_i}$. Notice that
      $\ve t_{\sigma}[\ve b/x]=\ve t_{\sigma'}$.
  \end{description}

\item\prooftree
  \Gamma\vdash\ve t:\sui{n}\alpha_i\cdot \forall\vec{X}.(U\to T_i)
  \qquad
  \Gamma\vdash\ve r:\suj{m}\beta_j\cdot U[\vec{A}_j/\vec{X}]
  \justifies
  \Gamma\vdash(\ve t)~\ve r:
  \sui{n}\suj{m}\alpha_i\times\beta_j\cdot T_i[\vec{A}_j/\vec{X}]
  \using\to_E
\endprooftree

Without loss of generality, assume that the $T_i$'s are different from 
each other (similarly for $\vec{A}_j$).
By the induction hypothesis, for any $\rho$, 
$\{\rho_{i,j}\}_{n,m}$ not acting on $FV(\Gamma)$, and
$\forall\sigma\in\denot{\Gamma}_\rho$ we have
$\ve t_\sigma\in\sum_{i=1}^n
\cap_{\vec{\bcal{A}}\subseteq\vec{\bcal{B}}\in\RC}
\denot{(U\to T_i)}_{\rho,\rho_i,{(\vec{X}_+,\vec{X}_-)
\mapsto(\vec{\bcal{A}},\vec{\bcal{B}})}}$
and
$\ve r_\sigma\in
\sum_{j=1}^m\denot{U[\vec{A}_j/\vec{X}]}_{\rho,\rho_j}$,
or if $n=\alpha_1=1$, 
$\ve t_\sigma\in
\cap_{\vec{\bcal{A}}\subseteq\vec{\bcal{B}}\in\RC}
\denot{(U\to T_1)}_{\rho,{(\vec{X}_+,\vec{X}_-)
\mapsto(\vec{\bcal{A}},\vec{\bcal{B}})}}$
and if $m=1$ and $\beta_1=1$,
$\ve r_\sigma\in\denot{U[\vec{A}_j/\vec{X}]}_{\rho}$.
Notice that for any $\vec{A}_j$, if $U$ is a unit type,
$U[\vec{A}_j/\vec{X}]$ is still unit.

For every $i,j$, let
$T_i[\vec{A}_j/\vec{X}]\equiv\suk{r^{ij}}\delta^{ij}_k\cdot\vara W^{ij}_k$.
We must show that for any $\rho$, sets
$\{\rho'_{i,j,k}\}_{r_{i,j}}$ not acting on $FV(\Gamma)$ and
$\forall\sigma\in\denot{\Gamma}_\rho$, the term
$((\ve t)~\ve r)_\sigma$ is in the set 
$\sum_{i=1\cdots n, j=1\cdots m,k=1\cdots r^{ij}}
\denot{\vara W^{ij}_k}_{\rho,\rho_{ijk}}$,
or in case of
$n=m=\alpha_1=\beta_1=r^{11}=1$,
$((\ve t)~\ve r)_\sigma\in\denot{\vara W^{11}_1}_\rho$.

Since both
$\ve t_\sigma$ and $\ve r_\sigma$ are strongly normalising,
we proceed by induction on the sum of the lengths of their
rewrite sequence. The set $\Red(((\ve t)~\ve r)_\sigma)$
contains:
\begin{itemize}
  \item $(\ve t_\sigma)~\ve r'$ or $(\ve t')~\ve r_\sigma$ when
    $\ve t_\sigma\to\ve t'$ or $\ve r_\sigma\to\ve r'$. By
    $\RCn_2$, the term $\ve t'$ is in the set $\sum_{i=1}^n
    \cap_{\vec{\bcal{A}}\subseteq\vec{\bcal{B}}\in\RC}\denot{(U\to
    T_i)}_{\rho,\rho_i,{(\vec{X}_+,\vec{X}_-)
    \mapsto(\vec{\bcal{A}},\vec{\bcal{B}})}}$
    (or if $n=\alpha_1=1$, the term $\ve t'$ is in
    $\cap_{\vec{\bcal{A}}\subseteq\vec{\bcal{B}}\in\RC}
    \denot{(U\to T_1)}_{\rho,{(\vec{X}_+,\vec{X}_-)
    \mapsto(\vec{\bcal{A}},\vec{\bcal{B}})}}$),
    and
    $\ve r'\in\sum_{j=1}^m\denot{U[\vec{A}_j/\vec{X}]}_{\rho,\rho_j}$
    (or in $\denot{U[\vec{A}_1/\vec{X}]}_\rho$ if $m=\beta_1=1$).
    In any case, we conclude by the induction hypothesis.
  \item $({\ve t_1}_\sigma)~\ve r_\sigma+({\ve t_2}_\sigma)~\ve r_\sigma$
    with $\ve t_\sigma={\ve t_1}_\sigma+{\ve t_2}_\sigma$,
    where, $\ve t=\ve t_1+\ve t_2$.
    Let $s$ be the size of the derivation of
    $\Gamma\vdash\ve t:\sui{n}\alpha_i\cdot \forall\vec{X}.(U\to T_i)$.
    By Lemma~\ref{lem:sums}, there exists
    $R_1+R_2\equiv\sui{n}\alpha_i\cdot \forall\vec{X}.(U\to T_i)$
    such that
    $\Gamma\vdash{\ve t_1}_\sigma:R_1$ and
    $\Gamma\vdash{\ve t_2}_\sigma:R_2$
    can be derived with a derivation tree of size $s-1$ if
    $R_1+R_2=\sui{n}\alpha_i\cdot \forall\vec{X}.(U\to T_i)$,
    or of size $s-2$ in other case. In such case, there exists\
    $N_1,N_2\subseteq\{1,\dots,n\}$
    with $N_1\cup N_2=\{1,\dots,n\}$
    such that
    \begin{align*}
      R_1\equiv
      \sum\limits_{i\in N_1\setminus N_2}
      \alpha_i\cdot\forall\vec X.(U\to T_i)+
      \sum\limits_{i\in N_1\cap N_2}
      \alpha'_i\cdot\forall\vec X.(U\to T_i)
      & \mbox{\quad and}\\
      R_2\equiv
      \sum\limits_{i\in N_2\setminus N_1}
      \alpha_i\cdot\forall\vec X.(U\to T_i)+
      \sum\limits_{i\in N_1\cap N_2}
      \alpha''_i\cdot\forall\vec X.(U\to T_i) &
    \end{align*}
    where $\forall i\in N_1\cap N_2$,
    $\alpha'_i+\alpha''_i=\alpha_i$. 
    Therefore, using $\equiv$ we get
    \begin{align*}
      \Gamma\vdash{\ve t_1}:
      \sum\limits_{i\in N_1\setminus N_2}
      \alpha_i\cdot\forall\vec X.(U\to T_i)+
      \sum\limits_{i\in N_1\cap N_2}
      \alpha'_i\cdot\forall\vec X.(U\to T_i) 
      & 		\mbox{\quad and}\\
      \Gamma\vdash{\ve t_2}:
      \sum\limits_{i\in N_2\setminus N_1}
      \alpha_i\cdot\forall\vec X.(U\to T_i)+
      \sum\limits_{i\in N_1\cap N_2}
      \alpha''_i\cdot\forall\vec X.(U\to T_i) &
    \end{align*} 
    with a derivation three of size $s-1$.
    So, using rule $\to_E$, we get
    \begin{align*}
      \Gamma\vdash(\ve t_1)~\ve r:
      \sum\limits_{i\in N_1\setminus N_2}
      \suj{m}\alpha_i\times\beta_j\cdot T_i[\vec A_j/\vec X]+
      \sum\limits_{i\in N_1\cap N_2}
      \suj{m}\alpha'_i\times\beta_j\cdot T_i[\vec A_j/\vec X]
      & \mbox{\qquad and}\\
      \Gamma\vdash(\ve t_2)~\ve r:
      \sum\limits_{i\in N_2\setminus N_1}
      \suj{m}\alpha_i\times\beta_j\cdot T_i[\vec A_j/\vec X]+
      \sum\limits_{i\in N_1\cap N_2}
      \suj{m}\alpha''_i\times\beta_j\cdot T_i[\vec A_j/\vec X] &
    \end{align*}
    with a derivation three of size $s$.
    Hence, by the induction hypothesis the term
    $({\ve t_1}_\sigma)~\ve r_\sigma$ is in the set
    $\sum_{i=N_1, j=1\cdots m,k=1\cdots r^{ij}}
    \denot{\vara W^{ij}_k}_{\rho,\rho_{ijk}}$,
    and the term
    $({\ve t_2}_\sigma)~\ve r_\sigma$ is in
    $\sum_{i=N_2, j=1\cdots m,k=1\cdots r^{ij}}
    \denot{\vara W^{ij}_k}_{\rho,\rho_{ijk}}$.
    Hence, by Lemma~\ref{lem:sumrcappendix} the term
    $({\ve t_1}_\sigma)~\ve r_\sigma+({\ve t_2}_\sigma)~\ve r_\sigma$
    is in the set
    $\sum_{i=1,\dots,n, j=1\cdots m,k=1\cdots r^{ij}}
    \denot{\vara W^{ij}_k}_{\rho,\rho_{ijk}}$.
    The case where
    $m=\alpha_1=\beta_1=r^{11}=1$, and $card(N_1)$ or $card(N_2)$
    is equal to $1$ follows analogously.

  \item $(\ve t_\sigma)~{\ve r_1}_\sigma+(\ve t_\sigma)~{\ve r_2}_\sigma$
    with $\ve r_\sigma={\ve r_1}_\sigma+{\ve r_2}_\sigma$.
    Analogous to previous case.
  \item $\gamma\cdot(\ve t'_\sigma)~\ve r_\sigma$
    with $\ve t_\sigma=\gamma\cdot\ve t'_\sigma$,
    where $\ve t=\gamma\cdot\ve t'$.
    Let $s$ be the size of the derivation of
    $\Gamma\vdash\gamma\cdot\ve t':
    \sui{n}\alpha_i\cdot\forall\vec{X}.(U\to T_i)$.
    Then by Lemma~\ref{lem:scalars},
    $\sui{n}\alpha_i\cdot\forall\vec X.(U\to T_i)\equiv\alpha\cdot R$ 
    and
    $\Gamma\vdash\ve t':R$.
    If
    $\sui{n}\alpha_i\cdot\forall\vec X.(U\to T_i)=\alpha\cdot R$,
    such a derivation is obtained with size $s-1$, in other case it is 
    obtained in size $s-2$ and by Lemma~\ref{lem:typecharact}, 
    $R\equiv\sui{n'}\gamma_i\cdot V_i
    +\suk{h}\eta_k\cdot\vara X_k$, however it is easy to see that
    $h=0$ because $R$ is equivalent to a sum of terms, where none of them is $\vara X$.
    So $R\equiv\sui{n'}\gamma_i\cdot V_i$. Notice that
    $\sui{n}\alpha_i\cdot\forall\vec X.(U\to T_i)
    \equiv
    \sui{n'}\alpha\times\gamma_i\cdot V_i$.
    Then by Lemma~\ref{lem:equivdistinctscalars}, there exists a 
    permutation $p$ such that $\alpha_i=\alpha\times\gamma_{p(i)}$ and
    $\forall\vec X.(U\to T_i)\equiv V_{p(i)}$. Then by rule $\equiv$, 
    in size $s-1$ we can derive 
    $\Gamma\vdash\ve t':\sui{n}\gamma_i\cdot\forall\vec X.(U\to T_i)$.
    Using rule $\to_E$, we get
    $\Gamma\vdash(\ve t')~\ve r:
    \sui{n}\suj{m}\gamma_i\times\beta_j\cdot T_i[\vec A_j/\vec X]$
    in size $s$. Therefore, by the induction hypothesis,
    $({\ve t'}_\sigma)~\ve r_\sigma$
    is in the set
    $\sum_{i=1,\dots,n, j=1\cdots m,k=1\cdots r^{ij}}
    \denot{\vara W^{ij}_k}_{\rho,\rho_{ijk}}$.
    We conclude with Lemma~\ref{lem:alphatinsigma}.

  \item $\gamma\cdot(\ve t_\sigma)~\ve r'_\sigma$
    with $\ve r_\sigma=\gamma\cdot\ve r'_\sigma$.
    Analogous to previous case.
  \item $\ve 0$
    with $\ve t_\sigma=\ve 0$, or $\ve r_\sigma=\ve 0$.
    By $\RCn_4$, $\ve 0$ is in every candidate.

  \item The term $\ve t'_\sigma[\ve r_\sigma/x]$, when $\ve
    t_\sigma=\lambda x.\ve t'$ and $\ve r$ is a base
    term. Note that this term is of the form $\ve
    t'_{\sigma'}$ where $\sigma'=\sigma;x\mapsto\ve r$. We are
    in the situation where the types of $\ve t$ and $\ve r$
    are respectively $\forall\vec{X}.(U\to T)$ and 
    $U[\vec{A}/\vec{X}]$, and so
    $\sum_{i,j,k}\denot{\vara W^{ij}_k}_{\rho,\rho_{ijk}}=
    \sum_{k=1}^{r}\denot{\vara W_k}_{\rho,\rho_{k}}$,
    where we omit the index ``${11}$'' (or directly $\denot{\vara W}_\rho$ if 
    $r=1$).
    Note that
    \[
      \lambda x.\ve
      t'_\sigma\in\denot{\forall\vec{X}.(U\to
    T)}_{\rho,\rho'}=\cap_{\vec{\bcal{A}}\subseteq\vec{\bcal{B}}\in\RC}\denot{U\to
    T}_{\rho,\rho',(\vec{X}_+,\vec{X}_-)\mapsto(\vec{A},\vec{B})}
  \]
  for all possible $\rho'$ such that $|\rho'|$ does not
  intersect $\FV\Gamma$.  Choose $\vec{\bcal{A}}$ and
  $\vec{\bcal{B}}$ equal to $\denot{\vec A}_{\rho,\rho'}$
  and choose $\rho'_-$ to send every $X$ in its domain to
  $\cap_{k}\rho_{k-}(X)$ and $\rho'_+$ to send all the $X$
  in its domain to $\sum_{k}\rho_{k+}(X)$. Then
  by definition of $\to$ and Lemma~\ref{lem:substRed},
  \begin{align*}
    \lambda x.\ve t'_\sigma &
    \in \denot{U\to T}_{\rho,\rho',(\vec X_+,\vec X_-)\mapsto (\denot{\vec A}_{\bar\rho,\bar\rho'}, \denot{\vec A}_{\rho,\rho'})} \\
    &=
    \denot{U[\vec{A}/\vec{X}]}_{\bar\rho,\bar\rho'}\to \denot{T}_{\rho,\rho',(\vec X_+,\vec X_-)\mapsto (\denot{\vec A}_{\bar\rho,\bar\rho'}, \denot{\vec A}_{\rho,\rho'})}.
  \end{align*}
  Since $\ve r\in\denot{U[\vec{A}/\vec{X}]}_{\bar\rho,\bar\rho'}$,
  using Lemmas~\ref{lem:applircappendix} and \ref{lem:substRed},
  \begin{align*}
    (\lambda x.\ve t_\sigma)~\ve r 
    &\in\denot{T}_{\rho,\rho',(\vec X_+,\vec X_-)\mapsto 
      (\denot{\vec A}_{\bar\rho,\bar\rho'},
      \denot{\vec A}_{\rho,\rho'})
    }\\
    &=
    \denot{T[\vec{A}/\vec{X}]}_{\rho,\rho'}\\
    &=
    \sum_{k=1}^n\denot{\vara W_k}_{\rho,\rho'}\quad\mbox{ or just 
    }\quad\denot{\vara W_1}_{\rho,\rho'}\mbox{ if }n=1.
  \end{align*}
  Now, from Lemma~\ref{lem:polar2appendix}, for all $k$ we have
  $\denot{\vara W_k}_{\rho,\rho'}\subseteq\denot{\vara W_k}_{\rho,\rho_k}$.
  Therefore
  \[
    (\lambda x.\ve
    t_\sigma)~\ve
    r\in\sum_{k=1}^n\denot{\vara W_k}_{\rho,\rho_k}\enspace.
  \]
\end{itemize}
Since the set $\Red(((\ve t)~\ve
r)_\sigma)\subseteq\sui n\suj m\suk{r^{ij}} \denot{\vara W^{ij}_k}_{\rho,\rho_{ijk}}$, we can conclude by $\RCn_3$.

\item\parbox{5.7cm}{
    \prooftree\Gamma\vdash\ve t:\sui{n}\alpha_i\cdot U_i\quad X\notin FV(\Gamma)
    \justifies\Gamma\vdash\ve t:\sui{n}\alpha_i\cdot \forall X.U_i
    \using\forall_I
  \endprooftree
}
\parbox{7.7cm}{ By the induction hypothesis,
  for any $\rho$, set $\{\rho_i\}_n$ not acting on
  $FV(\Gamma)$, we have $\forall\sigma\in\denot{\Gamma}_\rho$,
  $\ve t_\sigma\in\sum_{i=1}^n\denot{U_i}_{\rho,\rho_i}$
  (or $\ve t_\sigma\in\denot{U_1}_{\rho,\rho_1}$ if $n=\alpha_1=1$).
  Since
  $X\notin FV(\Gamma)$, we can take
  $\rho_i=\rho'_i,(X_+,X_-)\mapsto(\bcal{A},\bcal{B})$, then
  for any $\bcal{B}\subseteq\bcal{A}$, we have
  $\ve t_\sigma
  \in\sum_{i=1}^n\denot{U_i}_{\rho,\rho'_i,(X_+,X_-)\mapsto (\bcal{A},\bcal {B})}$ (or $\ve t_\sigma\in \denot{U_1}_{\rho,\rho'_1,(X_+,X_-)\mapsto (\bcal{A},\bcal {B})}$ if $n=\alpha_1=1$).}

  {Since it is valid for any $\bcal{B}\subseteq\bcal{A}$, we can take all the intersections, thus we have $\ve t_\sigma\in\sum_{i=1}^n\cap_{\bcal{B}\subseteq\bcal{A}\in\RC}\denot{U_i}_{\rho,\rho'_i,(X_+,X_-)\mapsto(\bcal{A},\bcal{B})}=\sum_{i=1}^n\denot{\forall X.U_i}_{\rho,\rho'_i}$ (or  if $n=\alpha_1=1$ simply $\ve t_\sigma\in \cap_{\bcal{B}\subseteq\bcal{A}\in\RC}\denot{U_1}_{\rho, \rho'_1,(X_+,X_-)\mapsto(\bcal{A},\bcal{B})} =\denot{\forall X.U_1}_{\rho,\rho'_1}$).  }

\item\parbox{4.5cm}{
    \prooftree\Gamma\vdash\ve t:\sui{n}\alpha_i\cdot \forall X.U_i
    \justifies\Gamma\vdash\ve t:\sui{n}\alpha_i\cdot U_i[A/X]
    \using\forall_E
  \endprooftree
}
\parbox{8.9cm}{ By the induction hypothesis,
  for any $\rho$ and for all families $\{\rho_i\}_n$, we have
  for all $\sigma$ in $\denot{\Gamma}_\rho$ that the term
  $\ve t_\sigma$ is in
  $\sum_{i=1}^n\denot{\forall X.U_i}_{\rho,\rho_i}
  =\sum_{i=1}^n\cap_{\bcal{B}\subseteq\bcal{A}\in\RC}\denot{
  U_i}_{\rho,\rho'_i,(X_+,X_-)\mapsto(\bcal{A},\bcal{B})}$
  (or  if $n=\alpha_1=1$, 
  $\ve t_\sigma$ is in the set
  $\denot{\forall X.U_1}_{\rho,\rho_1}
  =\cap_{\bcal{B}\subseteq\bcal{A}\in\RC}\denot{
  U_1}_{\rho,\rho'_1,(X_+,X_-)\mapsto(\bcal{A},\bcal{B})}$).
  Since it is in the intersections, we can chose
  $\bcal{A}=\denot{A}_{\bar\rho,\bar\rho_i}$ and 
  $\bcal{B}=\denot{A}_{\rho,\rho_i}$, and
  then  $\ve t_\sigma\in
  \sum_{i=1}^n\denot{U_i}_{\rho,\rho'_i,X\mapsto
  A}=\sum_{i=1}^n\denot{U_i[A/X]}_{\rho,\rho'_i}$
  (or
  $\ve t_\sigma
  \in\denot{U_1}_{\rho,\rho'_1,X\mapsto A}
  =\denot{U_i[A/X]}_{\rho,\rho'_1}$,
if $n=\alpha_1=1$).}

\item\parbox{3.2cm}{
    \prooftree\Gamma\vdash\ve t: T
    \justifies\Gamma\vdash\alpha\cdot \ve t:\alpha\cdot T
    \using\alpha_I
  \endprooftree
}
\parbox{10.2cm}{ Let
  $T\equiv\sui{n}\beta_i\cdot\vara U_i$, so
  $\alpha\cdot T\equiv\sui{n}\alpha\times\beta_i\cdot\vara U_i$. By the
  induction hypothesis, for any $\rho$, we have
  $\forall\sigma\in\denot{\Gamma}_\rho$, $\ve
  t_\sigma\in\sum_{i=1}^n\denot{\vara U_i}_{\rho,\rho_i}$. 
  By Lemma~\ref{lem:alphatinsigma}, $(\alpha\cdot \ve 
  t)_\sigma=\alpha\cdot \ve
  t_\sigma\in\sum_{i=1}^n\denot{\vara U_i}_{\rho,\rho_i}$.
  Analogous if $n=\beta_1=1$.
}

  \item\parbox{4.5cm}{
      \prooftree\Gamma\vdash\ve t: T\qquad\Gamma\vdash\ve r: R
      \justifies\Gamma\vdash\ve t+\ve r: T+R
      \using +_I
    \endprooftree
  }
  \parbox{8.9cm}{ Let
    $T\equiv\sui{n}\alpha_i\cdot\vara U_{i1}$ and
    $R\equiv\suj{m}\beta_j\cdot\vara U_{j2}$. By the induction hypothesis,
    for any $\rho$, $\{\rho_i\}_{n}$, $\{\rho'_j\}_{m}$, we have
    $\forall\sigma\in\denot{\Gamma}_\rho$, $\ve
    t_\sigma\in\sum_{i=1}^n\denot{\vara U_{i1}}_{\rho,\rho_i}$
    and $\ve
    r_\sigma\in\sum_{j=1}^m\denot{\vara U_{j2}}_{\rho,{\rho'}_j}$. Then
    by Lemma~\ref{lem:sumrcappendix},
    $(\ve t+\ve r)_\sigma=\ve t_\sigma+\ve
    r_\sigma\in\sum_{i,k}\denot{\vara U_{ik}}_{\rho,\rho_i}$.
    Analogous if $n=\beta_1=1$ and/or $m=\beta_1=1$.
  }

\item\parbox{4cm}{
    \prooftree\Gamma\vdash\ve t: T\qquad T\equiv R
    \justifies\Gamma\vdash\ve t: R
    \using\equiv
  \endprooftree
}
\parbox{9.4cm}{
  Let $T\equiv\sui{n}\alpha_i\cdot\vara U_i$ in the sense of Lemma~\ref{lem:typedecomp}, then since $T\equiv R$, $R$ is also equivalent to $\sui{n}\alpha_i\cdot\vara U_i$, so $\Gamma\vDash\ve t: T\Rightarrow\Gamma\vDash\ve t: R$.\qedhere
}
\end{proof}

\section{Detailed proofs of lemmas and theorems in Section~\ref{sec:examples}}\label{app:examples}
\recap{Theorem}{Characterisation of terms}{thm:termcharact}
Let $T$ be a generic type with canonical decomposition $\sui{n}\alpha_i.\vara{U}_i$, in the sense of Lemma~\ref{lem:typedecomp}. If ${}\vdash\ve t:T$,
then $\ve t\to^*\sui{n}\suj{m_i}\beta_{ij}\cdot\ve b_{ij}$, where
for all $i$, $\vdash\ve b_{ij}:\vara{U}_i$ and
$\suj{m_i}\beta_{ij}=\alpha_i$, and with the convention that $\suj{0}\beta_{ij}=0$ and $\suj{0}\beta_{ij}\cdot\ve b_{ij}=\ve 0$.

\begin{proof}
  We proceed by induction on the maximal length of reduction from $\ve t$.
  \begin{itemize}
    \item Let $\ve t=\ve b$ or $\ve t=\ve 0$. Trivial using Lemma~\ref{lem:basevectors} or \ref{lem:termzero},  and Lemma~\ref{lem:typedecomp}.
    \item Let $\ve t=(\ve t_1)~\ve t_2$. Then by Lemma~\ref{lem:app},
      $\vdash\ve t_1:\suk{o}\gamma_k\cdot\forall\vec{X}.(U\to T_k)$ and
      $\vdash\ve t_2:\sum_{l=1}^{p}\delta_l\cdot U[\vec{A}_l/\vec{X}]$, where
      $\suk{o}\sum_{l=1}^p\gamma_k\times\delta_l\cdot T_k[\vec{A}_l/\vec{X}]\succeq^{(\ve t_1)\ve t_2}_{\V,\emptyset}T$, for some $\V$.
      Without loss of generality, consider these two types to be already canonical decompositions, that is, for all $k_1,k_2$, $T_{k_1}\not\equiv T_{k_2}$ and for all $l_1,l_2$, $U[\vec{A}_{l_1}/\vec{X}]\not\equiv U[\vec{A}_{l_2}/\vec{X}]$ (in other case, it suffices to sum up the equal types).
      Hence, by the induction hypothesis,
      $\ve t_1\to^*\suk{o}\sum_{s=1}^{q_k}\psi_{ks}\cdot\ve b_{ks}$ and
      $\ve t_2\to^*\sum_{l=1}^p\sum_{r=1}^{t_l}\phi_{lr}\cdot\ve b'_{lr}$, where
      for all $k$,
      $\vdash\ve b_{ks}:\forall\vec{X}.(U\to T_k)$ and $\sum_{s=1}^{q_k}\psi_{ks}=\gamma_k$,
      and for all $l$,
      $\vdash\ve b'_{lr}: U[\vec{A}_l/\vec{X}]$ and $\sum_{r=1}^{t_l}\phi_{lr}=\delta_l$.
      By rule $\to_E$, for each $k,s,l,r$ we have $\vdash(\ve b_{ks})~\ve b'_{lr}:T_k[\vec{A}_l/\vec{X}]$, where the induction hypothesis also apply, and notice that
      $(\ve t_1)~\ve t_2\to^*(\suk{o}\sum_{s=1}^{q_k}\psi_{ks}\cdot\ve b_{ks})~\sum_{l=1}^p\sum_{r=1}^{t_l}\phi_{lr}\cdot\ve b'_{lr}
      \to^*
      \suk{o}\sum_{s=1}^{q_k}\sum_{l=1}^p\sum_{r=1}^{t_l}\psi_{ks}\times\phi_{lr}\cdot(\ve b_{ks})~\ve b'_{lr}$. Therefore, we conclude with the induction hypothesis.
    \item Let $\ve t=\alpha\cdot\ve r$. Then by Lemma~\ref{lem:scalars}, $\vdash\ve r:R$, with $\alpha\cdot R\equiv T$. Hence, using Lemmas~\ref{lem:typedecomp} and \ref{lem:equivdistinctscalars},
      $R$ has a type decomposition $\sui{n}\gamma_i\cdot\vara U_i$, where $\alpha\times\gamma_i=\alpha_i$.
      Hence, by the induction hypothesis,
      $\ve r\to^*\sui{n}\suj{m_i}\beta_{ij}\cdot\ve b_{ij}$, where for all $i$,
      $\vdash\ve b_{ij}:\vara U_i$ and
      $\suj{m_i}\beta_{ij}=\gamma_i$.
      Notice that
      $\ve t=\alpha\cdot\ve r\to^*\alpha\cdot\sui{n}\suj{m_i}\beta_{ij}\cdot\ve b_{ij}\to^*\sui{n}\suj{m_i}\alpha\times\beta_{ij}\cdot\ve b_{ij}$,
      and
      $\alpha\cdot\suj{m_i}\beta_{ij}=\suj{m_i}\alpha\times\beta_{ij}=\alpha\times\gamma_i=\alpha_i$.
    \item Let $\ve t=\ve t_1+\ve t_2$. Then by Lemma~\ref{lem:sums}, $\vdash\ve t_1:T_1$ and $\vdash\ve t_2:T_2$, with $T_1+T_2\equiv T$. By Lemma~\ref{lem:typedecomp}, $T_1$ has canonical decomposition $\suj{m}\beta_j\cdot\vara{V}_j$ and $T_2$ has canonical decomposition $\suk{o}\gamma_k\cdot\vara{W}_k$. Hence
      by the induction hypothesis
      $\ve t_1\to^*\suj{m}\sum_{l=1}^{p_j}\delta_{jl}\cdot\ve b_{jl}$ and
      $\ve t_2\to^*\suk{o}\sum_{s=1}^{q_k}\epsilon_{ks}\cdot\ve b'_{ks}$, where
      for all $j$, $\vdash\ve b_{jl}:\vara{V}_{j}$ and $\sum_{l=1}^{p_j}\delta_{jl}=\beta_j$, and
      for all $k$, $\vdash\ve b'_{ks}:\vara{W}_{k}$ and $\sum_{s=1}^{q_k}\epsilon_{ks}=\gamma_k$.
      In for all $j,k$ we have $\vara{V}_j\neq\vara{W}_k$, then we are done since the canonical decomposition of $T$ is $\suj{m}\beta_j\cdot\vara{V}_j+\suk{o}\gamma_k\cdot\vara{W}_k$. In other case, suppose there exists $j',k'$ such that  $\vara{V}_{j'}=\vara{W}_{k'}$, then the canonical decomposition of $T$ would be
      $\sum_{j=1,j\neq j'}^{m}\beta_j\cdot\vara{V}_j+\sum_{k=1,k\neq k'}^{o}\gamma_k\cdot\vara{W}_k+(\beta_{j'}+\gamma_{k'})\cdot\vara{V}_{j'}$.
      Notice that $\sum_{l=1}^{p_{j'}}\delta_{j'l}+\sum_{s=1}^{q_{k'}}\epsilon_{k's}=\beta_{j'}+\gamma_{k'}$.
      \qedhere
  \end{itemize}
\end{proof}
\end{document}